\crefname{enumi}{condition}{conditions}
\newcommand{\parmacro}[1]{\vspace{0.5em}\textbf{#1}}
\newcommand{\opt}{\mathsf{opt}}
\newcommand{\pair}[1]{\langle #1 \rangle}
\DeclareMathOperator\scope{scope}
\newcommand{\ME}{\textsc{ME2C}\xspace}
\newcommand{\UG}{\textsf{UG}\xspace}
\newtheorem{lemma}{Lemma}
\newtheorem{definition}{Definition}
\author[T. M\"omke et al.]{Tobias~M\"omke\affiliationmark{1}\thanks{Partially supported by DFG Grant 439522729 (Heisenberg-Grant) and DFG Grant 439637648 (Sachbeihilfe).}
  \and Alexandru~Popa\affiliationmark{2}\thanks{Partially supported by a grant of the Ministry of Research, Innovation and Digitization, CNCS - UEFISCDI, project number PN-III-P1-1.1-TE-2021-0253, within PNCDI III.}
  \and Aida~Roshany-Tabrizi\affiliationmark{1}
  \\ \and  Michael~Ruderer\affiliationmark{1}
  \and Roland~Vincze\affiliationmark{1}}
\title[Approximating Maximum Edge 2-Coloring by Normalizing Graphs]{Approximating Maximum Edge 2-Coloring\\ by Normalizing Graphs\thanks{A preliminary version of this paper was presented at WAOA 2023.}}
\affiliation{
  University of Augsburg, Augsburg, Germany\\
  University of Bucharest, Bucharest, Romania}
\keywords{Approximation Algorithms \and Edge 2-Coloring \and Matchings}
\begin{document}

\publicationdata{vol. 27:2}{2025}{7}{10.46298/dmtcs.13212}{2024-03-12; 2024-03-12; 2024-10-03}{2025-03-03}

\maketitle            
\begin{abstract}
In a simple, undirected graph $G$, an edge $2$-coloring is a coloring of the edges such that no vertex is incident to edges with more than 2 distinct colors.
The problem maximum edge $2$-coloring (\ME) is to find an edge $2$-coloring in a graph $G$ with the goal to \emph{maximize} the number of colors.
For a relevant graph class, \ME models anti-Ramsey numbers and it was considered in network applications.
For the problem a $2$-approximation algorithm is known, and if the input graph has a perfect matching, the same algorithm has been shown to have a performance guarantee of $5/3 \approx 1.667$.
It is known that \ME is \APX-hard and that it is \UG-hard to obtain an approximation ratio better than $1.5$.
We show that if the input graph has a perfect matching, there is a polynomial time $1.625$-approximation and if the graph is claw-free or if the maximum degree of the input graph is at most three (i.e., the graph is subcubic), there is a polynomial time $1.5$-approximation algorithm for \ME.
\end{abstract}
\section{Introduction} 
In a simple, undirected graph $G$, an edge $2$-coloring is a coloring of the edges such that no vertex is incident to edges with more than 2 distinct colors.
The problem maximum edge $2$-coloring (\ME) is to find an edge $2$-coloring in $G$ that uses a  \emph{maximal} number of colors.
Formally, we aim to compute a coloring $\chi\colon E(G) \rightarrow \mathbb{N}$ that maximizes $|\{c \in \mathbb{N} \mid \chi(e) = c \text{ for an $e \in E(G)$}\}|$, such that for each vertex $v \in V(G)$, $|\{ c \in \mathbb{N} \mid \chi(e) = c \text{ for an $e$ incident to $v$}\}| \le 2$ holds. 

Maximum edge $2$-coloring is a particular case of anti-Ramsey numbers and has been considered in combinatorics.
For given graphs $G$ and $H$, the \emph{anti-Ramsey number} $\textrm{ar}(G,H)$ is defined to be the maximum number of colors in an edge-coloring that does not produce a rainbow copy of $H$ in $G$, i.e., a copy of $H$ in $G$ with every edge of $H$ having a unique color.
Classically, the graph $G$ is a large complete graph and $H$ is from a particular graph class.
If $H$ is a star with three leaves and $G$ is an arbitrary graph,
the anti-Ramsey number 
is precisely the maximum number of colors in an edge $2$-coloring.

The study of anti-Ramsey numbers was initiated by Erd\H{o}s, Simonovits and S\'os in 1975~\cite{ErdosSS75}.
Since then, there have been a large number of results on the topic including the cases where
$G = K_n$ and $H$ is a cycle~\cite{ErdosSS75,Montellano-Ballesteros2005,AxenovichJK04}, tree~\cite{JiangW04,Jiang2002}, clique~\cite{FriezeR93,ErdosSS75,BlokhuisFGR01}, matching~\cite{Schiermeyer04,ChenLT09,HaasY12} or a member of some other class of graphs~\cite{ErdosSS75,AxenovichJ04}. 

The main application of \ME comes from wireless mesh networks. 
Raniwala et al.~\cite{RaniwalaGC04,RaniwalaC05} proposed a wireless architecture in which each computer uses two network interface cards (NICs) compared to classical architectures that use only one NIC.
In this model, each computer can communicate with the other computers in the network using two channels.
Raniwala et al.~\cite{RaniwalaGC04,RaniwalaC05} showed that using such an architecture can increase the throughput by a factor of~$6$.
In order to minimize the interference, it is desirable to maximize the number of distinct channels used in the network.
In \ME computers correspond to nodes in the graph, while colors correspond to channels.
\subsection{Previous Work}

The problem of finding a maximum edge $2$-coloring of a given graph has been first studied by Feng et al.~\cite{FengZQW07,FengCZ08,FengZW09}.
They provided a $2$-approximation algorithm for ME2C and show that ME2C is solvable in polynomial time for trees and complete graphs, but they left the complexity for general graphs as an open problem.
The authors also studied a generalization of ME2C, the maximum edge $q$-coloring, where each vertex is allowed to be incident to at most $q$ edges with distinct colors.
For the maximum edge $q$-coloring they showed a $(1+\frac{4q-2}{3q^2-5q+2})$-approximation for $q > 2$.
 
Later, Adamaszek and Popa~\cite{AdamaszekPopa2016} showed that the problem is \APX-hard and proved that the algorithm above provides a $5/3$-approximation for graphs which have a perfect matching.
The \APX-hardness is achieved via a reduction from the Maximum Independent Set problem and states that maximum edge $2$-coloring problem is \UG-hard to approximate within a factor better than $1.5 - \epsilon$, for some $\epsilon > 0$.
Chandran et al.~\cite{CHJMRS23} showed that the matching-based algorithm of \cite{FengZW09} yields a $(1 +\frac{2}{\delta})$-approximation for graphs with minimum degree $\delta$ and a perfect matching.
If additionally the graph is triangle-free, the ratio improves to $(1 + \frac{1}{\delta - 1})$.
Recently, Chandran et al.~\cite{chandran2021improved} improved the analysis of the achieved approximation ratio for triangle-free graphs with perfect matching to $8/5$.
They also showed that the algorithm cannot achieve a factor better than $58/37$ on triangle free graphs that have a perfect matching. Dvořák and Lahiri~\cite{dvovrak2023maximum} designed a PTAS for the maximum edge q-colouring problem on minor free graphs.

Larjomaa and Popa~\cite{LarjomaaP15} introduced and studied the min-max edge $2$-coloring problem a variant of the ME2C problem, where the goal is to find an edge $2$-coloring that minimizes the largest color class.
Mincu and Popa~\cite{mincu2018heuristic} introduced several heuristic algorithms for the min-max edge $2$-coloring problem.

\subsection{Our results}

Our core algorithm, \Cref{alg:basic}, is the $2$-approximation algorithm for general graphs of Feng et al.~\cite{FengZW09}.
The algorithm simply finds a maximum matching, colors each edge of the matching with a distinct color, removes the edges of the matching and finally, colors each connected component of the remaining graph with a distinct color. 

Directly applying the algorithm, however, cannot provide an approximation ratio better than $2$ in the general case~\cite{FengZW09} and not better than $5/3 \approx 1.667$ for graphs with perfect matchings~\cite{AdamaszekPopa2016}.
To overcome this difficulty, we introduce a preprocessing phase which considerably simplifies the instance.
The simplifications both improve the quality of the solution provided by the algorithm \textit{and} lead to an improved upper bound on the size of an optimal solution.
A graph is called \textit{normalized} if no more preprocessing steps can be performed on it.

We first show that \Cref{alg:basic} is a $1.5$-approximation algorithm if \emph{after} the normalization, the graph contains a perfect matching (\Cref{thm:normalized_perfect_matching}).
We can ensure this property if we normalize a subcubic graph\footnote{Recall that a graph is subcubic if no vertex has a degree larger than three.}, even if before applying the normalization it did not have a perfect matching.

\begin{restatable}{theorem}{subcubic}\label{thm:subcubic}
\ME in subcubic graphs has a {polynomial-time} $1.5$-approximation algorithm.
\end{restatable}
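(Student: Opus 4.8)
The plan is to reduce \Cref{thm:subcubic} to \Cref{thm:normalized_perfect_matching}. Given a subcubic input graph $G$, I would run the normalization and argue that the resulting normalized graph $G'$ is again subcubic \emph{and} contains a perfect matching — even in the case where $G$ itself did not. Then \Cref{thm:normalized_perfect_matching} yields a $1.5$-approximation on $G'$, and since normalization runs in polynomial time and, by the properties it is designed to have, an approximate solution on $G'$ lifts back to one on $G$ without loss in the ratio, this establishes the theorem.

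The first and easier step is to check that normalization preserves subcubicity. Every normalization rule acts on a bounded local configuration, and one verifies that none of them raises a vertex degree above three; for instance, suppressing a degree-two vertex $v$ with neighbours $a$ and $b$ replaces the length-two path through $v$ by the edge $ab$, which changes neither $\deg(a)$ nor $\deg(b)$ and can only decrease them if $ab$ was already present. For any rule that could in principle create a degree-four vertex on a general graph, I would show that its precondition is never met on a subcubic graph, or restrict the rule accordingly; this amounts to a finite case check over the rule set.

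The core of the argument is to show that a normalized subcubic graph $G'$ has a perfect matching. I would argue by contradiction using Tutte's theorem: if $G'$ has no perfect matching, there is a set $S \subseteq V(G')$ such that $G' - S$ has more than $|S|$ odd components. Since $G'$ is subcubic, at most $3|S|$ edges leave $S$, so some odd component $C$ of $G' - S$ is joined to $S$ by at most two edges; combining this with a handshake and parity argument on $C$ pins down the possible attachment patterns (a single cut edge into $C$, or $C$ being a short odd cycle, or a pendant-triangle-type gadget, and so on). For each such pattern I would exhibit a bounded subgraph of $G'$ on which some normalization rule still applies, contradicting the assumption that $G'$ is normalized. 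The few genuinely irreducible whole-graph components — an isolated vertex, a single edge, a short cycle — are handled separately: either normalization already resolves them, or one checks that \Cref{alg:basic} itself attains ratio $1.5$ on them, as it does on a triangle, where it produces two colors against an optimum of three.

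The step I expect to be the main obstacle is exactly this Tutte-barrier analysis: one must enumerate, for subcubic graphs, every way an odd component can attach to a hypothetical barrier $S$ and match each case to a normalization rule that forbids it, so the argument is only as smooth as the list of rules permits and most of the work sits in this case distinction. A secondary point that needs care is verifying that the interaction between the rules, the degree-three bound, and the separate treatment of trivial leftover components never quietly breaks subcubicity.
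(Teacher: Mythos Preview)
Your plan matches the paper's strategy, but the key claim---that a normalized subcubic graph has a perfect matching---is false, and your Tutte-barrier case analysis cannot succeed. Counterexample: take three copies of the Petersen graph, subdivide one edge in each, and join the three subdivision vertices $a_1,a_2,a_3$ to a new central vertex $v$. The result is cubic and triangle-free, hence contains no simple cactus; it has no leaves and no degree-$2$ vertices, so none of Modifications~1--3 applies and the graph is normalized. Yet deleting $v$ leaves three odd components of order $11$, so by Tutte's theorem there is no perfect matching. In your barrier analysis the odd component hanging off a single bridge is an entire subdivided Petersen graph, not a bounded configuration on which any normalization rule fires.

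The paper closes exactly this gap by introducing a further, subcubic-only rule---Modification~4, bridge removal---applied iteratively together with Modifications~1--3. After exhausting all four, every non-trivial component is a \emph{bridgeless} cubic graph, and Petersen's classical theorem (rather than a barrier enumeration) supplies the perfect matching directly. Your proposal is missing precisely this extra modification. A secondary remark: your description of Modification~2 is also off---the paper does not contract the length-two path through $v$ to an edge $ab$, it splits $v$ into two pendant vertices---so normalized graphs may well have leaves; this is not the cause of the failure above but would derail other parts of your case analysis.
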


It has been shown that claw-free graphs contain a perfect matching \cite{Sumner1974,LasVergnas1975}.
Some preprocessing steps might introduce claws which worsen the quality of the solution provided by \Cref{alg:basic}, therefore we do not immediately obtain a $1.5$-approximation.
However, we develop a bookkeeping technique to counteract this effect.

\begin{restatable}{theorem}{clawfree}\label{thm:clawfree}
  There is a {polynomial-time} $1.5$-approximation for claw-free graphs.
\end{restatable}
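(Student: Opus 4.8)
The plan is to run \Cref{alg:basic} on the normalized graph $G'$ and appeal to \Cref{thm:normalized_perfect_matching}, while compensating for the claws that the normalization may introduce. Recall that every claw-free graph has a perfect matching \cite{Sumner1974,LasVergnas1975}. Hence, if every normalization rule preserved claw-freeness, then $G'$ would still have a perfect matching, \Cref{thm:normalized_perfect_matching} would apply verbatim, and we would be done. The obstruction is exactly the one flagged above: a rule that contracts or deletes a local substructure can create a vertex all of whose relevant neighbours become pairwise non-adjacent, i.e.\ a claw; once $G'$ is no longer claw-free it need not have a perfect matching, and the hypothesis of \Cref{thm:normalized_perfect_matching} fails.

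To handle this, I would attach a ledger to the reduction sequence $G = G_0 \to G_1 \to \dots \to G_t = G'$. The ledger records, for each step, (a) how many colours are lost when a colouring of $G_i$ is lifted back to $G_{i-1}$ (the normalization is designed so this is never negative and is usually $0$), and (b) a nonnegative \emph{claw penalty} $p_i$ charged when step $i$ breaks claw-freeness. The target invariant has the form
\[
  3\,\bigl(\ALG\text{-colours recoverable from }G_i\bigr) \;\ge\; 2\,\OPT(G_i)\;-\;(\text{ledger balance at step }i),
\]
together with the bound that the final ledger balance is nonpositive, i.e.\ the penalties $p_i$ are always dominated by simultaneous savings. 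Concretely, when a step creates a claw one shows that the substructure being simplified is one an optimal colouring can use only wastefully, so that $\OPT$ drops by enough to pay the penalty $p_i$ in the inequality above; steps that keep the graph claw-free contribute nothing new and are governed by the proof of \Cref{thm:normalized_perfect_matching}. Carrying this out rule by rule — enumerating which normalization rules can destroy claw-freeness, bounding the penalty each incurs, and verifying that each such penalty is offset within the \emph{same} step — is the technical core and the main obstacle of the proof.

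With the invariant established, the argument finishes in a few lines. On the normalized graph $G'$, \Cref{alg:basic} together with the structural properties of normalized graphs gives the bound relating $\ALG(G')$ and $\OPT(G')$ up to the (nonpositive) final ledger balance; lifting the resulting colouring back through $G_t \to \dots \to G_0 = G$ loses no colours by construction, so $\ALG(G) \ge \ALG(G')$, while telescoping the ledger gives $\OPT(G) \le \OPT(G') + (\text{penalties})$. Combining the two yields $3\,\ALG(G) \ge 2\,\OPT(G)$, i.e.\ the $1.5$-approximation; since all steps are polynomial and there are polynomially many of them, the running time is polynomial. The delicate point throughout is quantitative rather than qualitative: the penalties, the $\OPT$ drops, and the recoverable $\ALG$-colours must all be weighed at compatible rates so that the constant comes out to exactly $3/2$ and not $3/2+\varepsilon$, which is why a crude ``one claw, one token'' accounting does not suffice and the more careful ledger is needed.
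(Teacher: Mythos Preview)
Your high-level instinct---normalize, then bookkeep the damage done by claws created along the way---matches the paper's approach, but the ledger you set up tracks the wrong quantities, and the concrete mechanism that makes the bookkeeping close is missing.

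First, the accounting: since every modification is \emph{valid}, $\OPT(G_i)=\OPT(G_{i-1})$ exactly, so ``$\OPT$ drops by enough to pay the penalty'' cannot be the source of slack, and the telescoping inequality $\OPT(G)\le\OPT(G')+(\text{penalties})$ is content-free. What actually varies along the sequence is the pair $(n_i,\ell_i)$ feeding into the upper bound of \Cref{lem:maxcolors}, versus the size of a maximum matching feeding into the lower bound on $\ALG$. The paper's bookkeeping tracks exactly these: when a claw is unavoidably created, the same step also manufactures enough new leaves (and an isolated edge) so that $3n/4-\ell/4$ does not grow while the matching size does not shrink, keeping the ratio at $3/2$.

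Second, the missing tool: the only modification that can create an unremovable claw is Modification~2, and the paper introduces an additional valid modification (Modification~5: contract two adjacent vertices each carrying a pendant edge, and emit an isolated edge) precisely to repair this. Without something playing this role your ledger has nothing to charge against---after a bare Modification~2 the graph gains one vertex and two leaves, the upper bound rises by $1/4$, and the matching size is unchanged, so the ratio genuinely worsens and no amount of bookkeeping recovers $3/2$. You also do not address Modification~3, which can create claws when two needles share an endpoint; there the fix is that Modification~1 immediately removes the duplicate leaf. Finally, the appeal to Sumner--Las~Vergnas needs the component to be connected with an even number of vertices; the paper handles odd components by deleting a non-cutvertex first. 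None of these steps is deep, but each is a genuine ingredient, and your sketch defers all of them to ``carrying this out rule by rule'' without indicating what the rules or the charges are.
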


In the more general case of graphs with perfect matchings, the effect of introduced unmatched vertices is more severe.
We use a sophisticated accounting technique to quantify the effects of the appearing unmatched vertices on the quality of both the optimal solution and the solution given by \Cref{alg:basic}.
As a result we obtain a weaker but improved approximation algorithm for graphs containing a perfect matching: 

\begin{restatable}{theorem}{perfectmatching}\label{thm:perfect_matching}
  There is a $1.625$-approximation for graphs that contain a perfect matching.
\end{restatable}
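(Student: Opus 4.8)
The plan is to reduce to normalized graphs and then run a fine‑grained accounting for the vertices that normalization leaves unmatched. First I would apply the normalization procedure to the input graph $G$ (which, by assumption, has a perfect matching) to obtain a normalized graph $G'$, and argue — exactly as in the proof of \Cref{thm:normalized_perfect_matching} — that it suffices to bound the ratio $\OPT(G')/\ALG(G')$ of \Cref{alg:basic} on $G'$, since every normalization step changes $\OPT$ and the value produced by \Cref{alg:basic} by matching amounts and can be undone to lift a solution for $G'$ back to $G$. If $G'$ happens to contain a perfect matching, \Cref{thm:normalized_perfect_matching} already gives ratio $1.5 < 1.625$ and we are done, so the interesting case is that $G'$, although derived from a graph with a perfect matching, has a nonempty set $W$ of vertices left uncovered by a maximum matching $M'$ of $G'$. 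Throughout I write $\ALG(G') = |M'| + c$, where $c$ is the number of connected components of $G'\setminus M'$ that contain an edge; recall also that by maximality $W$ is independent and every vertex of $W$ has all its neighbours matched.

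Next I would set up the accounting. The proof of \Cref{thm:normalized_perfect_matching} yields, for a normalized graph, an upper bound on $\OPT(G')$ in terms of $|M'|$, $c$, and the deficiency $|W|$: intuitively, a perfect matching forces $\OPT(G') \le \tfrac{3}{2}\bigl(|M'| + c\bigr)$, and each vertex of $W$ can inflate this bound by at most a constant $\alpha$, because an optimal coloring can use a ``free'' (unmatched) vertex to open only a bounded number of additional color classes. Simultaneously, normalization guarantees that the vertices of $W$ are structurally isolated from one another and have tightly controlled neighbourhoods (no further preprocessing applies), which forces a lower bound of the shape $|M'| + c \ge \beta\,|W|$: each unmatched vertex is ``paid for'' by matching edges and fresh components in its vicinity. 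Combining $\OPT(G') \le \tfrac{3}{2}|M'| + \tfrac{3}{2}c + \alpha\,|W|$ with $\ALG(G') = |M'| + c \ge \beta\,|W|$ and maximizing the resulting two‑parameter ratio gives $\OPT(G')/\ALG(G') \le 13/8$; the constants $\alpha,\beta$ are precisely those for which this maximum equals $1.625$.

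To make the two inequalities above precise I would classify, up to isomorphism, the local configuration around each $w \in W$ — feasible because normalization bounds $\deg(w)$ and the structure of $N(w)$ — and for every configuration type account separately for (i) how many colors an optimal coloring can place there beyond what a perfectly matched instance would allow, and (ii) how many matching edges and new components \Cref{alg:basic} is guaranteed to obtain there; a Tutte--Berge / Gallai--Edmonds description of $W$ and its associated odd components ensures that summing the per‑configuration contributions does not double count. The main obstacle is doing (i) and (ii) \emph{jointly}: one must show that the gain of $\OPT$ at an unmatched vertex and the loss of $\ALG$ there do not compound — a configuration that is bad for the algorithm must not simultaneously be maximally good for the optimum — and it is exactly the worst configuration surviving normalization that pins the ratio at $13/8$ rather than something smaller. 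Verifying that normalization truly eliminates every configuration that would push the ratio past $13/8$, and tightly bounding the few that remain, is the technical heart of the argument.
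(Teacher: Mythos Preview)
Your proposal has a genuine gap: the accounting you set up on the normalized graph $G'$ never uses the hypothesis that the \emph{original} graph $G$ had a perfect matching. All of your inequalities --- $\OPT(G') \le \tfrac{3}{2}|M'| + \tfrac{3}{2}c + \alpha|W|$ and $|M'|+c \ge \beta|W|$ --- are stated purely in terms of $G'$, so if they held they would give a $13/8$-approximation for \emph{every} normalized graph, hence (via \Cref{lem:normalization,lem:components}) for every graph. That is false: the matching-based algorithm is only a $2$-approximation in general. Concretely, your claimed lower bound $|M'|+c \ge \beta|W|$ cannot hold for any fixed $\beta>0$ without some extra input from $G$. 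You also assert that ``normalization bounds $\deg(w)$'' in order to make the local classification finite; it does not --- normalization only forbids degree $2$, and unmatched vertices of $G'$ can have arbitrarily high degree.

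The paper takes a quite different route that makes the perfect-matching hypothesis do real work. It first shows that Modifications~1 and~3 are perfect-matching preserving, so the only obstacle is Modification~2. It then tracks, over the whole normalization process, \emph{how} each application of Modification~2 changes (i) the vertex count and leaf count entering \Cref{lem:maxcolors}, and (ii) the available matchings. Two lower bounds on $\ALG$ are used simultaneously: the original perfect matching of $G$ survives all modifications (none of them deletes a matching edge), giving $\ALG \ge n/2$ with $n=|V(G)|$; and the pendant edges created by Modification~2 form an independent set of edges, giving a second matching whose size is governed by the event counts. A case split on which of these two matchings is larger yields $13/8$ in both regimes. The analysis is global bookkeeping over the modification history, not a local classification around unmatched vertices of $G'$; Gallai--Edmonds plays no role.
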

Let us now elaborate on the key ideas behind our results.
After the preprocessing phase we obtain a normalized graph via a series of \textit{modifications}.
Intuitively, the modifications achieve the following: 1)  \emph{Avoid leaves with equal neighborhoods;} 2) \emph{Avoid degree-2 vertices;} 3) \emph{Avoid a specific class of triangular cacti.}

A triangular cactus is a connected graph such that two cycles have at most one vertex in common and each edge is contained in a $3$-cycle.
For our purposes, we additionally require that no vertex of the cactus is incident to more than one edge not in the cactus.

While the three modifications are relatively simple, proving that they are approximation-preserving is non-trivial.
If none of these modifications can be applied (anymore), we call the graph normalized.
Our key insight is that the number of colors in an optimal solution of a normalized graph can be bounded from above, this is stated as \Cref{lem:maxcolors} and shown in \Cref{sec:upper-bound}.

\begin{restatable}{lemma}{maxcolors}\label{lem:maxcolors}
  Let $G$ be a normalized connected graph with $n \ge 3$ vertices and $\ell$ leaves.
  Then there is no feasible coloring $\chi$ with more than $3n/4 - \ell/4$ colors.
\end{restatable}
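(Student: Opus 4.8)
The plan is to fix an arbitrary feasible coloring $\chi$ using $k$ colors and, for each color $c$, to write $G_c$ for the subgraph of $G$ formed by the edges of color $c$ and $n_c := |V(G_c)|$ for the number of vertices incident to such an edge (so $n_c \ge 2$ whenever $c$ is used, since $G$ is simple). The proof combines an easy upper bound on $\sum_c n_c$ with a harder upper bound on $\sum_c \max(0,\,4-n_c)$, the latter being the combinatorial heart.

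\textbf{The reduction.} Every vertex of $G$ is incident to edges of at most two colors, and of exactly one color if it is a leaf, so double counting vertex--color incidences yields $\sum_c n_c = \sum_{v \in V(G)} |\{\, c : v \in V(G_c)\,\}| \le 2(n - \ell) + \ell = 2n - \ell$. Expressing $\min(n_c, 4)$ as $n_c - \max(0, n_c - 4) \le n_c$ and also as $4 - \max(0, 4 - n_c)$, and summing over all colors, gives
\[
  4k - \sum_c \max(0,\, 4 - n_c) \;=\; \sum_c \min(n_c, 4) \;\le\; \sum_c n_c \;\le\; 2n - \ell .
\]
Hence it suffices to establish $\sum_c \max(0, 4 - n_c) \le n$, because then $4k \le 3n - \ell$, i.e.\ $k \le 3n/4 - \ell/4$. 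Since $n_c \ge 2$, the left-hand side only receives a contribution of $2$ from each color whose class is a single edge and a contribution of $1$ from each color whose class spans exactly three vertices (a path on three vertices or a triangle).

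\textbf{Charging to vertices.} I would prove $\sum_c \max(0, 4-n_c) \le n$ by a charging scheme: charge each single-edge color to both of its endpoints and each three-vertex color to one of its three vertices, so that no vertex is charged twice. The single-edge colors form a matching, which makes the first kind of charge consistent: if a vertex $u$ were incident to two single-edge colors $c \ne c'$, then all of $u$'s edges would be colored $c$ or $c'$, but each of $c,c'$ appears on only one edge, forcing $\deg(u) = 2$, which normalized graphs forbid. It then remains to choose, for every three-vertex color, one of its vertices avoiding all vertices used by single-edge colors and by the other three-vertex colors --- that is, to find a system of distinct representatives for the three-vertex colors inside the set of still-uncharged vertices.

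\textbf{The main obstacle.} The heart of the argument --- and the step I expect to be the most delicate --- is showing that this system of distinct representatives always exists; equivalently, that no family of cherry- and triangle-colors can have all of its (few) vertices blocked. I would argue by contradiction using a minimal blocking family: such a family is a dense patchwork of triangles and three-vertex paths glued at shared vertices, and I would show it must contain one of the three configurations that normalization removes. The absence of degree-$2$ vertices and the ban on the designated triangular cacti together tightly restrict how triangles and cherries can interlock (in particular, a vertex carries only two colors, so it cannot be the apex of many cherries, and long chains of triangles are exactly what the cactus condition forbids), while the ban on leaves with a common neighborhood rules out the remaining degenerate ties; connectedness and $n \ge 3$ take care of trivial components. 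A case analysis over the possible local shapes of a minimal blocking family should, in each case, either extract a free representative or exhibit a degree-$2$ vertex, two leaves with equal neighborhoods, or a designated triangular cactus, contradicting that $G$ is normalized. Once $\sum_c \max(0, 4 - n_c) \le n$ is in hand, the reduction above completes the proof.
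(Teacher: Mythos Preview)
Your reduction is clean and correct: from $\sum_c n_c \le 2n - \ell$ and the target inequality $\sum_c \max(0,4-n_c) \le n$ one does obtain $4k \le 3n - \ell$, and the single-edge color classes do form a matching in a normalized graph (your degree argument is right). The gap is that the existence of the system of distinct representatives --- equivalently, the bound $2a + b \le n$ where $a,b$ count the colors with $n_c=2$ and $n_c=3$ --- is the entire difficulty, and your sketch does not establish it. A minimal Hall violator is not a local configuration amenable to a finite case check. Working it out: such a violator has $p$ size-$1$ requirements and exactly $q=p-2$ size-$3$ requirements, every vertex of $N(S)$ has bipartite degree exactly~$2$, and the size-$3$ colors (as nodes, joined whenever they share a vertex) form a \emph{tree} on $q$ nodes with $q-1$ edges. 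To reach a contradiction you must then argue that every size-$3$ color class is a triangle rather than a $3$-path --- the no-degree-$2$ condition forces this when at least two of its three vertices lie in single-edge colors, but the case where two of them lie in other three-vertex colors is genuinely delicate --- and that the resulting tree of triangles meets the exact degree and needle constraints of a \emph{simple} cactus. That is a global cactus-growing argument, not the local case analysis you describe, and it is comparable in length and subtlety to what the paper actually does.

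For comparison, the paper proceeds through a \emph{character graph}~$H$ (one representative edge per color), massaged to be cycle-free and ``nice'' via a scope construction; it then builds an injection $\iota$ from the inner vertices of $H$ into its end and free vertices, never hitting a leaf of $G$. This lower-bounds the number of components of $H$ and hence upper-bounds~$|\chi|$. The hard step there is constructing $\iota$ for those inner vertices that cannot be mapped directly: this is done by an explicit cactus-growing procedure (their Algorithm~2) maintained under eight invariants, ultimately showing that failure would exhibit a simple cactus. So your incidence-counting framework is a genuinely different packaging from the paper's character-graph framework --- and arguably a cleaner reduction --- but both funnel into the same structural obstruction (minimal obstructions are simple cacti), and your proposal defers precisely the part of the argument that carries all the weight.
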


We note that without normalization, an optimal solution can have $n$ colors (e.g., if $G$ is an $n$-cycle).
In order to prove \Cref{lem:maxcolors}, we use the notion of \emph{character graphs} introduced by Feng et al.~\cite{FengZW09}.
A character graph of an edge 2-coloring is a graph that contains exactly one edge from each color class. 
We first show that for a normalized graph we can ensure the existence of a \emph{nice} character graph, which is a character graph with several useful properties.
These properties allow for a counting argument with respect to the number of components in the character graph, which allows us to prove the bound in \Cref{lem:maxcolors}.

For general graphs, the best result is still the known $2$-approximation. 
There is a family of bipartite triangle free $2$-connected graphs with minimum degree $3$ which certifies this lower bound for our algorithm.
  
The rest of the paper is organized as follows. In \Cref{sec:algorithm} we describe the three modifications performed on the input graph before applying the algorithm.
Then, in \Cref{sec:upper-bound}, we prove the upper bound on the optimal solution on normalized instances.
In \Cref{sec:results} we combine the results from \Cref{sec:algorithm,sec:upper-bound} to prove \Cref{thm:subcubic,thm:clawfree}, and finally, in \Cref{sec:pm} we prove \Cref{thm:perfect_matching}.

\section{The Algorithm}
\label{sec:algorithm}

Let $G$ be a graph and $\chi$ a feasible $2$-coloring of the edges.
Recall that $\chi(e)$ marks the color of the edge $e$ in $\chi$.
With a slight abuse of notation, let us denote the set of all colors of the edges of~$G$ by $\chi(G)$ and the colors incident to a vertex $v$ by $\chi(v) := \{c \in \mathbb{N} \mid \exists u \in V(G): \chi(uv) = c\}$.
If a vertex $v$ is incident to an edge colored $c$, we say that vertex $v$ \emph{sees} $c$.
We also denote the number of colors in a coloring $\chi$ by $|\chi|$. 

For a color $c$, $E(c)$ denotes the set of edges with color $c$, that is, $E(c) := \{e \in E(G) \mid \chi(e) = c\}$.
We refer to $E(c)$ as the \emph{color class} of $c$.
Furthermore, we define by $V(c) :=  \{v \in V(G) \mid c \in \chi(v)\}$ the \emph{color class of $c$}, i.e. the vertices that see the color $c$.
Finally, $G(c) := (V(c), E(c))$ is the subgraph of $G$ whose edges have color $c$.
We call a cycle on 3 vertices a \emph{3-cycle} or \emph{triangle}.
The term \emph{pendant vertex} or \emph{leaf} is used for degree-$1$ vertices, while the term \emph{pendant edge} marks the edge incident to a pendant vertex.

\begin{algorithm}
\caption{The basic algorithm.\label{alg:basic}}
  \begin{algorithmic}[1]
    \Statex \textbf{Input:} A simple undirected graph $G=(V,E)$.
    \Statex \textbf{Output:} An edge 2-coloring $\chi$ on the edges of $G$.
    \State  Calculate a maximum cardinality matching $M$ in $G$.
    \State  Assign a distinct color in $\chi$ for every edge of $M$.
    \State  Assign a distinct color in $\chi$ for every nontrivial connected component of $E \setminus E(M)$.
  \end{algorithmic}
\end{algorithm}

While \Cref{alg:basic} is well studied (cf.~\cite{FengZW09,AdamaszekPopa2016,chandran2021improved}),
we apply some preprocessing steps to each problem instance $G$, before applying \Cref{alg:basic} to the resulting graph $G'$.
This preprocessing gives the graph $G'$ more structure, which will help us to prove better approximation guarantees.

These preprocessing steps consist of different \emph{modifications}, which will be defined throughout the paper.
We note that modifications can increase the size of the maximum matching~$M$, and therefore are not only for the analysis, but they change the instance in order to obtain stronger results.

Intuitively, a valid modification is a modification such that the number of colors in an optimal solution does not change and we can transform a solution for the modified instance to a solution for the original instance.
Formally, we define the following equivalence relation.
For a graph $G$ let $\opt(G)$ denote the number of colors in an optimal edge $2$-coloring  of $G$.
\begin{definition}
\label{def:equivalence}
  Two graphs $G = (V,E)$ and $G' = (V',E')$ form an \emph{equivalent pair} with respect to edge 2-coloring, denoted by $\pair{G, G'}$, if
  \begin{enumerate}
      \item \label{eqpair:1} an optimal edge 2-coloring of $G'$ uses the same number of colors as an optimal edge 2-coloring of $G$, i.e., $\opt(G) = \opt(G')$.
      \item \label{eqpair:2}For every edge 2-coloring $\chi'$ of $G'$ one can in polynomial time compute an edge 2-coloring~$\chi$ for $G$ that uses the same number of colors as $\chi'$, i.e., $|\chi| = |\chi'|$.
  \end{enumerate}
\end{definition}
To show $\pair{G,G'}$, it is sufficient to show $\opt(G) \le \opt(G')$ for \Cref{eqpair:1} and $|\chi| \ge |\chi'|$ for \Cref{eqpair:2}:
For \Cref{eqpair:1}, we use that there is a coloring $\chi'$ for $G'$ and a coloring $\chi$ for $G$ such that $\opt(G) \le \opt(G') = |\chi'| \le |\chi| \le \opt(G)$ and thus all inequalities have to be satisfied with equality.
For \Cref{eqpair:2}, we note that it is always possible to reduce the number of colors.

\begin{definition} \label{def:modification}
  A \emph{valid modification} is a sequence of vertex/edge alterations (additions or deletions), that result in a graph $G'$ such that $\pair{G, G'}$ is an equivalent pair.
\end{definition}

All modifications that will be introduced in the following are indeed \emph{valid} modifications. 

\begin{lemma}\label{lem:mod_approx}
Let $G$ be a graph and let $G'$ be the graph obtained from $G$ via a valid modification from Definition~\ref{def:modification}.
Given a polynomial time $\alpha$-approximation algorithm for $G'$ we can obtain a polynomial time $\alpha$-approximation algorithm for $G$.
\end{lemma}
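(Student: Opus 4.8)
The plan is to show that $\alpha$-approximating \ME on $G'$ and then transporting the solution back to $G$ yields an $\alpha$-approximation on $G$, using only the two defining properties of an equivalent pair together with the observation made right before \Cref{def:modification}. First I would recall that by \Cref{def:modification} the modification being valid means $\pair{G,G'}$ is an equivalent pair, so \Cref{eqpair:1} gives $\opt(G) = \opt(G')$ and \Cref{eqpair:2} gives a polynomial-time procedure turning any edge $2$-coloring $\chi'$ of $G'$ into an edge $2$-coloring $\chi$ of $G$ with $|\chi| = |\chi'|$.

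Next I would describe the algorithm for $G$: on input $G$, first perform the valid modification to obtain $G'$ (this is polynomial time since a valid modification is a sequence of vertex/edge alterations), then run the given $\alpha$-approximation algorithm on $G'$ to obtain a coloring $\chi'$ with $|\chi'| \ge \opt(G')/\alpha$, and finally apply the polynomial-time transformation from \Cref{eqpair:2} to obtain a coloring $\chi$ of $G$ with $|\chi| = |\chi'|$. The whole pipeline is polynomial time because it is a composition of three polynomial-time steps.

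Then I would verify the approximation guarantee by the short chain
\[
|\chi| = |\chi'| \ge \frac{\opt(G')}{\alpha} = \frac{\opt(G)}{\alpha},
\]
where the middle inequality is the assumed approximation guarantee on $G'$ and the last equality is \Cref{eqpair:1}. Hence $\chi$ is an $\alpha$-approximate solution for $G$, which is exactly what is claimed.

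There is essentially no hard obstacle here: the lemma is a bookkeeping statement that packages \Cref{def:equivalence} and \Cref{def:modification} into a usable form. The only point requiring a line of care is making explicit that the modification itself — not just its analysis — is carried out algorithmically in polynomial time, so that the reduction can be run by an actual algorithm rather than merely argued to exist; this is guaranteed because a valid modification is defined as a finite sequence of elementary vertex/edge additions and deletions. One should also note that the argument composes cleanly if several valid modifications are applied in sequence, since $\pair{\cdot,\cdot}$-equivalence is transitive in both of its components, so the same proof applies when $G'$ is obtained from $G$ by a chain of valid modifications.
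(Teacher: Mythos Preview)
Your proposal is correct and follows essentially the same approach as the paper: run the $\alpha$-approximation on $G'$, use \Cref{eqpair:2} to lift the coloring back to $G$, and compare against $\opt(G)$ via \Cref{eqpair:1}. The paper's version is slightly terser (it only invokes $\opt(G)\le\opt(G')$ rather than equality, which already suffices), while your added remarks on the polynomial-time computability of the modification itself and on transitivity under chains of modifications are accurate elaborations that the paper leaves implicit here and handles later in \Cref{lem:normalization}.
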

\begin{proof}
An $\alpha$-approximation algorithm for $G'$ produces a coloring $\chi'$ with at least $\opt(G') / \alpha$ colors. Following \Cref{def:equivalence,def:modification} we know that $\opt(G) \leq \opt(G')$ and
that we can find in polynomial time a coloring $\chi$ for the graph $G$ that uses at least as many colors as $\chi'$.
Thus, we can find in polynomial time a coloring $\chi$ for $G$ that has at least $\opt(G) / \alpha$ colors, that is, we obtain an $\alpha$-approximation algorithm.
 \end{proof}

\paragraph*{Modification~1: Avoid pendant vertices with equal neighborhoods.}

The first modification is to remove a leaf $w$ from $G$, if there is another leaf $v$, such that both are incident to the same vertex $u$ (see \Cref{fig:leaves}).
Formally, we require that the following two conditions are simultaneously satisfied:
(i) the degree of $v$ in $G$ is one; and
(ii) there is a vertex $w \neq v$ of degree one and a vertex $u$ of degree at least three adjacent to both $v$ and~$w$.

We note that we require the degree constraint on $u$ to obtain a cleaner proof. 
If the degree equals $2$, we will see that
the following Modification~2 applies.

\paragraph*{Modification~2: Avoid degree-2 vertices.}

Given a vertex $v$ of degree $2$, we break it into two vertices $v_1$ and $v_2$ with degree $1$ each.
More precisely, let $u_1$ and $u_2$ be the two vertices adjacent to $v$. 
We replace the edges $u_1v$ and $u_2v$ by the edges $u_1v_1$ and $u_2v_2$, replacing $v$ by two new vertices $v_1$ and $v_2$.

\paragraph*{Modification~3: Remove Triangular Cacti.}

Recall that a triangular cactus is a connected graph such that two cycles have at most one vertex in common and each edge is contained in a $3$-cycle.\footnote{Note that our definition of a cactus is stricter than usual: we do not allow cut-edges.}
In other words it is a \enquote*{tree} of triangles where triangle pairs are joined by a single common vertex.
Let $C$ be a subgraph of $G$. 
For the modification we require that $C$ is a triangular cactus
such that for each vertex $v \in V(C)$, the degree of $v$ in $G$ is $3$ or $4$, and $v$ is incident to at most one edge from $E(G) \setminus E(C)$.
In the following, we call such a cactus a \emph{simple cactus}, and we call an edge $e \in E(G) \setminus E(C)$ a \textit{needle} of the cactus $C$.
Note that we do not require $C$ to be an \textit{induced} subgraph of $G$.
Indeed, two triangles can share a needle, as illustrated in \Cref{fig:simple_cactus}.

Modification~3 replaces each $3$-cycle by a single edge as follows:\\
Let $T_1,T_2,\dotsc,T_\ell$ be the triangles that comprise the simple cactus $C$.
For each $T_i$ with vertices $\{u_i,v_i,w_i\}$, it removes the edges $u_i v_i, v_i w_i, w_i u_i$ and introduces two new vertices $x_i$ and $y_i$ with an edge  $x_i y_i$. Finally it discards all isolated vertices (for an illustration, see the the full version of the article).

We now show that the graph obtained this way is equivalent to the original graph.

\begin{figure}[h!tb]
    \begin{center}
        \begin{subfigure}[h]{0.25\linewidth}
            \includegraphics[scale=0.4]{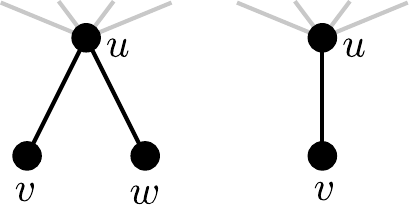}
        \end{subfigure}
        \begin{subfigure}[h]{0.4\linewidth}
            \includegraphics[scale=0.4]{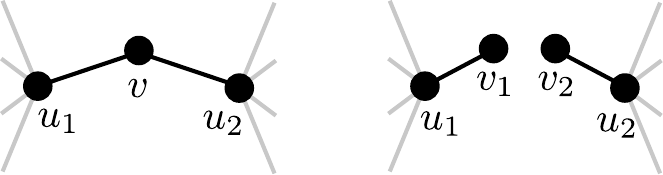}
        \end{subfigure}%
        \begin{subfigure}[h]{0.25\linewidth}
            \includegraphics[scale=0.3]{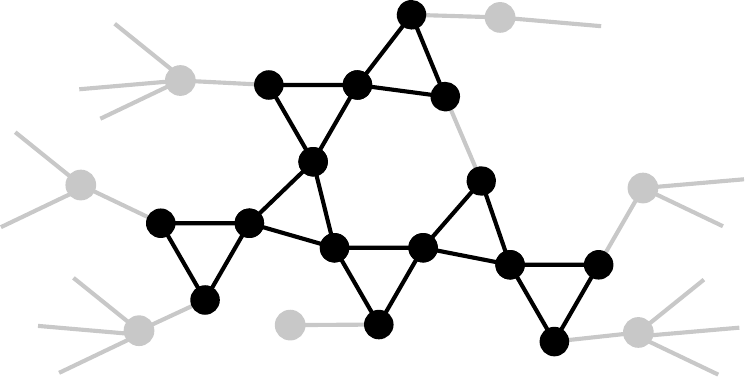}
        \end{subfigure}%
    \end{center}
\caption{Modifications 1 and 2,\label{fig:leaves} and a simple cactus (Modification 3) as a subgraph of $G$.\label{fig:simple_cactus}}
\end{figure}

\begin{restatable}{lemma}{mod-one-to-three}\label{lem:mod1}\label{lem:mod2}\label{lem:mod3}
    Modifications 1, 2 and 3 are valid.
\end{restatable}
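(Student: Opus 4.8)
To establish \Cref{lem:mod1}, I would treat the three modifications independently, and for each one verify the two conditions of \Cref{def:equivalence} using the reductions noted just after it: namely, it suffices to show $\opt(G) \le \opt(G')$ for condition~\ref{eqpair:1}, and to show that from any edge $2$-coloring $\chi'$ of $G'$ one can construct in polynomial time an edge $2$-coloring $\chi$ of $G$ with $|\chi| \ge |\chi'|$ for condition~\ref{eqpair:2}. The second of these is the one that does real work, since a coloring of $G$ with at least as many colors as an optimal coloring of $G'$ simultaneously witnesses both $\opt(G) \ge \opt(G')$ and, going the other direction starting from an optimal $\chi$ of $G$, gives $\opt(G') \ge \opt(G)$; but the cleanest exposition is to argue the two directions separately as the excerpt suggests.

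\textbf{Modification~1 (removing a redundant leaf).} Here $G' = G - w$ where $w$ is a leaf adjacent to $u$, and $v \ne w$ is another leaf adjacent to $u$, with $\deg_G(u) \ge 3$. For condition~\ref{eqpair:1}, given an optimal coloring $\chi$ of $G$: the edge $uw$ sees at most the two colors in $\chi(u)$; if the color of $uw$ appears on some other edge, deleting $uw$ loses no color, and if it is unique, I recolor $uw$ with the color of $uv$ first (legal, since $u$ still sees only $\chi(u)$ and $w$ is a leaf), so again no color is lost — hence $\opt(G') \ge \opt(G) - 0$, i.e. $\opt(G) \le \opt(G')$. Wait, I must be careful: deleting a uniquely-colored edge does lose a color, so the recoloring step is essential and must always be performed when $\chi(uw)$ is unique; after recoloring, $\chi(uw) = \chi(uv)$ is not unique, so deletion is free. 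For condition~\ref{eqpair:2}, from $\chi'$ on $G'$ I extend to $G$ by coloring the new edge $uw$ with the color $\chi'(uv)$; this keeps $u$'s color set unchanged and $w$ is a new leaf, so feasibility holds, and $|\chi| = |\chi'|$. This is routine.

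\textbf{Modification~2 (splitting a degree-$2$ vertex).} Here $v$ with neighbors $u_1, u_2$ is replaced by two leaves $v_1, v_2$ with $u_1 v_1, u_2 v_2$. For condition~\ref{eqpair:2}: given $\chi'$ on $G'$, define $\chi$ on $G$ by $\chi(u_1 v) := \chi'(u_1 v_1)$ and $\chi(u_2 v) := \chi'(u_2 v_2)$, agreeing with $\chi'$ elsewhere; then $\chi(v) = \{\chi'(u_1 v_1), \chi'(u_2 v_2)\}$ has size $\le 2$ and the $u_i$ see the same colors as in $\chi'$, so $\chi$ is feasible with $|\chi| = |\chi'|$. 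For condition~\ref{eqpair:1}, given optimal $\chi$ on $G$: if $\chi(u_1 v) = \chi(u_2 v)$, the induced coloring on $G'$ has $v_1, v_2$ as leaves and loses nothing. The subtle case is $\chi(u_1 v) \ne \chi(u_2 v)$: then naively splitting keeps both colors, and the only danger is that one of these two colors becomes "lost" — but it cannot, since it still appears on $u_i v_i$. So $\opt(G') \ge \opt(G)$ in all cases. (One should double-check that no color that was present in $G$ disappears: every edge of $G$ has a corresponding edge in $G'$ with the same color, so $\chi(G) \subseteq \chi'(G')$.) Again essentially routine.

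\textbf{Modification~3 (collapsing a simple cactus).} This is the hard part and where I expect the real obstacle. We replace a simple cactus $C$ (triangles $T_1, \dots, T_\ell$ sharing only vertices, each cactus vertex having degree $3$ or $4$ in $G$ and at most one needle) by $\ell$ disjoint edges $x_i y_i$, deleting resulting isolated vertices, and keeping the needles. The key combinatorial facts I would need are: (a) in any feasible coloring of $G$, the triangles of $C$ together can contribute at most $\ell$ colors that are "private" to $C$, and more precisely the needles plus the cactus edges have a colorful structure that is captured by $\ell$ fresh colors on the $x_i y_i$ plus whatever the needles carry; (b) conversely, given a coloring $\chi'$ of $G'$, one can color each $T_i$ in $G$ reusing the color of $x_i y_i$ together with at most one needle color at each cactus vertex, keeping every vertex within its $2$-color budget. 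For direction~\ref{eqpair:2}, I would process the triangles from the leaves of the cactus tree inward: each $T_i = \{a,b,c\}$ has at most three needles (at most one per vertex), and I want to $2$-color its three edges so that (i) the colors used are among the new color $\chi'(x_i y_i)$ and the (at most three) needle colors already fixed, and (ii) each of $a,b,c$ ends up seeing at most two colors total. Since each of $a,b,c$ sees at most one needle color, and the three triangle edges meeting at a vertex are exactly two of them, a careful case analysis on how many of the three vertices carry a needle (zero, one, two, or three needles) shows such a coloring always exists and uses all of $\chi'(x_i y_i)$ — so $|\chi| \ge |\chi'|$, because the $\ell$ new colors get "absorbed" into the $\ell$ triangles and all needle/remaining colors survive. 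For direction~\ref{eqpair:1}, from an optimal $\chi$ on $G$ I restrict to the needles and non-cactus edges; I must exhibit $\ell$ colors' worth of value to put on the $x_i y_i$. The point is that the edges strictly inside $C$ (those not equal to a needle) that are uniquely colored in $\chi$ number at most $\ell$: one shows this by a counting/charging argument over the triangles, using that the cactus is a tree of triangles so it has exactly $2\ell+1$ vertices — no wait, $\ell$ triangles glued in a tree on shared vertices has $2\ell + 1$ vertices when it is a path-like cactus, but in general $2\ell + 1$ as well since each new triangle adds two new vertices — hence at most $2\ell + 1 + (\text{needles})$ incidences, and the degree constraints bound the number of distinct colors appearing only inside $C$ by $\ell$. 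I would make this precise with an Euler-type count: $C$ has $3\ell$ edges and at most $2\ell+1$ vertices, and a color appearing only inside $C$ must be "closed" (every vertex seeing it sees no outside edge of that color), and the classical bound that $k$ colors each forming a connected subgraph on a vertex set of size $s$ with min degree $\ge 1$ need $s \ge$ something — more cleanly, I would use the nice-character-graph style argument to say the cactus edges span at most $\ell$ color classes beyond the needle colors. Then assign those colors (or fresh colors, at most $\ell$ of them) to $x_1 y_1, \dots, x_\ell y_\ell$, obtaining a feasible coloring of $G'$ with $|\chi'| \ge |\chi| = \opt(G)$, so $\opt(G') \ge \opt(G)$. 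The main obstacle throughout is the bookkeeping in Modification~3: getting the needle/triangle interaction exactly right in the leaf-to-root extension and nailing the "at most $\ell$ internal colors" count; the degree-$3$-or-$4$ and one-needle-per-vertex hypotheses are exactly what make the case analysis finite and the counting tight, and I would lean on them heavily rather than trying a slicker argument.
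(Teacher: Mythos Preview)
Your handling of Modification~2 matches the paper and is fine. For Modification~1, however, there is a slip: if $\chi(uw)$ is a unique color, recoloring $uw$ to $\chi(uv)$ \emph{does} lose that color, so your argument as written only yields $\opt(G') \ge \opt(G)-1$. The fix is to actually use the hypothesis $\deg_G(u)\ge 3$: a third neighbor $x$ of $u$ forces $\chi(ux)\in\{\chi(uv),\chi(uw)\}$, so one of the two pendant edges has its color duplicated on $ux$, and that is the leaf you delete. This is exactly what the paper does; your version never invokes~$x$.

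The substantive gap is in Modification~3. For the direction $\chi'\to\chi$ you are working far too hard: the paper simply colors every edge of triangle $T_i$ with the single color $\chi'(x_iy_i)$. Each degree-$4$ vertex of $C$ then sees exactly two triangle colors, and each degree-$2$ vertex of $C$ sees one triangle color plus its one needle color; feasibility is immediate with no leaf-to-root processing and no case analysis on the number of needles.

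For the direction $\opt(G)\le\opt(G')$ you correctly isolate the crux---bounding the number of ``novel'' colors (those appearing only inside $C$) by $\ell$---but neither of your proposed routes closes. The Euler-style count is not tight: $C$ has $2\ell+1$ vertices, of which the $\ell-1$ shared vertices may see two novel colors and the $\ell+2$ needle vertices at most one, so a one-edge-per-novel-color subgraph has degree sum at most $3\ell$, giving only $\le 3\ell/2$ novel colors rather than $\ell$. And invoking a ``nice-character-graph style argument'' is circular: nice character graphs are defined only for \emph{normalized} graphs, which already presupposes Modification~3. The paper instead argues by minimal counterexample on~$\ell$: it first recolors all non-novel edges with a single dummy color, then uses pigeonhole to locate a leaf triangle of the cactus incident to two needles, shows by a short case analysis that this triangle can be recolored to carry at most one novel color while presenting a monochromatic face to the rest of $C$ (a ``pseudo needle''), and strips it off to contradict minimality of~$\ell$. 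Something of this inductive flavor is needed; the global incidence count alone does not reach~$\ell$.
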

\begin{proof}

\emph{Modification 1.} We observe that adding a leaf to a vertex $u$ cannot decrease the number of colors: we can simply color the new pendant edge with an arbitrary color seen by $u$.
    We therefore only have to show that removing a leaf does not decrease the number of colors.
    Let $G$ be a graph with an optimal coloring $\chi^\star$. 
    Suppose there exist vertices $v \neq w$ of degree one that are connected to the vertex $u$ of degree at least three.
    Let $x \notin \{u,v,w\}$ be a third vertex adjacent to $u$.
    By the definition of an edge 2-coloring, $u$ sees at most two colors.
    
    We distinguish between two cases.
\begin{itemize}

 \item   \textbf{Case 1}: $\chi^\star(uv) \neq \chi^\star(uw)$. Then $\chi(ux) \in \{\chi(uv),\chi(uw)\}$. Without loss of generality, we assume $\chi(ux) = \chi(uv)$.
    Then removing $uv$ from $G$ results in a graph $G'$ with coloring $\chi'$ such that $|\chi^\star| = |\chi'|$.
    
  \item   \textbf{Case 2}: $\chi^\star(uv) = \chi^\star(uw)$.
    Then again, removing $uv$ from $G$ does not change the number of colors
    and we obtain
    a feasible coloring $\chi'$ for $G'$ with $|\chi^\star| = |\chi'|$.    
\end{itemize}    

\paragraph*{Modification 2.}

    Let $G = (V, E)$ be a graph with a degree-$2$ vertex $v$ and let $G'$ be the graph obtained from $G$ by applying Modification~2 to $v$.
    Thus if $v$ is adjacent to $u_1, u_2 \in V$, then $G'$ contains the leaves $v_1$ and $v_2$ in place of $v$ as well as the new edges $v_1u_1$ and $v_2u_2$.

    Let $\chi^\star$ be an optimal coloring of $G$.
    Coloring $u_1 v_1$ with $\chi^\star(u_1 v)$ and $u_2 v_2$ with $\chi^\star(u_2 v)$ yields a feasible coloring $\chi'$ for $G'$ with the same number of colors, hence $\opt(G) \leq \opt(G')$.
    In the other direction, let $\hat{\chi}$ be an arbitrary coloring of $G'$. After contracting $v_1, v_2$ to a single vertex $v$,  $\hat{\chi}$ is still a feasible coloring of $G'$
    as $v$ only sees the colors of $\hat{\chi}(vu_1)$ and $\hat{\chi}(vu_2)$. 
Thus $|\chi'| = |\chi|$ and therefore Modification~2 is valid.

\paragraph*{Modification 3.}

\begin{figure}[h]
    \centering
    \includegraphics[scale=0.4]{./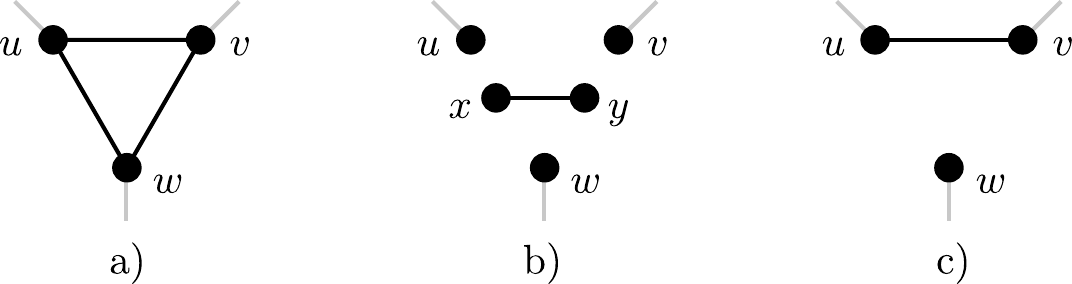}
    \caption{Subfigures a) and b): Modification~3 for a simple triangular cactus consisting of just one triangle. Subfigures a) and c) show a variant of Modification~3 that keeps the perfect matching (used in \Cref{sec:pm}).\label{fig:triangle}}
\end{figure}

Let $m$ be the number of adjacent triangles that form the cactus $C$ and let  $T_1, T_2,  \dots,  T_m$ be these triangles. Let $N$ be the set of needles.

After the modification, there are $m$ isolated edges and the needles can be colored independently of $C$.
In particular, all needles can still be colored with the same colors as before.
To show that we obtain an equivalent pair, we have to show that the modification does not decrease the optimal number of colors and that a coloring in the modified graph gives a coloring in the original graph without losing colors.
The latter condition is simple: We color the $m$ triangles monochromatically with the $m$ colors of the separate edges.
Since each vertex of $C$ either sees exactly two triangles or one triangle and one needle, the obtained coloring is feasible independent of the coloring of the needles.
To show the first condition, we define a \textit{novel} color to be a color \emph{not} appearing in $E(G) \setminus E(C)$.
Note that in particular, a needle cannot have a novel color. 
We then show that $C$ cannot have more than $m$ novel colors. Since after the modification we obtain exactly $m$ novel colors for the $m$ separate edges, we obtain an equivalent pair and therefore the validity of the modification follows.

We show by contradiction that $C$ cannot have more than $m$ novel colors.
Otherwise let $m$ be the smallest number of triangles such that there is a problem instance with a cactus $C$ of $m$ triangles and at least $m+1$ novel colors, and 
let $\chi$ be a feasible coloring in which $C$ has more than $m$ novel colors.
Let $G'$ be the subgraph of $G$ induced by the edges $E(C) \cup N$.

Then in particular, $\chi$ restricted to $G'$ is a feasible coloring.
We start by simplifying the instance.
Let $c$ be a color that is not novel.
We observe that within $G'$ we maintain a feasible coloring with the same number of novel colors if for each $e \in E(G')$ with $\chi(e)$ not novel we assign $\chi(e) = c$.
Note that in particular, now all needles are colored $c$.

We define a \emph{pseudo needle} to be a triangle with at most one novel color where exactly two of the three vertices are incident to a needle and both edges incident to the non-needle have the same color.
To find a pseudo needle, we analyze the degrees within $C$.
Note that the sum of degrees in $C$ is $6m$ since each triangle has $3$ edges and thus contributes $6$ to the total degree. 
The cycles of $C$ form a tree in the following sense. We associate each triangle with a vertex and each vertex contained in two triangles with an edge. Recall that there are no vertices contained in more than two triangles.
Therefore the number of vertices contained in two triangles is exactly $m-1$.
The degree of each of these vertices is $4$. Each other vertex is incident to a needle. 
If a vertex is incident to a needle, it is incident to two edges within $C$.
Thus there are $(6m - 4(m-1))/2 > m$ vertices in $C$ incident to a needle. 
By the pigeonhole principle, there is a triangle $T_i$ incident to at least $2$ needles.

Note that in $T_i$ there cannot be two distinct novel colors incident to a needle as the common vertex would see at least three colors.
In particular, this implies that if $T_i$ is incident to three needles (and thus the only triangle in $C$), there is at most one novel color in $C$, a contradiction to having at least $m+1$ colors.
Otherwise let $\{u,v,w\}$ be the vertices of $T_i$, with $u$ and $v$ incident to a needle and $w$ not incident to a needle.
If $T_i$ is a pseudo needle, we remove $T_i$ and its two needles from $G'$ and introduce a new vertex $w'$ and a needle $ww'$ with $\chi(ww') = c$.
Then the resulting cactus has $m-1$ triangles and at least novel $m$ colors, not counting the possible novel color of $T_m$ -- a contradiction to $m$ being minimal.
We show that we can always turn $T_i$ into a pseudo needle, which then implies that $C$ and $\chi$ cannot exist.

If $T_i$ does not have a novel color, it is a pseudo needle. 
If $T_i$ has one novel color $c'$, we distinguish two cases.
If $w$ sees color $c$, we set $\chi(uv) = c'$ and $\chi(uw) = \chi(vw) = c$, obtaining a feasible coloring with the same number of colors and a pseudo needle.
Otherwise $\chi(uw) = \chi(vw) = c'$, and thus again $T_i$ is a pseudo needle.

If $T_i$ has two novel colors $c',c''$, both must be incident to $w$ as otherwise $u$ or $v$ would see three colors.
Let $w',w'' \notin \{u,v\}$ be the other two vertices adjacent to $w$.
Then $\chi(ww'),\chi(ww'') \in \{c'c''\}$; w.l.o.g.\ $\chi(ww') = c''$.
We change $\chi$ by setting $\chi(uw) = \chi(vw) = \chi(uv) = c'$.
We thus obtain a feasible coloring with the same number of colors but only one novel color in $T_i$, i.e., $T_i$ is a pseudo needle. 
 \end{proof}

\paragraph*{Normalized Graph.}

Some of our claims rely on the problem instance $G$ being a graph, such that none of Modifications~1--3 can be applied on $G$ anymore.
We define such a graph $G$ to be \textit{normalized}.
Below we show that one can efficiently compute a normalized graph $G'$ for every problem instance $G$, and that we can use the notion of normalized graphs to design approximation algorithms for the maximum edge $2$-coloring problem.

\begin{lemma}\label{lem:normalization}
  Given a graph $G$, in polynomial time we can compute a normalized graph $G'$ such that $\pair{G, G'}$ is an equivalent pair.
\end{lemma}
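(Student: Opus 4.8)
The plan is to show that the three modifications, applied exhaustively, terminate in polynomially many steps and yield a graph on which none of them applies. Validity of each individual step, and hence that the resulting graph forms an equivalent pair with $G$, already follows from \Cref{lem:mod1} together with the transitivity of the equivalence relation $\pair{\cdot,\cdot}$: if $\pair{G,G_1}$ and $\pair{G_1,G_2}$ then $\pair{G,G_2}$, since $\opt$ is preserved along the chain and the polynomial-time color-transfer maps compose. Thus the only real content is \emph{termination in polynomial time}.

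The key idea is to exhibit a potential function that strictly decreases under every modification. First I would observe that Modification~1 strictly decreases the number of vertices (it deletes a leaf), and Modification~3 strictly decreases the number of vertices as well (it replaces $m$ triangles on at least $2m+1$ vertices—counting the $m-1$ shared vertices only once—by $2m$ fresh vertices, and then discards isolated vertices; a one-triangle cactus on $3$ vertices becomes $2$ vertices, and more triangles only help). Modification~2, however, \emph{increases} the vertex count by one while keeping the edge count fixed, so vertex count alone does not work. The fix is to note that Modification~2 strictly decreases the number of vertices of degree at least $2$: the degree-$2$ vertex $v$ is replaced by two degree-$1$ vertices, and no other vertex changes degree. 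Moreover Modification~2 cannot be applied to a vertex created by a previous Modification~2 (those have degree~$1$), nor indefinitely, because each application removes one vertex from the finite set of vertices of the \emph{original} degree sequence that have degree~$\ge 2$—more carefully, once a vertex has degree $1$ it stays degree $1$ under all three modifications except possibly when a neighbor's incident edges change, but Modifications~1–3 never raise a vertex's degree. Hence the quantity $\#\{v : \deg(v)\ge 2\}$ is non-increasing overall and strictly decreases under Modification~2, while Modifications~1 and~3 strictly decrease $|V|$ (and do not increase $\#\{v:\deg(v)\ge 2\}$). Therefore the lexicographic potential $\bigl(\#\{v:\deg(v)\ge 2\},\ |V|\bigr)$ strictly decreases under every modification, so the process halts after $O(|V(G)|^2)$ steps, and at each step the applicable modification (finding equal-neighborhood leaves, a degree-$2$ vertex, or a simple cactus) can be detected in polynomial time—the cactus case by examining $2$-connected blocks and checking the degree and needle conditions.

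The main obstacle I anticipate is the bookkeeping for Modification~3's effect on degrees: one must check that replacing a simple cactus by isolated edges does not create a new degree-$2$ vertex elsewhere (it does not, since only vertices of $C$ lose edges, and every vertex of $C$ has degree $3$ or $4$ in $G$ with at most one incident needle, so after removing the two cactus-edges at such a vertex it becomes a leaf or isolated, never degree~$2$), and likewise that Modification~1's deletion of a leaf does not drop its neighbor $u$—which had degree $\ge 3$—to degree $1$, which is immediate. Once these local degree checks are in place, the lexicographic argument closes the proof, and combining with \Cref{lem:mod_approx} gives the intended use of normalized graphs in the later sections. \qed
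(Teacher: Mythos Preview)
Your overall plan---find a potential that strictly drops under every modification---is exactly right, and close in spirit to the paper's proof. There is, however, a concrete error in your treatment of Modification~3: it does \emph{not} decrease $|V|$. Consider a single triangle $\{u,v,w\}$ forming a simple cactus, so each of $u,v,w$ has exactly one needle. Modification~3 deletes the three triangle edges and introduces a fresh isolated edge $xy$; the original vertices $u,v,w$ each retain their needle, hence have degree~$1$, and are \emph{not} discarded. The vertex count thus rises by two, not drops by one. In general, an $m$-triangle simple cactus has $2m+1$ vertices; the $m-1$ shared (degree-$4$) vertices become isolated and are removed, but the $m+2$ vertices carrying needles survive and $2m$ fresh vertices are added, for a net increase of $m+1$.

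This invalidates your stated reason for the lexicographic potential $(\#\{v:\deg(v)\ge 2\},\,|V|)$ to drop under Modification~3, since you rely on the second coordinate decreasing while the first is merely non-increasing. The fix is immediate once you notice that Modification~3 actually strictly decreases the \emph{first} coordinate: every cactus vertex had degree $3$ or $4$ in $G$ and drops to degree at most~$1$, while all non-cactus degrees are unchanged. With that correction your lexicographic argument goes through. The paper takes the mild variant of tracking $|E|$ instead of $|V|$: Modifications~1 and~3 strictly decrease the edge count (Modification~3 replaces $3m$ cactus edges by $m$ isolated edges), while Modification~2 keeps $|E|$ fixed and removes one degree-$2$ vertex; applying Modification~2 with priority then yields polynomial termination directly.
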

\begin{proof}
\Cref{lem:mod1,lem:mod2,lem:mod3} describe how Modifications 1--3 are performed, and it is shown that they can be applied in polynomial time when given a leaf, a degree-2 vertex, or a simple cactus respectively.
Furthermore, it is clear that if $G$ contains a leaf or a degree-2 vertex, we can find it efficiently.
Simple cacti can also be found in polynomial time as follows.

Let $\mathcal T$ be the set of those triangles in $G$ that only contain vertices of degree $3$ or $4$.
Observe that $\mathcal T$ contains all triangles that are part of simple cacti in $G$.

We say two triangles $T_1, T_2 \in \mathcal T$ are \emph{compatible} if they intersect at exactly one vertex $v$.
Observe that $v$ then must be a degree-4 vertex.
This means that if we fix a triangle $T_1 \in \mathcal T$ and a vertex $v \in V(T_1)$ there is at most one compatible triangle $T_2 \in \mathcal T$ intersecting at $v$.
If such a $T_2$ exists, any simple cactus which contains~$T_1$ must also contain $T_2$. 
If on the other hand, the degree of $v$ is $4$ and no such~$T_2$ exists, we may discard the triangle $T_1$, since it cannot be part of any simple cactus. Let $\mathcal T'$ be the set of remaining triangles.

If we want to determine, whether a given triangle $T \in \mathcal T'$ belongs to a simple cactus, we start with the set $\mathcal S = \{T\}$, which represents the partial cactus containing only $T$, and grow this set until either the triangles of $\mathcal S$ form a valid simple cactus, or until the process fails.
Note that each degree-$4$ vertex in a triangle of $\mathcal S$ joins that triangle with some compatible triangle from $\mathcal T$.
While there is a triangle $T_i$ with a degree-$4$ vertex in $\mathcal S$, such that the respective compatible triangle $T_j$ is not in $\mathcal S$, we distinguish two cases: 
If $T_i$ is the only triangle in $\mathcal S$ that is compatible with $T_j$, we add $T_j$ to $\mathcal S$ and continue the process.
If there is another triangle $T_k \in \mathcal S \setminus \{T_i\}$ which also is compatible to $T_j$, note that adding $T_j$ to $\mathcal S$ would create a cycle of length at least $4$ in the cactus, making it not a simple one.
However, since~$T_i$ and $T_j$ are compatible, any simple cactus containing $T_i$ must also contain $T_j$. Putting these two observations together, we get that $\mathcal S$ cannot be extended to a simple cactus; thus the process must fail.
Since there are $O(n^3)$ triangles, the whole process can be done in polynomial time.

We now complete the proof by showing that we can normalize a graph by applying a sequence of Modifications 1--3 that ends after polynomially many steps.
Observe that by applying Modifications 1 and 3, we will reduce the number of edges in $G$, but may create a polynomial amount of degree-$2$ vertices.
However, applying Modification~2 reduces the number of degree-$2$ vertices by one while keeping the number of edges the same.
Hence by applying Modification~2 first whenever it is possible, we reach a point when no further modifications can be made after polynomially many steps.
 \end{proof}

Let $C$ be a component of $G$ such that $C$ is an isolated vertex or $C$ has two vertices connected by an edge.
Then we say that $C$ is a \emph{trivial component}.
Otherwise, the component is called \emph{non-trivial}.
\begin{lemma}\label{lem:components}
  Suppose there is an $\alpha$-approximation algorithm \textsf{A} for each non-trivial component of a normalized graph $G'$ such that $\pair{G, G'}$ is an equivalent pair.
  Then there is an $\alpha$-approximation algorithm for~$G$.
\end{lemma}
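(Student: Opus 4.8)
The plan is to reduce the problem componentwise and argue that colors in distinct components can be handled independently. First I would recall that an edge $2$-coloring is a local constraint: feasibility at a vertex $v$ depends only on the edges incident to $v$, so a coloring of $G'$ is feasible if and only if its restriction to each connected component is feasible, provided the color sets used on distinct components are \emph{disjoint}. Using disjoint color sets costs nothing, since renaming colors never changes the number of colors used. Hence $\opt(G') = \sum_{C} \opt(C)$, where the sum ranges over all connected components $C$ of $G'$, and similarly any collection of colorings of the components, one per component with pairwise disjoint palettes, assembles into a coloring of $G'$ whose number of colors is the sum.

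Next I would dispose of the trivial components. A trivial component is an isolated vertex (contributing $0$ colors and $0$ edges) or a single edge (contributing exactly $1$ color, forced). In either case $\opt(C)$ is known exactly and an optimal coloring is trivial to write down in polynomial time; these components contribute an additive constant to $\opt(G')$ which the algorithm matches exactly. So let $C_1,\dots,C_k$ be the non-trivial components and let $t$ be the number of single-edge trivial components (isolated vertices can be ignored entirely). Then $\opt(G') = t + \sum_{i=1}^{k} \opt(C_i)$.

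Now I would run the given $\alpha$-approximation algorithm \textsf{A} on each non-trivial component $C_i$ separately, obtaining a coloring $\chi_i$ with $|\chi_i| \ge \opt(C_i)/\alpha$; relabel the palettes so that $\chi_1,\dots,\chi_k$ and the $t$ forced single-edge colors all use pairwise disjoint color sets, and take the union $\chi'$ as a coloring of $G'$. By the first paragraph $\chi'$ is feasible, and $|\chi'| = t + \sum_i |\chi_i| \ge t + \sum_i \opt(C_i)/\alpha \ge (t + \sum_i \opt(C_i))/\alpha = \opt(G')/\alpha$, using $t \ge t/\alpha$ since $\alpha \ge 1$. This is a polynomial-time $\alpha$-approximation for $G'$. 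Finally, since $\pair{G,G'}$ is an equivalent pair, \Cref{lem:mod_approx} (applied to the valid modification turning $G$ into $G'$, as guaranteed by \Cref{lem:normalization}) converts this into a polynomial-time $\alpha$-approximation for $G$.

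The argument is essentially routine; the only point requiring a little care — and the one I would state explicitly — is the palette-disjointness observation together with the inequality $t \ge t/\alpha$, which is where the additive contribution of the trivial components is absorbed without loss. Everything else is bookkeeping over the decomposition $\opt(G') = t + \sum_i \opt(C_i)$, which itself follows from the locality of the edge $2$-coloring constraint.
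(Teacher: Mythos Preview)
Your proof is correct and follows essentially the same approach as the paper's own proof, which invokes the local ratio method to solve each component separately (optimally on trivial components, via \textsf{A} on non-trivial ones) and then combines. You simply spell out in more detail the palette-disjointness, the decomposition $\opt(G') = t + \sum_i \opt(C_i)$, the inequality $t \ge t/\alpha$, and the final transfer from $G'$ to $G$ via \Cref{lem:mod_approx}; the paper compresses all of this into a two-line appeal to the local ratio principle.
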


\begin{proof}
    The statement follows by a simple application of the local ratio method.
    For the non-trivial components, we color the edges of graph $G'$ according to the output of \textsf{A}.
    \Cref{alg:basic} colors all trivial components optimally. Thus for each component we obtain an $\alpha$-approximation (or better)
    and the overall approximation ratio cannot be worse than $\alpha$.
 \end{proof}

Due to \Cref{lem:normalization,lem:components}, from now on we can assume that the problem instance is a normalized connected graph $G$ with more than two vertices. We now have a closer look at leaves. 
\begin{lemma}\label{lem:pendant_unique}\label{lem:optimal_pendant_unique}
Let $G$ be a graph where no two leaves share a neighbor and let $\chi$ be an edge 2-coloring of $G$.
Given $G$ and $\chi$, we can efficiently compute an edge 2-coloring $\hat\chi$ of $G$ that uses at least as many colors as $\chi$ and assigns each pendant edge of $G$ a unique color.
In particular, in a normalized graph there is an optimal edge-$2$-coloring that assigns a unique color to each pendant edge.
\end{lemma}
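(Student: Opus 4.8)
The plan is to prove the statement constructively: starting from an arbitrary edge 2-coloring $\chi$ of $G$, I transform it step by step into a coloring $\hat\chi$ that never uses fewer colors and eventually gives every pendant edge its own unique color. The key structural fact I would exploit is the hypothesis that no two leaves share a neighbor, which means that for each vertex $u$, at most one edge incident to $u$ is pendant; hence changing the color of a pendant edge $uv$ only affects the constraint at $u$ (the leaf $v$ imposes no further constraint since $v$ has degree one).

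First I would set up the iteration. Call a pendant edge \emph{bad} if its color is shared with some other edge of $G$; the goal is to reach a coloring with no bad pendant edges, without decreasing $|\chi|$. I process one bad pendant edge $uv$ at a time. The natural move is to recolor $uv$ with a brand-new color $c^\star$ not currently used anywhere in $G$. This is always feasible at $v$ (a leaf sees only $c^\star$), and at $u$ it is feasible provided $u$ does not end up seeing three colors. So the main case analysis concerns the colors seen by $u$: if $u$ is incident to an edge other than $uv$ whose color equals $\chi(uv)$, then removing color $\chi(uv)$ from the edge $uv$ and giving it $c^\star$ keeps $|\chi(u)| \le 2$ (the old color $\chi(uv)$ is still seen via that other edge, and we added $c^\star$, so $u$ sees at most the two old colors plus\ldots)---wait, that would be three. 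The cleaner argument: when $uv$ is the \emph{only} edge at $u$ with color $\chi(uv)$, recoloring it $c^\star$ leaves $u$ seeing exactly the same number of colors as before, so it is feasible and we have strictly reduced the number of bad pendant edges while keeping $|\chi|$ unchanged (we lost color $\chi(uv)$ only if it appeared nowhere else, but it did not reappear elsewhere is exactly the case we are \emph{not} in). When $uv$ is \emph{not} the only edge at $u$ with that color, recoloring it $c^\star$ would make $u$ see three colors, so instead I would swap: merge $uv$'s color into one of the two colors $u$ already robustly sees, or re-derive that the pendant edge can simply inherit a color already present at $u$ and a fresh color is relocated elsewhere. The right bookkeeping is: if $\chi(uv)$ appears on another edge at $u$, then that color is "safe" at $u$ even without $uv$, so set $\hat\chi(uv)$ to a new color $c^\star$ and simultaneously note $|\hat\chi| = |\chi|$ since we neither created nor destroyed a color class globally except splitting off a singleton --- actually $|\hat\chi| = |\chi| + 1$ in this subcase, which is fine since we only need "at least as many."

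The subtle point, and the one I expect to be the main obstacle, is ensuring termination and that fixing one bad pendant edge does not un-fix another. Since each pendant edge sits at a distinct vertex $u$ (by the shared-neighbor hypothesis) and recoloring $uv$ touches only edges at $u$ --- and the only other edges at $u$ that could be pendant are none --- no previously good pendant edge becomes bad. So a simple measure, e.g.\ the number of bad pendant edges, strictly decreases at each step, giving a polynomial-time procedure. I would then conclude: apply this to an optimal coloring $\chi^\star$ to get $\hat\chi$ with $|\hat\chi| \ge |\chi^\star| = \opt(G)$, hence $\hat\chi$ is also optimal and assigns a unique color to every pendant edge; and since a normalized graph has no two leaves sharing a neighbor (Modification~1 would otherwise apply), the "in particular" clause follows immediately.

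A cleaner exposition would dispense with the global bad-count and instead argue directly: list the pendant edges $e_1, \dots, e_k$ at distinct vertices $u_1, \dots, u_k$; for $i = 1, \dots, k$ in turn, if $e_i$ already has a unique color do nothing, otherwise give $e_i$ a fresh color $c^\star_i$ --- this is feasible at $u_i$ because either $\chi(e_i)$ was already duplicated at $u_i$ (so $u_i$ still sees $\le 2$ colors after the swap, by the argument above, possibly after first recoloring the duplicate) or $\chi(e_i)$ was unique at $u_i$ in which case recoloring it to $c^\star_i$ changes nothing about $|\chi(u_i)|$; in all cases $|\hat\chi|$ does not drop. The only real content to check carefully is that the recoloring at $u_i$ never forces $u_i$ past two colors, and that processing $e_i$ never disturbs $e_j$ for $j \neq i$ --- both of which rest entirely on the "no two leaves share a neighbor" hypothesis. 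I would therefore structure the written proof around a single lemma-internal claim establishing feasibility of one recoloring step, then loop.
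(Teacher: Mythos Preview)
There is a genuine gap in your feasibility argument. Take the case where the pendant edge $uv$ has color $c_1$, vertex $u$ sees two colors $c_1$ and $c_2$, and some \emph{other} edge at $u$ is also colored $c_1$. Recoloring $uv$ to a fresh color $c^\star$ then leaves $u$ seeing $c_1$ (via that other edge), $c_2$, and $c^\star$---three colors, which is infeasible. You explicitly flag this (``recoloring it $c^\star$ would make $u$ see three colors''), but your text then contradicts itself: the sentence ``if $\chi(uv)$ appears on another edge at $u$, then that color is `safe' \ldots\ so set $\hat\chi(uv)$ to a new color $c^\star$'' re-proposes precisely the move you just ruled out, and the ``cleaner exposition'' at the end repeats the error (the parenthetical ``possibly after first recoloring the duplicate'' is not a well-defined operation and does not repair the problem).

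The paper's remedy in this case is not to introduce a fresh color at all but to \emph{globally merge} $c_2$ into $c_1$ (recolor every $c_2$-edge to $c_1$; always feasible since no vertex gains a color), and then recolor $uv$ with the now-vacated name $c_2$. The net change in the color count is zero and $uv$ becomes uniquely colored. One must additionally verify that this global merge does not destroy the unique color of some \emph{previously} fixed pendant edge; here the hypothesis is essential: since $v$ is the only leaf adjacent to $u$, any pendant edge whose unique color is $c_1$ or $c_2$ would have to be incident to $u$ and hence equal to $uv$ itself, so no other pendant edge is disturbed. Your phrase ``merge $uv$'s color into one of the two colors $u$ already robustly sees'' gestures in the right direction, but you need to make that merge global, carry out the non-interference check via the hypothesis, and not fall back to the fresh-color move in the final write-up.
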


\begin{proof}
    Let $\chi$ be a coloring that does not assign a unique color to every pendant edge.
    We describe a transformation that turns a $\chi$ into a new coloring $\hat\chi$ with $|\hat\chi| \geq |\chi|$, which assigns a unique color to one more pendant edge than $\chi$ does.
    The statement then follows by iteratively applying this transformation.

    Choose an arbitrary pendant vertex $v$ whose pendant edge does not have a unique color under $\chi$. 
    Let $u$ be the unique neighbor of $v$ in $G$ and let $c_1 = \chi(u v)$.
    If $u$ only sees color~$c_1$, we simply introduce a new unique color for $u v$; observe that the new coloring is feasible.
    Otherwise, denote by $c_2$ the second color seen by $u$.  
    Since $v$ is the only leaf adjacent to $u$, neither $c_1$ nor $c_2$ are the unique color of a pendant edge.

    Let $\bar\chi$ be the coloring that arises from $\chi$ by assigning all edges colored with $c_2$ the color~$c_1$ as their new color. 
    Then $\bar \chi$ is valid and uses exactly one color less than $\chi$. 

    In $\bar\chi$, vertex $u$ only sees one color, which allows us to recolor the edge $u v$ with color $c_2$ in~$\hat\chi$.
    The resulting coloring $\hat\chi$ uses one more color than $\bar\chi$, hence the same number of colors as $\chi$.
    The pendant edge $u v$ has a unique color, and all pendant edges which had a unique color under $\chi$ still do so under~$\hat\chi$.
    Since in a normalized graph leaves are unique, also the last claim of the lemma follows.
 \end{proof}

\section{An upper bound on the optimal solution}\label{sec:upper-bound}

To show \Cref{lem:maxcolors}, we analyze the character graph of the given instance. 

\subsection{Preparing the character graph}\label{sec:character}
Intuitively, a character graph is an edge-induced subgraph with exactly one representative edge for each color.

\begin{definition}[Character Graph]
  Given an optimal solution $\chi$ for a graph $G$, a \emph{character graph} of $(G,\chi)$ is a subgraph $H$ with vertex set $V(G)$ 
  and coloring $\chi_{|E(H)}$  such that (i) for each $e,f \in E(H)$ with $e \neq f$, $\chi(e) \neq \chi(f)$ and (ii) for each edge $e \in E(G)$ there is an edge $f \in E(H)$ with $\chi(e) = \chi(f)$.
\end{definition}

For ease of notation, in the following we write $\chi$ instead of $\chi_{|E(H)}$.
Observe that in a character graph $H$, no vertex can have a degree larger than $2$ since otherwise there would be a vertex with three incident colors.
Thus $H$ is a collection of isolated vertices, paths and cycles.
We call a vertex in $H$ a \emph{free vertex} if its degree is zero, an \emph{end vertex} if its degree is one and an \emph{inner vertex} if its degree is two. 
We frequently use the following known simple but powerful lemma.
\begin{lemma}[Feng et al.~\cite{FengZW09}]\label{lem:character_subgraph}
    Let $\chi$ be a feasible 2-edge coloring of a graph $G$ and let $u \neq v$ be two vertices in $V(G)$.
    If $|\chi(u) \cup \chi(v)| \ge 4$, $u$ and $v$ are not adjacent in $G$.
    In particular, if $H$ is a character graph of $(G,\chi)$ and $u \neq v$ are two inner vertices that are not neighbors in $H$ 
    then  $uv \notin E(G)$.
\end{lemma}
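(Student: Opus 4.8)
The plan is to prove the two parts in sequence, with the second following quickly from the first together with the defining property of a character graph.

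For the first statement, I would argue by contradiction. Suppose $u$ and $v$ are adjacent in $G$, and let $c := \chi(uv)$ be the color of the edge joining them. Then $c$ is a color incident to $u$ and also a color incident to $v$, so $c \in \chi(u) \cap \chi(v)$; in particular $|\chi(u) \cap \chi(v)| \ge 1$. Since $\chi$ is feasible, every vertex sees at most two colors, so $|\chi(u)| \le 2$ and $|\chi(v)| \le 2$. By inclusion–exclusion,
\[
  |\chi(u) \cup \chi(v)| = |\chi(u)| + |\chi(v)| - |\chi(u) \cap \chi(v)| \le 2 + 2 - 1 = 3,
\]
contradicting the hypothesis $|\chi(u) \cup \chi(v)| \ge 4$. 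Hence $uv \notin E(G)$.

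For the second statement, let $H$ be a character graph of $(G,\chi)$ and let $u \neq v$ be inner vertices of $H$ that are not neighbors in $H$. Being an inner vertex means having degree exactly $2$ in $H$; since distinct edges of $H$ carry distinct colors, the two edges of $H$ incident to $u$ have two different colors, so $|\chi(u)| = 2$, and likewise $|\chi(v)| = 2$. I claim $\chi(u) \cap \chi(v) = \emptyset$: if some color $c$ were seen by both $u$ and $v$, then by the character-graph property there is exactly one edge of $H$ with color $c$, and that edge must be incident to $u$ and to $v$ — i.e.\ it is the edge $uv$ — so $u$ and $v$ would be neighbors in $H$, contradicting the choice of $u,v$. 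Therefore $|\chi(u) \cup \chi(v)| = |\chi(u)| + |\chi(v)| = 4$, and the first part gives $uv \notin E(G)$.

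There is no real obstacle here: the only points needing a little care are the observation that the colors appearing on $H$ are colors of $G$ (so $\chi(u)$ computed in $H$ is a subset of $\chi(u)$ computed in $G$, and the feasibility bound applies), and the use of the uniqueness of the color-$c$ edge in $H$ to force adjacency. Both are immediate from the definitions already in the excerpt.
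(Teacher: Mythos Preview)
Your proof is correct. Note that the paper does not supply its own proof of this lemma; it is quoted from Feng et al.\ and used as a black box. Your argument is exactly the standard one: for the first part, any edge $uv$ contributes a common color, so inclusion--exclusion caps $|\chi(u)\cup\chi(v)|$ at $3$; for the second part, the two $H$-edges at an inner vertex pin down its full color set (since feasibility already bounds it by $2$), and the uniqueness of each color in $H$ forces a shared color to sit on the edge $uv\in E(H)$. One minor stylistic remark: in the second paragraph you could make explicit that the two $H$-colors at $u$ exhaust $\chi(u)$ \emph{because} $|\chi(u)|\le 2$ in $G$---you do address this in your closing comment, but promoting it into the main argument removes any ambiguity about whether $\chi(u)$ is being read in $G$ or in $H$.
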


Based on Lemma~\ref{lem:character_subgraph}, we can avoid cycles within a character graph.

\begin{lemma}\label{lem:cycle-free}
  Let $G$ be a normalized graph, and $\chi$ a coloring of $G$.
  Then there is a character graph $H$ of $(G,\chi)$ such that $H$ is cycle-free.
\end{lemma}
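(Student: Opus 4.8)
The plan is to start with an arbitrary character graph $H$ of $(G,\chi)$ and show that if $H$ contains a cycle, we can modify $H$ (keeping it a character graph of the same coloring $\chi$) to reduce the number of cycles, then iterate. Since $H$ is a disjoint union of paths and cycles (no vertex has degree $>2$ in a character graph, as noted right after the definition), "cycle-free" means $H$ is a union of paths and isolated vertices. So suppose $H$ has a cycle component $Z = v_1 v_2 \cdots v_k v_1$. Every vertex of $Z$ is an inner vertex, so by \Cref{lem:character_subgraph} the only edges of $G$ between vertices of $Z$ are the cycle edges themselves; in particular $Z$ is an induced cycle of $G$ and the vertices of $Z$ have no $G$-neighbours of degree-2-type constraints that would let an outside edge attach — more precisely, each $v_i$ already sees two colors (from its two incident cycle edges in $H$), so any further $G$-edge at $v_i$ must reuse one of those two colors.

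The key idea: pick any cycle edge, say $e = v_1 v_2$, with color $c = \chi(e)$, and try to remove $e$ from $H$, replacing it with a different edge $f$ of $G$ having $\chi(f) = c$. If such an $f$ exists that does not create a new cycle and does not push a vertex to degree $3$, we are done with this component — $H$ becomes a path on $v_2 v_3 \cdots v_k v_1$ plus the relocated edge $f$ hanging off elsewhere, strictly decreasing the cycle count. The delicate case is when the color class $E(c)$ consists only of edges among $\{v_1,\dots,v_k\}$, i.e. $E(c)$ lives entirely inside the cycle $Z$; but since $Z$ is induced and each $v_i$ has degree $2$ in $H$ already, $E(c) = \{e\}$ is forced (any other $c$-colored edge inside $Z$ would be a chord, contradicting inducedness, or would be another cycle edge, but cycle edges in $H$ have distinct colors). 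So $c$ appears only on $e$, and we cannot relocate $e$. Here I would instead argue that such a component cannot occur in a \emph{normalized} graph: every $v_i$ has degree exactly $2$ in $G$ (since $Z$ is induced and each $v_i$ sees only the two cycle-colors, and by normalization there are no degree-2 vertices at all — Modification~2 would have split them). That contradiction is the heart of the argument: a monochromatic-free cycle component forces degree-2 vertices, which normalization forbids.

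Wait — I need to be careful: a vertex $v_i$ of $Z$ could have $G$-degree $3$ or more, with the extra edges colored by one of the two cycle colors at $v_i$. Then $v_i$ is not degree-2 and normalization says nothing. So the real case analysis is: either (a) some color $c$ on a cycle edge $e=v_iv_{i+1}$ also appears on an edge $g$ of $G$ incident to $v_i$ or $v_{i+1}$ but $g \notin Z$ — then swap $e$ for $g$ in $H$; this keeps $v_i$ (or $v_{i+1}$) at degree $2$ since we remove one incident $H$-edge and add another, and the other endpoint of $g$ gains degree, which we must check stays $\le 2$, possibly re-choosing $g$ along a maximal alternating structure; or (b) no cycle color reappears outside $Z$ at a cycle vertex, in which case walking around $Z$ shows each $v_i$ has all its $G$-edges' colors confined to its two cycle colors and, combined with inducedness, one shows $Z$ is a connected component of $G$ that is just a cycle — but then each vertex has degree $2$ in $G$, contradicting normalization (Modification~2). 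The main obstacle I anticipate is case (a): when we relocate $e$ to $g$, the far endpoint of $g$ might already have degree $2$ in $H$; the fix is to follow the chain — relocate repeatedly along the structure, which terminates because $H$ is finite and each step either finishes or moves the "conflict" strictly. Making this termination/charge argument clean, and handling the interaction with end vertices and free vertices correctly, is where the care is needed; the normalization hypothesis is used precisely to kill the otherwise-irreducible cycle-component case.
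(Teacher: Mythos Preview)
Your approach is essentially the paper's, but you are manufacturing difficulty where none exists. The anticipated obstacle in case~(a)---that the far endpoint $w$ of the relocated edge $g$ might already have degree~$2$ in $H$---cannot occur: $w\notin V(Z)$ (since $Z$ is chordless in $G$), so $w$ and $v_i$ are inner vertices that are non-adjacent in $H$, and \Cref{lem:character_subgraph} would then forbid the edge $g=v_iw$ in $G$. Hence $w$ has $H$-degree at most~$1$, the single swap is always legal, no ``chain'' is needed, and no new cycle can form because $w$ lies in a component of $H$ disjoint from $Z$. The paper does exactly this: normalization gives every cycle vertex $G$-degree $\ge 3$, so some $u\in V(Z)$ has a $G$-neighbour $v\notin V(Z)$; the edge $uv$ must reuse one of $u$'s two cycle colors, and swapping that cycle edge for $uv$ breaks the cycle in one step. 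Your case~(b) is then subsumed: it is precisely the situation where no such $v$ exists, which normalization already excludes.
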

\begin{proof}
Let $H$ be a character graph of $G$. 
By definition, each edge of $H$ has a unique color. 
We claim that each connected component of a character graph can be converted to a path.
Suppose $H$ has a cycle $C$.
Since the graph is normalized, each vertex in $C$ has degree at least three. 
According to \cref{lem:character_subgraph}, the cycle $C$ cannot contain a chord in $G$, since a chord would be incident to two inner vertices of $H$. 
Thus, we consider a vertex $v\notin C$ which is adjacent to a vertex $u \in C$.
Let us denote the two neighbors of vertex $u$ in $C$, by $w_1$ and $w_2$. 
By definition, the edges $w_1 u$ and $w_2 u$ have different colors $c_1 \neq c_2$ and therefore $\chi(v u) \in \{c_1,c_2\}$. 
W.l.o.g, we assume $\chi(v u) = c_1$.
We replace $w_1 u$ by $v u$ in $H$ and we obtain a new character graph $H'$, which does not contain the cycle $C$. 
We iteratively apply this change for all cycles, until $H'$ becomes cycle-free and thus the lemma follows.
 \end{proof}

To further structure the character graph, we introduce a reachability measure.

\begin{definition}\label{def:scope}
    Let $v$ be a vertex of a character graph $H$ of $(G,\chi)$.
    The \emph{scope} of $v$ ($\scope(v)$) is the set of vertices defined inductively as follows within $G$:
    \begin{enumerate}[(i)]
        \item $v \in \scope(v)$.
        \item If $u \in \scope(v)$ and there is an edge $e = u u' \in E(H)$ for an inner vertex $u'$,
            then $V(\chi(e)) \subseteq \scope(v)$ (i.e., we include the color class of $\chi(e)$).\label{scope:color}
    \end{enumerate}
    We may choose a total ordering of the vertices and in \textsf{(\ref{scope:color})}, we always choose the smallest vertex that satisfies the properties.
    Let $\kappa \ge 0$ be an integer which is at most the number of color classes added and let $c_i$ be the color of the $i$-th color class added, for $1 \le i \le \kappa$.
    The \emph{scope graph} of $v$ and $\kappa$ for a given ordering is the graph $(\scope(v),F)$, where $F:= \bigcup_{i=1}^\kappa E(c_i)$. 
    We skip the ordering and say that subgraph of $G$ is a scope graph of $v$ and $\kappa$ if there exists an ordering for which it is a scope graph of $v$ and $\kappa$.
\end{definition}
Note that the (total) scope of a vertex does not depend on the chosen ordering.
The scope of a vertex captures a natural sequence of dependencies between edge colors.
In the following lemma, we show how to avoid free vertices in the scope of vertices, a property which is important in the proof of \Cref{lem:maxcolors}.

\begin{lemma}\label{lem:singlescope}
    Let $v$ be an inner vertex such that $vv' \in E(H)$ for an inner vertex $v'$.
    Each character graph $H$ of $(G,\chi)$ can be transformed into a character graph~$H'$ of $(G,\chi)$ such that there is no free vertex in $\text{scope}(v)$.
\end{lemma}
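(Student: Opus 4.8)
\textbf{Proof proposal for \Cref{lem:singlescope}.}

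The plan is to eliminate free vertices from $\scope(v)$ one at a time, by rerouting an edge of $H$ while preserving the character-graph property, and then argue the process terminates. First I would set up a potential to measure progress: among all character graphs of $(G,\chi)$ in which $v$ and $v'$ are still adjacent inner vertices, pick one, call it $H$, that minimizes the number of free vertices lying in $\scope(v)$ (and, as a secondary tie-breaker if needed, minimizes the total number of free vertices of $H$, or the number of connected components of $H$ — I will decide which secondary measure actually decreases once the reroute is worked out). Suppose for contradiction that some free vertex $z \in \scope(v)$ exists. By \Cref{def:scope}, $z$ was added to $\scope(v)$ because it lies in the color class $V(c_i)$ of some color $c_i$ whose representative edge $e = u u' \in E(H)$ has at least one inner endpoint; in particular $z$ sees color $c_i$ in $G$, so there is an edge $zw \in E(G)$ with $\chi(zw) = c_i$. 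Since $z$ is free in $H$, this edge is not in $H$.

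The key step is to add $zw$ to $H$ and remove another edge of the same color to keep exactly one representative per color. The color $c_i$ currently has representative $e = uu'$; I would replace $e$ by $zw$ (or, if $zw$ would raise some vertex's $H$-degree above $2$, carefully choose which endpoint-incident edge to drop instead). Concretely, since $H$ has maximum degree $2$, adding $zw$ can violate degree only at $w$ if $w$ already has two $H$-edges; but one of those two edges is (or can be taken to be) the old representative $e$ of $c_i$ when $w \in \{u,u'\}$, and if $w \notin \{u,u'\}$ we instead drop $e$ outright — after dropping $e$ the color $c_i$ still has representative $zw$, and $z$, previously free, now has degree $1$, so it is no longer a free vertex. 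I then need to check that the new graph $H'$ is still a valid character graph (one edge per color, covering all colors — immediate, since we swapped one $c_i$-edge for another), that $v,v'$ remain inner vertices adjacent in $H'$ (true as long as the dropped edge $e$ is not incident to $v$ or $v'$; if it is, I reroute using the mirror construction from \Cref{lem:cycle-free} — replace the other $C$-neighbor edge or pick a different witness edge in the color class — to keep the pair $vv'$ intact), and finally that the number of free vertices in $\scope(v)$ strictly decreased: $z$ is no longer free, and I must argue no new free vertex is created inside $\scope(v)$. Dropping $e=uu'$ can turn $u$ or $u'$ into a lower-degree vertex, but since $u'$ is inner and $u$ has degree $\ge 1$, at worst they become end vertices, not free vertices; and a vertex can only become free if it had degree $1$ and loses its sole edge, which would force $e$ to be a pendant-type edge of $H$ — a case I would handle by observing such a vertex lies outside $\scope(v)$ or by the normalized-graph structure (degrees $\ge 3$ in $G$ for vertices on cycles, leaves handled by \Cref{lem:pendant_unique}). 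This contradicts minimality, so no free vertex remains in $\scope(v)$.

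The main obstacle I anticipate is the bookkeeping around which edge to remove when the natural candidate $e$ is incident to $v$, $v'$, or to a vertex whose removal would itself create a fresh free vertex somewhere in $\scope(v)$ — in other words, showing the reroute can \emph{always} be performed so that the chosen potential strictly drops without collateral damage. I expect this to require a short case analysis on the position of $z$, $w$, $u$, $u'$ relative to the path/forest structure of $H$ guaranteed by \Cref{lem:cycle-free}, leaning on \Cref{lem:character_subgraph} to rule out the bad adjacencies (e.g.\ two inner vertices of $H$ that are non-neighbors cannot be adjacent in $G$, which constrains where $w$ can sit) and on the fact that $G$ is normalized so that degree-$1$ and degree-$2$ vertices are essentially absent or controlled. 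Once the single-step reroute is shown to decrease the potential, the lemma follows by minimality of the chosen $H$.
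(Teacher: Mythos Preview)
Your edge swap $e \mapsto zw$ is exactly the local move the paper uses, but the progress claim fails precisely at the point you flag as ``the main obstacle'', and both fixes you sketch are wrong. You write that after dropping $e = uu'$, ``since $u'$ is inner and $u$ has degree $\ge 1$, at worst they become end vertices, not free vertices''. The scope rule in \Cref{def:scope} does force $u'$ to be inner, but it says nothing about $u$: the vertex $u$ entered $\scope(v)$ through an \emph{earlier} color class and may perfectly well be an end vertex of $H$, in which case removing its only $H$-edge $e$ leaves it free. Your fallback that this new free vertex ``lies outside $\scope(v)$'' is exactly backwards --- $u$ was already in $\scope(v)$ before the $\kappa$th color class was even added --- and the appeal to normalization is irrelevant, since normalization constrains degrees in $G$, not in $H$. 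In this case the swap removes the free vertex $z$ but creates the free vertex $u$, so neither your primary count nor either proposed tie-breaker (total free vertices, number of components) decreases.

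The ingredient you are missing is the \emph{depth} $\kappa$ at which the free vertex first appears in the scope construction. The paper takes $w$ to be the first free vertex reached as color classes are added, performs the swap, and observes that if a new free vertex $u$ is created then $u$ lies in a strictly earlier color class, hence at depth strictly less than $\kappa$; one recurses on $u$. The recursion bottoms out when the edge under consideration is the one emanating from $v$ itself, where both endpoints ($v$ and $v'$) are inner by hypothesis, so its removal cannot create a free vertex and the total free-vertex count drops by one. Iterating then finishes. As a side remark, your insistence on restricting to character graphs in which $v,v'$ remain adjacent inner vertices is unnecessary and gets in the way: the lemma only asks for some $H'$ with no free vertex in $\scope_{H'}(v)$, and the base step of the paper's argument may well remove the edge $vv'$.
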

\begin{proof}

    Suppose $H$ does not have the described properties. 
    Then there is an inner vertex~$v$ that has a free vertex in its scope.
    Let $\kappa$ be the number of color classes added to the scope graph of $v$ until the first free vertex is reached.
    We show by induction on $\kappa$ that we can reduce the number of free vertices.
    By iteratively applying the argument, the lemma follows since there are at most $n$ free vertices in $H$.

    Note that $\{v,v'\} \subseteq \scope(v)$ and if $\scope(v) = \{v,v'\}$ there is trivially no free vertex in $\scope(v)$.
    Thus $\kappa \ge 1$ and let $C$ be the $\kappa$th color class added to the scope graph of $v$. 
    
    Let $c$ be the color of $C$, $w$ the first free vertex reached and
    let $e = u u'$ be the edge of $H$ of color $c$, according to \Cref{def:scope}.

    Thus $e \in E(H)$ and
    there is an edge $f = w w'$ in the scope graph of $v,\kappa$ with $\chi(f) = c$.
    We add $f$ to $H$ and remove $e$ from $H$, which leads to a new valid character graph.
    Observe that $w'$ can only be an inner vertex if $w' \in \{u,u'\}$, by \Cref{lem:character_subgraph}. 
    In addition, $w'$ is either an end vertex or another free vertex.
    We have therefore simply extended a path in $H$ or created a new path.
    (In particular, $f$ cannot close a cycle since $w$ is a free vertex.)

    By construction, $u'$ is an inner vertex. Removing $e$ therefore turns $u'$ into an end vertex.
    
    If $u$ is an inner vertex (before modifying $H$), it also becomes an end vertex and we have reduced the number of free vertices.
    Otherwise, $u$ is an end vertex and becomes a free vertex. 
    However, in that case $u$ is contained in a color class $C'$ created earlier in $\scope(v)$ according to \Cref{def:scope}.

    In both cases we therefore make progress: we either directly reduce the number of free vertices or we ``move'' the free vertex closer to $v$.
    Eventually, if the considered color class is that determined by $v$, we use that $v$ is an inner vertex and thus cannot be turned into a free vertex.
 \end{proof}

We further extend the notion of scope to a set of vertices, which is not merely the union of scopes. 
\begin{definition}\label{def:setscope}
    Let $S$ be a set of vertices of a character graph $H$ of $(G,\chi)$.
    If $S = \{v\}$, we define the scope as $\scope(v)$ and define the scope graph accordingly.
    For $|S| > 1$, the scope of $S$ ($\scope(S)$) is the set of vertices defined inductively as follows within $G$:
    \begin{enumerate}[(i)]
        \item $\scope(v) \subseteq \scope(S)$ for each $v \in S$.
        \item \label{scopeset:one} Let $uu' \in E(H)$ be the only edge of a path in $H$. 
             Let $\kappa,\kappa'$ be numbers and $U,U' \subset S$ sets with $U \cap U' = \emptyset$.
            If $u$ is in a scope graph of $U$ for $\kappa$ and $u'$ is in a scope graph of $U'$ for $\kappa'$ such that the colors of the two graphs are disjoint, then $V(\chi(uu')) \subseteq \scope(U \cup U')$.
        \item If $u \in \scope(U)$ for $U \subseteq S$ and there is an edge $e = u u' \in E(H)$ for an inner vertex $u'$,
            then $V(\chi(e)) \subseteq \scope(U)$. \label{scopeset:color}
    \end{enumerate}
    If $T$ is a scope graph for $U$ and $\kappa$ and the color class of $c$ is added, then
    $(V(T) \cup V(\chi(c)),E(T) \cup E(\chi(c)))$ is a scope graph for $U$ and $\kappa+1$.
    If additionally $T'$ is a scope graph of $U'$ and \textsf{(\ref{scopeset:one})} applies with respect to the two scope graphs,
    $((V(T) \cup V(T') \cup V(\chi(uu')),E(T) \cup E(T') \cup E(\chi(uu')))$ is a scope graph for $U \cup U'$ and $\kappa + \kappa' + 1$.
\end{definition}
We now strengthen \Cref{lem:singlescope}.
\begin{lemma}\label{lem:scope}
    Let $S$ be a set of vertices such that each $v \in S$ is an inner vertex such that $vv' \in E(H)$ for an inner vertex $v'$.
    Each character graph $H$ of $(G,\chi)$ can be transformed into a character graph~$H'$ of $(G,\chi)$ such that there is no free vertex in $\text{scope}(S)$.
\end{lemma}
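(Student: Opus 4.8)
\textbf{Proof plan for \Cref{lem:scope}.}
The plan is to mimic the inductive strategy of \Cref{lem:singlescope}, but carefully adapted to the richer notion of set-scope from \Cref{def:setscope}. As in the single-vertex case, I would assume for contradiction that $H$ cannot be transformed to remove all free vertices from $\scope(S)$, and pick a character graph $H$ minimizing the number of free vertices lying in $\scope(S)$. Since $H$ is normalized, I would first invoke \Cref{lem:cycle-free} to assume $H$ is cycle-free, so that every component is a path. The goal is to find a rerouting move that either strictly decreases the number of free vertices in $\scope(S)$, or ``moves'' a free vertex strictly closer to $S$ along the dependency chain encoded by \Cref{def:setscope}, contradicting minimality by an inner inductive argument.

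First I would set up the induction on the number $\kappa$ of building steps (color-class additions of type \textsf{(\ref{scopeset:color})} and path-edge merges of type \textsf{(\ref{scopeset:one})}) needed to reach the first free vertex $w$ in $\scope(S)$. The base case is that $w$ already lies in $\scope(v)$ for some single $v \in S$; here \Cref{lem:singlescope} applies verbatim and removes $w$. For the inductive step I would distinguish which rule of \Cref{def:setscope} first brought $w$ into $\scope(S)$: either (a) $w$ became reachable via rule \textsf{(\ref{scopeset:color})}, following an edge $e = uu'$ of $H$ from an already-reached vertex $u$ into the color class $\chi(e)$ containing $w$; or (b) $w$ appeared via rule \textsf{(\ref{scopeset:one})}, i.e.\ $w$ lies in the color class of a singleton-path edge $uu'$ where $u$ and $u'$ were reached from disjoint sub-scopes. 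In case (a), I would mirror the argument of \Cref{lem:singlescope}: swap the edge $e$ out of $H$ and swap in an edge $f = ww'$ of color $\chi(e)$ from the relevant scope graph; by \Cref{lem:character_subgraph}, $w'$ can only be an inner vertex if $w' \in \{u,u'\}$, so this rerouting only extends a path or starts a new one and creates no cycle; removing $e$ turns the inner vertex $u'$ into an end vertex, and either $u$ was inner (so we strictly dropped a free vertex) or $u$ was an end vertex that now becomes free but is provably closer to $S$ in the dependency order, driving the inner induction. In case (b), the key new point is that the merged edge $uu'$ is the sole edge of its path, so swapping in $f = ww'$ and removing $uu'$ leaves $u$ and $u'$ both as free vertices; but both $u$ and $u'$ were reached from earlier, strictly smaller sub-scopes, so by the induction hypothesis applied to those sub-scopes we may assume they are not free — or we recurse to push the free vertices inward. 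Finally, as in \Cref{lem:singlescope}, the recursion terminates because the innermost color classes are those determined directly by vertices of $S$, and each such vertex is by hypothesis an inner vertex with an inner neighbour $v'$, hence cannot be turned into a free vertex; so progress cannot stall.

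The main obstacle I anticipate is rule \textsf{(\ref{scopeset:one})}: unlike the single-vertex scope, the set-scope can ``bridge'' two independent dependency chains across a lone path-edge $uu'$, and a naive reroute there destroys \emph{two} reachability witnesses at once rather than one. Handling this requires being precise about the bookkeeping — tracking, for the free vertex $w$, a well-founded measure (e.g.\ the lexicographically ordered pair consisting of the number of free vertices in $\scope(S)$ and the total $\kappa+\kappa'$ of building steps to reach $w$) that strictly decreases under every rerouting move, including the two-sided move in case (b). A second, more minor subtlety is ensuring that after a reroute the modified graph $H'$ is still a valid character graph and that $\scope(S)$ for $H'$ has not grown in a way that reintroduces free vertices elsewhere; the cycle-freeness from \Cref{lem:cycle-free} together with the \Cref{lem:character_subgraph} constraint on which vertices can be inner keeps this under control, but it must be checked explicitly for each of the cases above.
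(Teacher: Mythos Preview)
Your proposal is correct and follows essentially the same approach as the paper: handle rule \textsf{(\ref{scopeset:one})} by swapping $uu'$ for an edge $f=ww'$ of the same color, accept that this creates two new free vertices $u$ and $u'$, and then push each of them backward into its own strictly earlier sub-scope (the paper relies on $U \cap U' = \emptyset$ here, you phrase it as recursing on the smaller sub-scopes), with the base case delegated to \Cref{lem:singlescope}. The paper's proof is much terser---it devotes one paragraph and simply asserts the recursion works---whereas you spell out the well-founded measure and the validity checks; but the underlying argument is the same.
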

\begin{proof}
We only have to handle condition \textsf{(\ref{scopeset:one})}. The remaining proof is analogous to \Cref{lem:singlescope}.
Let $c := \chi(uu')$ and suppose that $w$ is a free vertex incident to an edge $e$ colored $c$.
We then add $e$ to $E(H)$ and remove $uu'$.
The removal creates two free vertices.
However, since $U \cap U' = \emptyset$, we can recursively move the free vertices to previous color classes.
If $|U| = 1$ or $|U'| = 1$, we apply \Cref{lem:singlescope}.
 \end{proof}

We say that a character graph is \emph{nice} if it is (i) cycle-free and (ii) there is no free vertex in  $\scope(S)$ for an arbitrary set $S \subseteq V(G)$ such that $vv' \in E(H)$ for $v \in S$ and an inner vertex $v'$.
The following lemma gives some guarantees for the existence of such a character subgraph.
  
\begin{lemma}
  Each normalized connected graph $G$ with optimal coloring $\chi$ has a nice character graph $H$.
\end{lemma}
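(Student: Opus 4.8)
The statement combines \Cref{lem:cycle-free} (existence of a cycle-free character graph) with \Cref{lem:scope} (removal of free vertices from the scope of a set of inner-inner vertices). The plan is to take the two properties in turn and argue that enforcing the second does not destroy the first, so that both hold simultaneously.

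First I would start with any character graph $H$ of $(G,\chi)$ and apply \Cref{lem:cycle-free} to obtain a cycle-free character graph $H_0$. Next I would take $S$ to be the set of \emph{all} inner vertices $v$ of $H_0$ such that $vv' \in E(H_0)$ for some inner vertex $v'$ (equivalently, the inner vertices lying on a path of length at least two, i.e.\ not on a path consisting of a single edge), and apply \Cref{lem:scope} to $S$ to obtain $H_1$ with no free vertex in $\scope(S)$. The key point to verify is that the transformation in \Cref{lem:scope} keeps the graph cycle-free: inspecting its proof (and that of \Cref{lem:singlescope}), every step replaces an edge $e$ incident to a free vertex $w$ with another edge of the same color that is also incident to $w$, so the operation only extends a path or starts a new one at $w$ — it can never close a cycle, since $w$ has degree $0$ before the swap. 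Hence $H_1$ is still cycle-free and condition (i) of \enquote*{nice} is preserved.

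The remaining subtlety is that \enquote*{nice} quantifies over \emph{every} set $S \subseteq V(G)$ of the stated form, not just the maximal one I chose. Here I would argue monotonicity: for $S' \subseteq S$, the inductive rules of \Cref{def:setscope} give $\scope(S') \subseteq \scope(S)$ (every derivation witnessing membership in $\scope(S')$ is also a valid derivation for $\scope(S)$, since the conditions only reference disjointness of the chosen subsets $U,U'$, which is unaffected by enlarging the ambient set). Therefore once the maximal $S$ has no free vertex in its scope, neither does any $S'\subseteq S$; and any set $S'$ of inner-inner vertices is contained in the maximal such set. This establishes condition (ii) for all admissible $S$ at once.

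The main obstacle I anticipate is precisely the interaction between the two cleanup procedures: one must be certain that re-running the scope-cleanup of \Cref{lem:scope} does not reintroduce a cycle, and conversely that it does not matter that \Cref{lem:cycle-free} was applied first. The cleanest way to handle this is the observation above — that scope-cleanup swaps are strictly path-extending at a degree-$0$ endpoint — together with the remark that the cycle-removal step of \Cref{lem:cycle-free} only ever redirects an edge at an inner vertex of a cycle, which after the first pass is irrelevant since no cycles remain. Thus the two operations do not conflict, and applying \Cref{lem:cycle-free} once followed by \Cref{lem:scope} once yields a character graph that is simultaneously cycle-free and free-vertex-free in all relevant scopes, i.e.\ a nice character graph. \qed
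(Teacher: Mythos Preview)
Your approach is essentially the paper's: apply \Cref{lem:cycle-free}, then \Cref{lem:scope}, and observe that the swap steps in the latter never close a cycle because the newly inserted edge always has a free vertex as one endpoint. The paper's own proof is a two-line appeal to exactly these two lemmas plus that observation; your write-up is simply a more explicit version of the same argument, and your added monotonicity remark for the universal quantifier over $S$ is a detail the paper leaves implicit.

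One small slip in your description: in the proof of \Cref{lem:singlescope} the \emph{removed} edge $e = uu'$ is not incident to the free vertex $w$; it is the \emph{added} edge $f = ww'$ that is. Your conclusion is still correct, since what matters for cycle-freeness is that the inserted edge has a degree-$0$ endpoint, but you should swap the roles of $e$ and $f$ in that sentence. A second, subtler point worth tightening: after running \Cref{lem:scope} the set of inner vertices of $H_1$ need not coincide with that of $H_0$, so your monotonicity step (``any admissible $S'$ is contained in the maximal $S$'') implicitly assumes the admissible sets do not grow; the paper does not address this either, and the intended reading is that one may iterate the cleanup if necessary, but you might want to say so explicitly.
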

\begin{proof}
The lemma follows directly from \Cref{lem:cycle-free} and \Cref{lem:scope}, noting that in the proof of \Cref{lem:scope} we do not introduce new cycles. 
Furthermore, given a normalized graph $G$ and the coloring provided by \Cref{lem:pendant_unique}, 
we may assume that $H$ is a nice character graph where all pendant vertices of $G$ are endpoints of paths in $H$.
 \end{proof}

\subsection{The proof of \Cref{lem:maxcolors}}

With the preparation of \Cref{sec:character}, in this section we prove our main lemma.
\maxcolors*

Let $G$ be a normalized problem instance and $H$ a nice character graph of $(G,\chi)$.
Let~$F$ be the set of free vertices, $T$ the set of end vertices and $I$ the set of inner vertices of~$H$.
We define $n := |V(H)|$, $i := |I|$, $f := |F|$ and $t := |T|$, and clearly $n = i+t+f$.
We show that there is a mapping $\iota$ from the set $I$ of inner vertices to $T \cup F$ with the property that for each vertex $v$ from $T \cup F$ there is at most one inner vertex mapped to $v$ if $v \in T$ and at most two inner vertices are mapped to $v$ if~$v \in F$.
Intuitively, we can see $\iota$ as an injective mapping, where each vertex in $F$ is split into two vertices.
The reason is that we can see a free vertex as a path of length zero and we count two end vertices for each path.
Furthermore, we maintain that $\iota$ never maps an inner vertex to a pendant vertex of $G$.
We first show that the mapping implies \Cref{lem:maxcolors}.
\begin{lemma}\label{lem:if-iota}
    Suppose $\iota$ exists. Then no feasible coloring has more than $3n/4 - \ell/4$ colors.
\end{lemma}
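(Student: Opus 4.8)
The plan is to count colors via the character graph $H$ and use the mapping $\iota$ to bound the number of inner vertices.

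First I would observe that the number of colors $|\chi|$ equals the number of edges of $H$, since $H$ contains exactly one edge per color class. In a forest $H$ on $n$ vertices whose components are paths (and isolated vertices), the number of edges is $n$ minus the number of components. Each path component with $k \ge 1$ edges contributes $k+1$ vertices and $k$ edges, so it has exactly $2$ end vertices; each isolated vertex is a component with $0$ edges. Hence, writing $p$ for the number of path components, we have $|\chi| = n - p - f$, where the $-f$ accounts for the isolated (free) vertices being components with no edges, and $t = 2p$, so $p = t/2$. Therefore $|\chi| = n - t/2 - f = i + t + f - t/2 - f = i + t/2$.

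Next I would use the mapping $\iota\colon I \to T \cup F$ to bound $i$. Since each vertex of $T$ receives at most one inner vertex and each vertex of $F$ receives at most two, we get $i \le t + 2f$. Moreover, $\iota$ never maps an inner vertex to a pendant vertex of $G$; since (by the preparation lemmas) all $\ell$ pendant vertices of $G$ are endpoints of paths in $H$, they lie in $T$ and receive nothing, so in fact $i \le (t - \ell) + 2f$. Substituting into $|\chi| = i + t/2$ gives $|\chi| \le (t - \ell) + 2f + t/2 = \tfrac{3t}{2} + 2f - \ell$. Now I combine this with $n = i + t + f$: from $i \le (t-\ell) + 2f$ we get $n \le (t - \ell) + 2f + t + f = 2t + 3f - \ell$, i.e. $t \ge \tfrac{n + \ell - 3f}{2}$ is the wrong direction, so instead I would eliminate $i$ directly. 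Write $|\chi| = i + t/2$ and add a suitable multiple of the constraint. Using $2|\chi| = 2i + t$ and $n = i + t + f$, we get $2|\chi| - n = i - f$, hence $|\chi| = \tfrac{n + i - f}{2}$. Plugging $i \le (t - \ell) + 2f = (n - i - f - \ell + f) + 2f = n - i - \ell + 2f$ gives $2i \le n + 2f - \ell$, i.e. $i \le \tfrac{n + 2f - \ell}{2}$, and therefore
\[
|\chi| = \frac{n + i - f}{2} \le \frac{n}{2} + \frac{1}{2}\cdot\frac{n + 2f - \ell}{2} - \frac{f}{2} = \frac{3n}{4} - \frac{\ell}{4}.
\]

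The main obstacle in this argument is purely bookkeeping: making sure the identities $|\chi| = i + t/2$ and $n = i + t + f$ are set up correctly (in particular that free vertices of $H$ are genuinely counted as zero-edge components and that $\iota$'s codomain is $T$ together with "two copies" of each free vertex), and that the exclusion of the $\ell$ pendant vertices from the image of $\iota$ is applied at the right place. The existence and properties of $\iota$ itself — the genuinely hard part — is assumed here per the statement of \Cref{lem:if-iota}; this lemma only has to verify that those properties yield the claimed bound, which is the short computation above.
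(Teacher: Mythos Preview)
Your proof is correct and follows essentially the same approach as the paper: both use $|\chi| = n - (\text{number of components of } H) = n - f - t/2$, the constraint $i \le (t-\ell) + 2f$ coming from $\iota$, and $n = i+t+f$, and then combine these linearly. The paper phrases the last step as lower-bounding the number of components $f + t/2$ subject to $3f + 2t \ge n+\ell$ (obtaining $f+t/2 \ge (n+\ell)/4$), whereas you eliminate variables algebraically via $|\chi| = (n+i-f)/2$; these are the same computation.

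One small slip worth fixing: when you substitute $t = n-i-f$ into $(t-\ell)+2f$ you should get $n-i+f-\ell$, hence $2i \le n+f-\ell$, not $2i \le n+2f-\ell$ as written (an extra $+f$ crept into the parenthetical). Your version is a weaker inequality and therefore still true, so the final bound $|\chi|\le 3n/4 - \ell/4$ is unaffected; with the correct constraint you would even obtain the slightly sharper $|\chi|\le 3n/4 - f/4 - \ell/4$.
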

\begin{proof}
    The mapping $\iota$ implies $2f + t - \ell \geq i = n-t-f$ and thus $3f + 2t \ge n + \ell$
    since $\ell$ out of $t$ end vertices cannot be used for the assignment.
    Each free vertex and each path is a component of the character graph $H$. 
    Therefore the number of components is $f + t/2$, which is minimized if $f = 0$ and $t = (n+\ell)/2$. 
    Thus there are at least $\frac{n+\ell}{4}$ components in $H$ and
    the number of colors is at most $n - n/4 - \ell/4 = 3n/4 - \ell/4$, completing the proof.
 \end{proof}
  
We now construct the mapping $\iota$.
We associate the vertices with distinct natural numbers $\{1,2,\dotsc,n\}$ and define $\iota$ iteratively.
An end vertex is \emph{saturated} if there is an inner vertex mapped to it and a free vertex is saturated if there are \emph{two} inner vertices mapped to it.
While there are unassigned inner vertices, we continue the following process.
Let $v \in I$ be the unassigned vertex with the smallest index.
We define the set
$U_v := \{u \in F \mid uv \in E(G) \setminus E(H) \text{ and } |\iota^{-1}(u)| \leq 1\} \cup  \{u \in T \mid uv \in E(G) \setminus E(H) \text{ and } \iota^{-1}(u) = \emptyset \}$,
that is, the set of unsaturated free- and end vertices adjacent to $v$ via an edge \emph{outside} of~$H$.
We remark that $v$ and $u$ are allowed to be in the same path of $H$, as long as they are not adjacent in~$H$.
If $U_v \neq \emptyset$, we set $\iota(v) := \min_{u \in U_v} u$. 
Clearly, if $U_v \neq\emptyset$, we find a valid mapping for~$v$ and can continue. 
If $U_v = \emptyset$, we add $v$ to a set $Q$ of postponed vertices and continue with the next vertex.

To finish the construction, we have to map the postponed vertices.
Recall that if $v$ is a vertex in~$Q$, then $U_v = \emptyset$ holds.
We then find a vertex $u$ to map $v$ to by growing a \textit{plain} cactus: a plain cactus is a triangular cactus without needles that is a degree-$4$ bounded subgraph $C$ of the problem instance $G$ where each vertex of $C$ that is connected to $G \setminus C$ can have any number of adjacent vertices in $V(G) \setminus V(C)$, as opposed to  \textit{one} in case of a \textit{simple} cactus.
In particular, we will be growing a plain cactus which is not a simple cactus.
As a simple cactus, a plain cactus can have vertices adjacent in $G$. We call the edge between these vertices a \emph{cactus chord}. 

To gain some intuition, we first argue how to grow an initial triangle of the cactus (see also \Cref{fig:scope_cactus}).
There are no degree-two vertices in $G$, therefore $v$ has a neighbor $u'$ in $G$ that is not a neighbor of $v$ in $H$.
Due to \Cref{lem:character_subgraph}, $u'$ is not an inner vertex.
We note that $u'$ cannot be a free vertex either: 
if $u'$ was a free vertex and we could not map $v$ to~$u'$, then~$u'$ would be saturated and $v$ would be the third inner vertex adjacent to $u'$.
By \Cref{lem:scope}, however, that would mean $u'$ seeing 3 colors, as each of the three inner vertices would be connected to $u'$ with different colors, contradicting the feasibility of $\chi$.
Hence $u'$ is an end vertex.
    
Since we cannot map $v$ to~$u'$, there must be another vertex $\hat{v}$ already mapped to $u'$.
Let $P$ denote the path of $u'$ and let~$u''$ be the vertex adjacent to $u'$ in $P$.
Observe that $u'' \notin \{v,\hat{v}\}$ because otherwise the edge $u'v$ or $u'\hat{v}$ would be contained in $E(H)$, but both have to be in $E(G)\setminus E(H)$ in order to be considered to map to $v$ or $\hat{v}$, respectively.
Since $v$ and $\hat{v}$ are inner vertices, the colors of $u'v$ and $u'\hat{v}$ have to be from $\chi(v)$ and $\chi(\hat{v})$, respectively.

By \Cref{lem:character_subgraph}, $\chi(u'v) \neq \chi(u'u'')$ and $\chi(u'\hat{v}) \neq \chi(u'u'')$.
Therefore $c := \chi(u'v) = \chi(u'\hat{v})$ and, again by \Cref{lem:character_subgraph}, $v$ has an incident edge $e$ and $\hat{v}$ has an incident edge~$\hat{e}$ in $H$ with $\chi(e) = \chi(\hat{e}) = c$, which implies $e = \hat{e}$.
Therefore, $v$ and $\hat{v}$ are neighbors in the same path $P'$ of $H$; note that $P'$ is not necessarily distinct from $P$.

\begin{figure}[tb]
  \centering
  \includegraphics[scale=0.39]{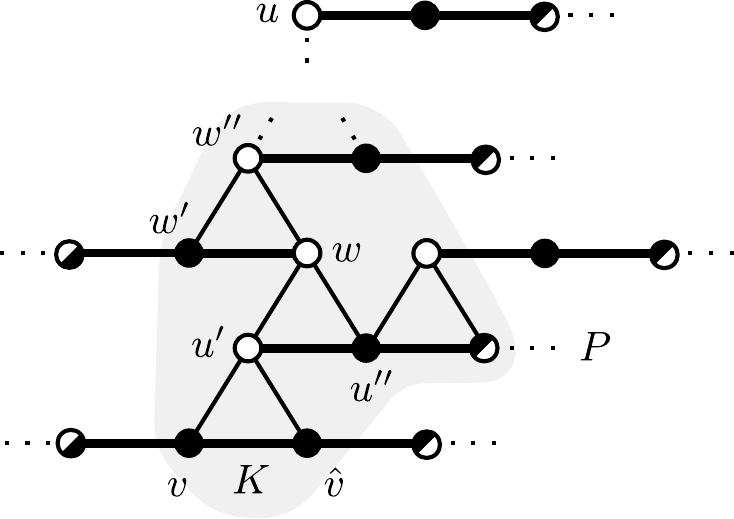}
  \caption{End vertices are marked by hollow circles, inner vertices are marked by filled circles.
  A half-filled vertex can be either an inner- or an end vertex.\label{fig:scope_cactus}}
\end{figure}

We observe that if there are only the edges from $H$ and the edges from the triangle formed by $v,\hat{v},u'$ incident to these three vertices, 
we have a simple cactus, i.e., a cactus of the form removed by Modification~3.
Therefore there is another vertex adjacent to~$v$,~$\hat{v}$, or~$u'$.
We now argue that no matter which vertex has another adjacent vertex, we can either extend the plain triangular cactus or find the aimed-for vertex $u$. 

More precisely, starting from $V(K) := \{v\}$ 
we grow a set of vertices $V(K)$; this process is shown in detail as \Cref{alg:mapping}.
Formally, we also grow an edge set $E(K)$ such that $K$ is the aimed-for cactus. 
For a cactus $K$ let $N(K) := \{\tilde{v} \in V(G) \setminus V(K) \mid \text{there is a vertex } \tilde{u} \in V(K) \text{ with }\tilde{u}\tilde{v} \in E(G)\}$.
\begin{algorithm}
    \caption{Mapping a vertex $v \in Q$.\label{alg:mapping}}
    \begin{algorithmic}[1]
        \State Let $V(K) := \{v\}$ and $E(K) := \emptyset$;
        \While{$\iota(v)$ is not yet determined}
        \If{$N(K)$ contains a not saturated free vertex or end vertex $u$}
        \State $\iota(v) := u$;
        \Else
        \State Find $w \in V(K)$ and $w',w'' \in V(G) \setminus V(K)$ 
            with $w' \neq w''$, $ww' \in E(H), ww'' \in E(G)$;
        \State $V(K) := V(K) \cup \{w',w''\}$ and $E(K) := E(K) \cup \{ww',w'w'',w''w\}$;
        \If{$w' \in V(K')$ for a previously considered cactus $K'$}
        \State $V(K) := V(K) \cup V(K') \cup \{w''\}$ and $E(K) := E(K) \cup E(K') \cup \{ww',ww'',w'w''\}$;
        \EndIf
        \EndIf
        \EndWhile
    \end{algorithmic}
\end{algorithm}

We will show that the cactus constructed by \Cref{alg:mapping} satisfies the following invariants.
\begin{enumerate}
    \item The graph $K$ is a plain cactus. \label{inv:cactus}
    \item Each inner vertex of $K$ except for $v$ is mapped to an end vertex within $V(K)$. \label{inv:closed}
    \item For each end vertex of $K$, there is an inner vertex in $V(K)$ mapped to it. \label{inv:symmetric}
    \item All triangles of $K$ are monochromatic.\label{inv:monochromatic}
    \item For each color $c$ in $K$, there is a color-$c$ edge $e$ in $K$ that is also in $E(H)$. \label{inv:H}
    \item $V(K) \subseteq \text{scope}(Q)$.\label{inv:scope}
    \item $V(K)$ does not include free vertices. \label{inv:no_free}
    \item Each degree-$2$ vertex of $K$ is incident to an edge from $E(H) \setminus E(K)$. \label{inv:degree-2}
\end{enumerate}

We first show the invariants assuming that $V(K) \cup N(K)$ does not contain vertices from previously constructed cacti.
Observe that the initial cactus with $V(K) = \{v\}$ satisfies all invariants.
From now on we assume that $K$ is a cactus constructed during the execution of \Cref{alg:mapping} and $K$ satisfies all invariants. 

Our proof is based on the following technical lemmas.
 
\begin{lemma}\label{lem:two-incident-colors}
    Let $u$ be an inner or a saturated vertex.    
    Then $|\chi(u)| = 2$.
    Furthermore, if $u \notin V(K)$ and it is either an inner vertex or $\iota^{-1}(u) \cap V(K) = \emptyset$,
    then $\chi(u) \cap \chi(K) = \emptyset$.
\end{lemma}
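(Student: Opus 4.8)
The first statement asks us to show $|\chi(u)| = 2$ whenever $u$ is an inner vertex or a saturated vertex. If $u$ is inner, it has degree $2$ in $H$, so it is incident to two edges of distinct colors, giving $|\chi(u)| \ge 2$; feasibility of $\chi$ forces $|\chi(u)| = 2$. If $u$ is a saturated vertex, the plan is to unfold the definition of saturation: either $u \in T$ and there is some inner vertex $z$ with $\iota(z) = u$, or $u \in F$ and there are two inner vertices $z_1, z_2$ with $\iota(z_i) = u$. In either case, by construction of $\iota$ (both in the main loop via $U_v$ and in \Cref{alg:mapping}), each inner vertex $z$ mapped to $u$ is adjacent to $u$ via an edge $z u \in E(G) \setminus E(H)$, and since $z$ is inner, by \Cref{lem:character_subgraph} the color $\chi(zu)$ must lie in $\chi(z)$ and cannot equal the color of the $H$-edge at $u$ (if $u \in T$) — so already one color is forced on $u$ from outside $H$. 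Combined with the at least one color $u$ sees from its own incident $H$-edge (if $u \in T$) or from the two distinct colors forced by two mapped inner vertices (if $u \in F$, using the argument already spelled out in the text that two inner neighbours of a free vertex via non-$H$-edges force two distinct colors, and a third would violate feasibility), we get $|\chi(u)| \ge 2$, and feasibility again gives equality. I expect this part to be essentially a careful bookkeeping of the mapping's construction.

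For the second statement, suppose $u \notin V(K)$ and either $u$ is an inner vertex, or $\iota^{-1}(u) \cap V(K) = \emptyset$ (i.e., no inner vertex of $K$ is mapped to $u$). We must show $\chi(u) \cap \chi(K) = \emptyset$. The plan is to argue by contradiction: suppose some color $c \in \chi(u) \cap \chi(K)$. By invariant \Cref{inv:H}, $c$ is the color of some edge $e \in E(K) \cap E(H)$, and by invariant \Cref{inv:monochromatic} the whole triangle of $K$ containing $e$ is colored $c$. Since $u$ sees color $c$, $u$ is incident (in $G$) to a color-$c$ edge; I want to locate the color class $V(c)$ and show $u \in V(c) \subseteq \text{scope}(\text{something already accounted for})$, deriving a contradiction with the structure we have built. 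Concretely: $u$ sees $c$ means $u \in V(c)$; the edge $e \in E(H)$ of color $c$ has at least one inner endpoint (by invariant \Cref{inv:degree-2}, or directly because $e$ lies in a monochromatic triangle of $K$ whose vertices have degree $\ge 2$ in $K$ and are incident to further edges), so by \Cref{def:scope}\,(\ref{scope:color}) the class $V(c)$ is pulled into the scope. Then one of two things happens: if $u$ is an inner vertex lying in $\text{scope}(Q)$-reachable territory, we contradict invariant \Cref{inv:scope} together with the niceness of $H$ (no free vertices in the scope) or the fact that $u \notin V(K)$ would have forced $u$ into $V(K)$ during the growth step; if $\iota^{-1}(u) \cap V(K) = \emptyset$, then $u$ being a saturated end/free vertex seeing $c$ would have made $u$ available as a target in $N(K)$ at the step where the triangle carrying $c$ was added — contradicting either that the algorithm terminated by a different branch, or that $u$ remained unmapped/unsaturated relative to $K$.

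\textbf{Main obstacle.} The hard part is the second statement, and specifically pinning down exactly which invariant or structural fact is violated in the contradiction — the argument has to simultaneously handle (a) $u$ being inner versus $u$ being a saturated non-$K$ vertex, and (b) the subtlety that $K$ is a \emph{plain} cactus that may have cactus chords and may absorb previously constructed cacti $K'$. I expect the cleanest route is to show that if $c \in \chi(u)$ and $c$ appears in $K$, then the color class $V(c)$ together with $u$ lies in $\text{scope}(V(K))$, and then invoke niceness of $H$ (\Cref{lem:scope}: no free vertex in the scope of the inner vertices of $K$ that sit on $H$-paths) to exclude $u \in F$, and invoke the greedy rule of \Cref{alg:mapping} (which always prefers to set $\iota(v) := u$ for an available saturated/end vertex in $N(K)$) to exclude the remaining cases — since $u \in V(c)$ would put $u$ within reach of $K$, contradicting that $u$ was never pulled in. I would also need to be careful that absorbing a previously constructed cactus $K'$ (the $w' \in V(K')$ branch) preserves all eight invariants, which the text defers; for this lemma I will simply cite invariants \Cref{inv:cactus,inv:monochromatic,inv:H,inv:scope,inv:no_free,inv:degree-2} as given and focus the argument on the color-scope reachability.
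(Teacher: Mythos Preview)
Your plan for the first claim ($|\chi(u)| = 2$) is essentially the paper's argument: inner vertices see two colors by definition, and a saturated end or free vertex sees the color of its own $H$-edge (if any) together with colors forced by the inner vertices mapped to it, which are distinct by the reasoning you sketch. That part is fine.

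The gap is in your plan for the second claim. You invoke Invariant~\ref{inv:H} correctly to locate the $H$-edge of color $c$ inside $E(K)$, but then you abandon the direct route and try to manufacture a contradiction via $\scope$ and the dynamics of \Cref{alg:mapping}. Neither branch of your proposed contradiction closes: if $u$ is an inner vertex lying in $\scope(Q)$, nothing is violated, since niceness only forbids \emph{free} vertices in the scope; and if $u$ is a saturated end or free vertex, the fact that $u$ sees $c$ does \emph{not} imply $u \in N(K)$, so the ``$u$ would have been available as a target'' argument does not apply.

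What you are missing is a one-line observation: a character graph has \emph{exactly one} edge of each color. Thus Invariant~\ref{inv:H} says not merely that \emph{some} $H$-edge of color $c$ lies in $E(K)$, but that \emph{the} unique $H$-edge of color $c$ lies in $E(K)$. Now trace where the colors of $\chi(u)$ come from. If $u$ is inner, both colors in $\chi(u)$ are colors of $H$-edges incident to $u$; if either were in $\chi(K)$, that $H$-edge would lie in $E(K)$ and hence $u \in V(K)$, a contradiction. If $u$ is a saturated end or free vertex with $\iota^{-1}(u) \cap V(K) = \emptyset$, then each color of $\chi(u)$ is either the color of the $H$-edge at $u$ (end-vertex case) or lies in $\chi(u')$ for some inner $u' \in \iota^{-1}(u)$, i.e., is the color of an $H$-edge incident to $u'$. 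In the first case the same argument gives $u \in V(K)$; in the second it gives $u' \in V(K)$, contradicting $\iota^{-1}(u) \cap V(K) = \emptyset$. This is exactly the paper's proof, and it avoids the scope machinery entirely for this lemma.
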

\begin{proof}
    If $u$ is an inner vertex, the claim is true by definition of inner vertices and Invariant~\ref{inv:H} 
    If $u$ is an end vertex, it is incident to an edge $e \in E(H)$, which is not in $E(K)$ if $u \notin V(K)$.
    Let $\{u'\} := \iota^{-1}(u)$.
    Since $e$ is not incident to $u'$ and $u'$ sees two colors, $|\chi(u)| = 2$.
    By Invariant~\ref{inv:H}, none of these colors are contained in $\chi(K)$.
    If $u$ is a free vertex, there are two vertices $u' \neq u''$ with $\iota^{-1}(u) = \{u',u''\}$.
    Due to Lemma~\ref{lem:scope}, $u'$ and $u''$ are not adjacent in $H$.
    Then the claim follows by applying the argument of the end-vertex case twice.
 \end{proof}

    The next lemma shows that growing a cactus is independent of vertices that connect two triangles.
\begin{lemma}\label{lem:degree4}
    Let $\tilde{u}$ be a degree-$4$ vertex of the cactus $K$. Then $\tilde{u}$ is also a degree-$4$ vertex in $G$.
\end{lemma}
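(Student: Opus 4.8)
The plan is to argue by contradiction: suppose $\tilde u$ is a degree-$4$ vertex of $K$ but has a neighbor $z \in V(G) \setminus V(K)$. By the construction in \Cref{alg:mapping}, a degree-$4$ vertex of $K$ is exactly a vertex that is shared by two triangles of the plain cactus $K$; by Invariant~\ref{inv:monochromatic} these two triangles are monochromatic, say with colors $c_1$ and $c_2$. Since $\tilde u$ lies on both triangles, $\chi(\tilde u) = \{c_1, c_2\}$, and both colors are ``used up'' at $\tilde u$ by edges of $K$. Hence any further edge $\tilde u z$ incident to $\tilde u$ must satisfy $\chi(\tilde u z) \in \{c_1, c_2\}$.

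The key step is then to examine where this edge $\tilde u z$ sits relative to the character graph $H$ and to derive a contradiction with the ``niceness'' of $H$ (no free vertices in $\scope(S)$) and with \Cref{lem:character_subgraph}. First I would recall that by Invariant~\ref{inv:H} each of $c_1, c_2$ has a representative edge in $E(K) \cap E(H)$, and since the triangles are monochromatic these representative edges are incident to $\tilde u$ (each monochromatic triangle contributes at least one $H$-edge, and by \Cref{lem:character_subgraph} two inner vertices of $H$ not adjacent in $H$ are non-adjacent in $G$, forcing the $H$-edge of a monochromatic triangle to be one of its three edges — and the ``natural'' choice at the shared vertex is incident to $\tilde u$). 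So $\tilde u$ is an inner vertex of $H$ with its two $H$-edges lying inside the two triangles of $K$. Now consider the edge $\tilde u z \in E(G) \setminus E(H)$, colored, say, $c_1$. Then by \Cref{lem:character_subgraph} the vertex $z$, being adjacent to the inner vertex $\tilde u$ via a color in $\chi(\tilde u)$, must have an incident $H$-edge of color $c_1$; but the unique $c_1$-colored $H$-edge is inside the $c_1$-triangle of $K$, both of whose other endpoints are in $V(K)$, contradicting $z \notin V(K)$. Thus no such $z$ exists and $\deg_G(\tilde u) = 4$.

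The main obstacle I anticipate is pinning down precisely that the $H$-representative edges of the two monochromatic triangles are incident to the shared vertex $\tilde u$, rather than being the ``far'' edge of one of the triangles. This needs care: a monochromatic triangle on vertices $\{a,b,\tilde u\}$ has three edges all of color $c_1$, and the character graph only keeps one of them. If $H$ kept the edge $ab$ (not incident to $\tilde u$), then $\tilde u$'s $H$-edges might both come from the \emph{other} triangle, making $\tilde u$ possibly an end vertex of $H$ and the argument above would need adjustment. I would handle this either by invoking the freedom to choose $H$ (as in \Cref{lem:cycle-free}, we may rechoose which edge of a monochromatic triangle is the representative, swapping $ab$ for an edge at $\tilde u$ without creating cycles or free vertices in scopes), or by directly using \Cref{lem:two-incident-colors}: since $\tilde u$ is incident to two monochromatic triangles it is certainly a saturated or inner vertex, so $|\chi(\tilde u)| = 2$ with $\chi(\tilde u) = \{c_1, c_2\}$, and then $\chi(\tilde u z) \in \{c_1,c_2\}$ gives $|\chi(\tilde u) \cup \chi(z)| \le \ldots$; more carefully, $z$ would see $c_1$ (say) plus at least one other color, while \Cref{lem:character_subgraph} forbids an edge between $\tilde u$ and $z$ once $|\chi(\tilde u) \cup \chi(z)| \ge 4$ — so $z$ must see only colors from $\{c_1,c_2\}$, and then tracing the $c_1$-edges through the monochromatic triangle structure of $K$ (using Invariant~\ref{inv:scope} and niceness of $H$) forces $z \in V(K)$, the desired contradiction. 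I would write up the cleanest of these, most likely the version routed through \Cref{lem:two-incident-colors} and \Cref{lem:character_subgraph}, since those are already in hand.
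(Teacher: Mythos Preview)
Your proposal has a genuine gap. First, a minor omission: you restrict to a fifth neighbor $z \in V(G)\setminus V(K)$, but nothing prevents the extra edge from being a cactus chord $\tilde u u'$ with $u'\in V(K)$; the paper treats this as a separate case.

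The substantive problem is that your color-counting route for $z\notin V(K)$ does not close. From $\chi(\tilde u)=\{c_1,c_2\}$ and the existence of the edge $\tilde u z$, \Cref{lem:character_subgraph} only yields $|\chi(\tilde u)\cup\chi(z)|\le 3$; it does \emph{not} force $\chi(z)\subseteq\{c_1,c_2\}$ as you write. The scenario you cannot exclude is that $z$ is an \emph{unsaturated} end vertex: by Invariant~\ref{inv:H} its unique $H$-edge has some color $c\notin\chi(K)$ (since $z\notin V(K)$), so $\chi(z)=\{c_1,c\}$ and $|\chi(\tilde u)\cup\chi(z)|=3$, no contradiction. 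Your claim that ``$z$ must have an incident $H$-edge of color $c_1$'' tacitly assumes $z$ is inner, and the scope/niceness fallback only excludes $z$ being free, not $z$ being an end vertex. What is missing is precisely the algorithmic observation the paper uses: since $\tilde u$ has degree~$4$ in $K$, a second triangle was grown at $\tilde u$, and \Cref{alg:mapping} grows a triangle only after verifying that $N(K)$ contains no unsaturated free or end vertex. Hence $z\in N(K)$ was already saturated at that moment; combined with Invariants~\ref{inv:closed} and~\ref{inv:symmetric} (so that $\iota^{-1}(z)\cap V(K)=\emptyset$), \Cref{lem:two-incident-colors} gives $\chi(z)\cap\chi(K)=\emptyset$, whence $|\chi(\tilde u)\cup\chi(z)|=4$ and \Cref{lem:character_subgraph} finishes. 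A purely structural argument cannot substitute for this step.
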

\begin{proof}
    By contradiction, suppose that the degree of $\tilde{u}$ is at least $5$ in $G$.
    Then $\tilde{u}$ is incident to two triangles of $K$ and it sees the two colors of these triangles.
    Let $u'$ be a fifth adjacent vertex.\\
    Case 1: Suppose $u' \notin V(K)$.
    Then $\tilde{u}u' \notin E(H)$ since both colors in $\chi(\tilde{u})$ have their edges from $H$ within the cactus, by Invariant~\ref{inv:H}.
    The same invariant and \Cref{lem:character_subgraph} imply that $u'$ is not an inner vertex.  
    Among the two triangles incident to $\tilde{u}$, the one added later to $V(K)$ was only added if $\tilde{u}$ cannot be mapped to $u'$.
    Due to Invariants \ref{inv:closed} and \ref{inv:symmetric}, previous mappings to $u'$ do not originate in $V(K)$. 
    
    There are two possible situations that can occur: $u'$ is either a saturated free vertex or a saturated end vertex.
    In both cases, it sees two distinct colors by \Cref{lem:two-incident-colors}.
    Note that it does not matter if $u'$ was mapped directly or using \Cref{alg:mapping}, due to Invariants~\ref{inv:monochromatic}, \ref{inv:H} and \ref{inv:degree-2}.
    In both cases $|\chi(\tilde{u}) \cup \chi(u')| = 4$ and thus there is no valid color for $\tilde{u}u'$ by \Cref{lem:character_subgraph}.

    Case 2: Suppose $u' \in V(K)$. Then $u'$ cannot be an inner vertex or a degree-$4$ vertex, by \Cref{lem:character_subgraph}.
    Since by Invariant~\ref{inv:no_free} $V(K)$ does not contain free vertices, it has to be an end vertex of degree $2$ in $V(K)$.
    However, by Invariants~\ref{inv:monochromatic}, \ref{inv:H} and \ref{inv:degree-2}, $u'$ then also sees two colors distinct from $\chi(\tilde{u})$. 
 \end{proof}

Furthermore, cactus chords do not influence the process.
\begin{lemma}\label{lem:degree3}
    Suppose $K$ has two vertices $u',u'' \in V(K)$ such that $u'u'' \notin E(K)$ but $u'u'' \in E(G)$, and $N(K)$ does not have unsaturated vertices.
    Then in $G$, the degrees of both $u'$ and $u''$ are three.
\end{lemma}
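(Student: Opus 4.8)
The plan is to argue by contradiction in the same style as \Cref{lem:degree4}: assume one of $u', u''$, say $u'$, has degree at least $4$ in $G$, and derive a contradiction with \Cref{lem:character_subgraph}. The starting point is that $u'$ and $u''$ are both in $V(K)$ and joined by an edge of $G$; by Invariant~\ref{inv:no_free} neither is a free vertex, and by \Cref{lem:character_subgraph} a vertex of $V(K)$ adjacent in $G$ to another vertex of $V(K)$ cannot be an inner vertex of $H$ (two inner vertices that are non-adjacent in $H$ are non-adjacent in $G$, and $u'u'' \notin E(K)$ so they are not $H$-neighbours inside a triangle either). Hence both $u'$ and $u''$ are end vertices of $H$, so each has degree exactly $2$ inside $K$ (they sit on a triangle), plus the cactus chord $u'u''$, giving degree at least $3$ in $G$ already. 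The claim is that this is exactly $3$.

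First I would record that, being an end vertex of $H$, each of $u',u''$ sees two colors: the monochromatic color of its triangle in $K$ (Invariant~\ref{inv:monochromatic}, \ref{inv:H}) and nothing else is forced yet — actually the cactus chord $u'u''$ must itself be colored, and since $u'$ already sees its triangle color $c'$, we get $\chi(u'u'') \in \{c', (\text{the }H\text{-color at }u')\}$. By \Cref{lem:two-incident-colors} (applicable since $u'$ is saturated — it has an inner vertex of $V(K)$ mapped to it by Invariant~\ref{inv:symmetric}), $|\chi(u')| = 2$, so $u'$ sees precisely its triangle color and one further color coming from the $H$-edge at $u'$; the chord color must be one of these two. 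The same holds at $u''$. Now suppose $u'$ has a fifth neighbour $z$ in $G$ (so degree $\ge 4$, counting the two triangle edges, the chord, and $z$). I would split into $z \in V(K)$ and $z \notin V(K)$.

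If $z \notin V(K)$: then $u'z \notin E(H)$, because the $H$-edge incident to the end vertex $u'$ — if not already the chord situation — would have to be accounted for, and more to the point Invariant~\ref{inv:H} together with the fact that $\chi(u')$ has both its colors realized by edges of $H$ that are inside $K$ (for the triangle color) or at $u'$ itself means $u'z$ would need a color in $\chi(u')$, forcing $z$ to see a color of $\chi(u')$; then by \Cref{lem:character_subgraph} $z$ is not an inner vertex, and since $N(K)$ has no unsaturated vertices (hypothesis), $z$ is a saturated free or end vertex, hence $|\chi(z)| = 2$ by \Cref{lem:two-incident-colors}, and by the ``furthermore'' part of that lemma $\chi(z) \cap \chi(K) = \emptyset$ — but $\chi(u')z$ needs a color in $\chi(u') \subseteq \chi(K)$, a contradiction, so no such $z$ exists. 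If $z \in V(K)$: by the same reasoning as above $z$ is an end vertex of $K$, degree $2$ in $K$, seeing two colors from its own triangle and $H$-edge distinct from $\chi(u')$ (Invariants~\ref{inv:monochromatic},\ref{inv:H},\ref{inv:degree-2}), so $|\chi(u') \cup \chi(z)| = 4$ and $u'z$ cannot be colored, contradicting \Cref{lem:character_subgraph}. The symmetric argument applies to $u''$, completing the proof.

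The main obstacle I anticipate is the bookkeeping around exactly which colors $u'$ and $u''$ see once the chord $u'u''$ is present: one must be careful that the chord color, being forced into $\chi(u') \cap \chi(u'')$, does not secretly create an edge of $H$-color that was not supposed to be inside the cactus, and that Invariant~\ref{inv:H} is not violated — i.e., the chord is \emph{not} added to $K$ and does not carry a ``new'' color, since its color lies in $\chi(K)$ already. Making precise that both endpoints of a cactus chord are end vertices (ruling out the degenerate case where the chord coincides with an $H$-edge, which is excluded because $u'u'' \notin E(K)$ while every $H$-edge at a degree-$2$ vertex of $K$ is in $E(H)\setminus E(K)$ by Invariant~\ref{inv:degree-2}, so an $H$-edge $u'u''$ would have been available and the vertices would have been handled differently) is the delicate point; once that is nailed down, the degree count is the routine case analysis above.
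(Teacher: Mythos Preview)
Your proposal has a genuine gap, and it sits precisely where you yourself flagged the argument as ``delicate.'' You try to rule out the possibility that the chord $u'u''$ is an edge of $H$; the paper proves the opposite --- it \emph{must} be that $u'u''\in E(H)$, and this is the structural fact that drives the rest of the proof.

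Here is the paper's route. By \Cref{lem:degree4} both $u'$ and $u''$ have degree~$2$ in $K$, so by Invariant~\ref{inv:degree-2} each is incident to an edge of $E(H)\setminus E(K)$; together with its (monochromatic) triangle colour each therefore sees two colours. If these two outside $H$-edges were distinct, the four colours seen at $u',u''$ would all be different (distinct $H$-edges carry distinct colours), giving $|\chi(u')\cup\chi(u'')|=4$ and contradicting \Cref{lem:character_subgraph}. Hence the two outside $H$-edges coincide, which forces $u'u''\in E(H)$. Your parenthetical ``$u'u''\notin E(K)$ so they are not $H$-neighbours inside a triangle'' only excludes an $H$-edge \emph{inside} $K$; it says nothing about $u'u''$ itself lying in $E(H)\setminus E(K)$. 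In particular your blanket claim that $u',u''$ are end vertices is unjustified --- either one may well be an inner vertex, with its second $H$-edge being the triangle's $H$-edge.

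This misidentification breaks your case $z\notin V(K)$. You assert $\chi(u')\subseteq\chi(K)$, but with $c_2:=\chi(u'u'')$ one has $c_2\notin\chi(K)$: by Invariant~\ref{inv:H} every colour in $\chi(K)$ has its $H$-edge inside $E(K)$, whereas the $H$-edge of colour $c_2$ is $u'u''\notin E(K)$. So the ``furthermore'' clause of \Cref{lem:two-incident-colors} only yields $c_1\notin\chi(z)$, not $c_2\notin\chi(z)$, and your contradiction does not follow. The fix is to observe additionally that the two colours in $\chi(z)$ are colours of $H$-edges incident to $z$ or to vertices of $\iota^{-1}(z)$, none of which equals $u'$ or $u''$; hence $c_2\notin\chi(z)$ after all, and then $|\chi(u')\cup\chi(z)|=4$ gives the contradiction. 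The paper's final line is terse on this point, but it proceeds from the correct premise $u'u''\in E(H)$, which your write-up explicitly (and incorrectly) excludes.
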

\begin{proof}
    By \Cref{lem:degree4}, the degrees of both vertices in $V(K)$ are two.
    We first argue that $u'$ sees two colors.
    If $u'$ is an internal vertex this is true by definition.
    Otherwise $u'$ is an end vertex and therefore incident to one edge from $H$.
    By Invariant~\ref{inv:degree-2}, the edge is not in $E(K)$ and thus the claim follows from Invariants~\ref{inv:monochromatic} and \ref{inv:H}.
    Symmetrically, the same claim holds for $u''$.    
    In particular, $u'u'' \in E(H)$ since otherwise $|\chi(u') \cup \chi(u'')| = 4$ and \Cref{lem:character_subgraph} applies. 

    The vertex $u'$ does not have further neighbors since otherwise by \Cref{lem:two-incident-colors}, it would see more than two colors.
    The claim for $u''$ follows symmetrically.
 \end{proof}

With this preparation we show that all vertices and edges within \Cref{alg:mapping} exist.
\begin{lemma}\label{lem:w_exists}
    In \Cref{alg:mapping}, the vertices $w,w',w''$ exist, $\iota(w') = w''$, $\chi(ww') = \chi(ww'') = \chi(w'w'')$, and there is no other vertex $\bar{w} \in V \setminus V(K)$ adjacent to $w$.
\end{lemma}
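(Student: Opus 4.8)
\textbf{Proof proposal for \Cref{lem:w_exists}.}
The plan is to follow the intuition already built up before \Cref{alg:mapping}: when the algorithm cannot map $v$ (or, more generally, the current inner vertex under consideration) directly, it is because the natural ``escape'' vertex $u'$ on the boundary is already saturated, and that saturation forces a monochromatic triangle through $v$ and the vertex $\hat v$ that was mapped to $u'$. I would first argue existence of $w$ and $w'$. By Invariant~\ref{inv:no_free} and the fact that $v$ is an inner vertex (or an end vertex mapped into $V(K)$), every vertex of $K$ sees exactly two colors whose $H$-edges lie inside $K$ (Invariants~\ref{inv:monochromatic},~\ref{inv:H},~\ref{inv:degree-2}); moreover $G$ has no degree-$2$ vertices (Modification~2) and, by \Cref{lem:degree4} and \Cref{lem:degree3}, degree-$4$ vertices of $K$ stay degree $4$ in $G$ and cactus chords contribute no new neighbours. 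Counting degrees in $K$ against degrees in $G$ over all of $V(K)$ then yields a vertex $w\in V(K)$ with a neighbour outside $V(K)$; since $K$ is a plain cactus each such $w$ has an incident $H$-edge $ww'$ (an end- or inner-vertex edge) with $w'\notin V(K)$, giving $w'$, and a further $G$-neighbour $w''\neq w'$ outside $V(K)$ (otherwise, if the only outside neighbour went through $w'$, we would already have the simple-cactus situation excluded by Modification~3, or $N(K)$ would contain the needed unsaturated vertex, contradicting that we entered the \textbf{else} branch).

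Next I would pin down the colors. We are in the branch where $N(K)$ has no unsaturated free or end vertex, so $w''$ is saturated (it is not an inner vertex by \Cref{lem:character_subgraph} combined with Invariant~\ref{inv:H}, and not unsaturated by the case assumption; it is not free by the same \Cref{lem:scope}/three-colors argument used for $u'$ in the text). Hence $w''$ has some $\hat w=\iota^{-1}(w'')$ (or one of its two preimages) with $w\neq\hat w$, and by Invariants~\ref{inv:closed},~\ref{inv:symmetric} that preimage lies outside $V(K)$. As in the initial-triangle argument: $w''$ has an incident $H$-edge $w''w'''$ with $w'''\notin\{w,\hat w\}$, the edges $w w''$ and $\hat w w''$ are in $E(G)\setminus E(H)$, \Cref{lem:character_subgraph} forces $\chi(ww'')\neq\chi(w''w''')$ and $\chi(\hat w w'')\neq\chi(w''w''')$, hence $\chi(ww'')=\chi(\hat w w'')=:c$, and again by \Cref{lem:character_subgraph} $w$ and $\hat w$ each have an incident $H$-edge of color $c$, which must be the same edge $ww'$; thus $w'=\hat w$, $\iota(w')=w''$, and $\chi(ww')=\chi(ww'')=\chi(w'w'')=c$, so $v,w',w''$ span a monochromatic triangle.

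Finally I would show $w$ has no further outside neighbour $\bar w\in V\setminus V(K)$. The vertex $w$ now sees exactly the two colors $\chi(ww')=c$ and the color of the second triangle/$H$-edge already accounted for in $K$; adding $K$'s new triangle to $K$ makes $w$ degree $4$ in $K$, so by \Cref{lem:degree4} it is degree $4$ in $G$ and has no room for $\bar w$. (If $w$ were only going to be degree $3$ in the updated $K$ — i.e.\ $ww'$ is an end-vertex edge — then any $\bar w\notin V(K)$ would give $w$ a third incident color, as $w\bar w\notin E(H)$ by Invariant~\ref{inv:H} and $\bar w$ sees two colors disjoint from $\chi(K)$ by \Cref{lem:two-incident-colors}, contradicting feasibility; and $\bar w\in V(K)$ is a cactus chord, already handled by \Cref{lem:degree3}.) The main obstacle I expect is the bookkeeping in the merge case (lines 8--9 of \Cref{alg:mapping}), where $w'$ lies in a previously built cactus $K'$: there one must check that absorbing $K'$ preserves all eight invariants and that the color/degree arguments above still go through across the seam — this is where \Cref{lem:two-incident-colors}, \Cref{lem:degree4} and \Cref{lem:degree3} have to be invoked carefully for vertices shared between $K$ and $K'$, and where the disjointness of the relevant scope-colors (from the set-scope Definition~\ref{def:setscope}\textsf{(\ref{scopeset:one})}) is needed to rule out an accidental color collision.
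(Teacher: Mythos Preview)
Your overall plan---reproduce the ``initial triangle'' reasoning from the text before \Cref{alg:mapping} at a generic step of the cactus growth---is the right picture, and your existence argument for $w$ (degree count plus \Cref{lem:degree4}, \Cref{lem:degree3}, Modification~3) matches the paper. However, two steps do not go through as written.

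\emph{First gap (the colour argument).} You write that ``by \Cref{lem:character_subgraph} $w$ and $\hat w$ each have an incident $H$-edge of colour $c$, which must be the same edge $ww'$''. For the inner vertex $\hat w$ this is fine, but $w$ need not be an inner vertex of $H$: by Invariants~\ref{inv:no_free} and~\ref{inv:symmetric}, $w$ may well be an end vertex of $H$ sitting inside $V(K)$, in which case its only $H$-edge is $ww'$ of colour $c'$ and there is no $H$-edge of colour $c$ at $w$ to speak of. What is actually missing is the exclusion of the possibility $\chi(ww'')=c_\triangle$, the colour of the triangle of $K$ through $w$. The paper handles this first and directly: any outside neighbour $\hat w\in A$ with $\chi(w\hat w)=c_\triangle$ would, by \Cref{lem:two-incident-colors} (using $\iota^{-1}(\hat w)\cap V(K)=\emptyset$ from Invariant~\ref{inv:closed}), see two colours disjoint from $\chi(K)\ni c_\triangle$ while also seeing $c_\triangle$, a contradiction. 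Only once every edge from $w$ to $A$ is known to carry $c'$ does the identification $\hat w=w'$ via the unique $H$-edge of colour $c'$ become legitimate. Relatedly, your ``three-colours'' dismissal of the case that $w''$ is free again leans on $w$ being inner; the paper's route via \Cref{lem:two-incident-colors} avoids this.

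\emph{Second gap (no further neighbour $\bar w$).} You derive this by adding the new triangle to $K$ and invoking \Cref{lem:degree4} on the enlarged cactus. That is circular: \Cref{lem:degree4} is proved under the standing invariants for $K$, and the invariants for the enlarged cactus are established \emph{after} \Cref{lem:w_exists}, using \Cref{lem:w_exists} itself. The paper's (implicit) argument is non-circular and immediate: the proof that any outside neighbour $\tilde w\in A\setminus\{w'\}$ forces $\iota(w')=\tilde w$ applies verbatim to every such $\tilde w$, and since $\iota(w')$ is single-valued, $A=\{w',w''\}$.
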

\begin{proof}
    Since $K$ is a plain cactus and we cannot apply Modification~$3$, there have to be edges not allowed in a simple cactus. These cannot be degree-$4$ vertices, due to \Cref{lem:degree4}.
    Also, by \Cref{lem:degree3}, edges between vertices from $K$ cannot violate the properties of simple cacti.
    Thus there has to be a degree-$2$ vertex $w$ of $K$ which has a set $A$ of at least two adjacent vertices in $V(G) \setminus V(K)$.
    By Invariant~\ref{inv:degree-2}, there is a vertex $w' \in A$ with $ww' \in E(H)$.
    Let $c$ be the color of the triangle containing $w$ and $c' := \chi(ww')$.

    We claim that there is no vertex $\hat{w} \in A$ with $\chi(w\hat{w}) \neq c'$. Since $w$ only sees two colors, the only possibility would be $\chi(w\hat{w}) = c$.
    However, since $w$ sees two colors, $\hat{w}$ cannot be an inner vertex and by \Cref{lem:two-incident-colors} it also cannot be a free vertex or an end vertex.

    Next, we show that $\iota(w') = w''$. 
    Suppose $\iota(w') \neq w''$.
    Then, by \Cref{lem:two-incident-colors},
    it sees two colors which are distinct from $c$, and also distinct from $c'$ since $ww' \in E(H)$ and $w,w' \notin \iota^{-1}(w'')$.
    However, $w''$ also sees  $c'$ and therefore more than two colors, a contradiction. Thus $\iota(w') = w''$.
    In particular, $\chi(ww') = \chi(ww'') = \chi(w'w'')$.
 \end{proof}

With \Cref{lem:w_exists} we know that adding the triangle $\{w,w',w''\}$ does not violate the conditions of plain cacti, i.e., adding it still satisfies Invariant~\ref{inv:cactus}.
Since $\iota(w') = w''$, also Invariants~\ref{inv:closed} and \ref{inv:symmetric} are satisfied.
Invariant~\ref{inv:monochromatic} follows directly from \Cref{lem:w_exists}.
Since $ww' \in E(H)$ by definition (within \Cref{alg:mapping}), Invariant~\ref{inv:H} follows.
Invariant~\ref{inv:scope} follows by noting that $w \in \text{scope}(v)$, the edge $ww'$ satisfies the conditions of the lemma, and thus the color class $c$ is added to the scope.
Invariant~\ref{inv:no_free} is a direct consequence of Invariant~\ref{inv:scope}.
Invariant~\ref{inv:degree-2} requires additional arguments.

\begin{lemma}\label{lem:invariant-8}
    Invariant~\ref{inv:degree-2} is satisfied.
\end{lemma}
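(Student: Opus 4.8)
The goal is to show that when \Cref{alg:mapping} adds the triangle $\{w,w',w''\}$ to the cactus $K$ (possibly together with a previously considered cactus $K'$), every degree-$2$ vertex of the resulting cactus $K$ is still incident to an edge of $E(H)\setminus E(K)$. By the inductive hypothesis, all degree-$2$ vertices that were already present in $K$ (and remain degree-$2$) satisfy this, so the work is to check (a) vertices of the old cactus whose degree in the cactus changed from $2$ to $4$ by the new triangle — these are no longer degree-$2$, so nothing to prove — and (b) the newly added vertices $w'$ and $w''$, which each gain cactus-degree $2$ from the new triangle, and (c) in the merging case of lines~8--9, the vertex $w'$ which becomes a cut vertex of degree $4$.

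First I would handle $w''$. By \Cref{lem:w_exists} we have $\iota(w') = w''$, so $w''$ is a saturated end vertex (it cannot be a free vertex, since a free vertex would need a second inner vertex mapped to it, and by \Cref{lem:two-incident-colors} together with \Cref{lem:scope} that is impossible here, as argued when growing the initial triangle; it cannot be an inner vertex by \Cref{lem:character_subgraph} combined with Invariant~\ref{inv:H}). Being an end vertex, $w''$ is incident to exactly one edge $e \in E(H)$. I claim $e \notin E(K)$: the only $H$-edge incident to $w''$ inside the new triangle would be $w'w''$ or $ww''$, but $\chi(w'w'') = \chi(ww'') = c$ (the triangle color) by \Cref{lem:w_exists}, and $c$ already has its $H$-representative inside $K$ by Invariant~\ref{inv:H} before adding the triangle (the color $c$ of the triangle containing $w$ was already a cactus color), so neither $w'w''$ nor $ww''$ can be the $H$-edge through $w''$. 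Hence the unique $H$-edge at $w''$ leaves $V(K)$, giving the invariant for $w''$.

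Next I would handle $w'$. In the non-merging case, $w'$ has cactus-degree $2$ after the step, and by construction $ww' \in E(H)$; but I must argue this specific edge is not in $E(K)$ — indeed $ww'$ is added to $E(K)$ in line~7, so I actually need a \emph{different} $H$-edge, or I need to reconsider. This is the subtle point: $ww'$ \emph{is} in $E(K)$, so the invariant for $w'$ must come from $w'$'s \emph{other} $H$-edge. Here I would use that $w'$ sees a second color besides $c' = \chi(ww')$ (it is either inner, so sees two colors by definition, or — as just shown — it is an end vertex, but an end vertex has only the one $H$-edge $ww'$, contradiction, so $w'$ must be an inner vertex). As an inner vertex, $w'$ has two $H$-edges; one is $ww'$, the other, call it $e'$, has $\chi(e') \neq c'$ and also $\chi(e') \neq c$ by \Cref{lem:character_subgraph} applied at $w'$ (whose neighbors in $H$ must get distinct colors), so $\chi(e')$ is not a cactus color, hence $e' \notin E(K)$ and $e'$ witnesses the invariant for $w'$. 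Finally, in the merging case (lines~8--9), $w'$ lies in a previously built cactus $K'$; there $w'$ had cactus-degree $\geq 2$, and after merging it has cactus-degree $4$, so it is no longer a degree-$2$ vertex and needs nothing — but I would double-check via \Cref{lem:character_subgraph} that $w'$ being a degree-$4$ cut vertex of the merged cactus is consistent, i.e.\ that the two triangle colors at $w'$ are the only colors it sees, which follows since an inner vertex sees exactly two colors.

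The main obstacle I expect is precisely case (b) for $w'$: realizing that the naive witness $ww' \in E(H)$ fails because that edge gets absorbed into $E(K)$, and then extracting the correct witness by showing $w'$ must be an inner vertex and using the distinctness of colors at an inner vertex of $H$ via \Cref{lem:character_subgraph}. A secondary subtlety is bookkeeping in the merge step: one must confirm that no vertex of $K'$ that was degree-$2$ in $K'$ becomes degree-$2$ in $K$ while losing its $H$-witness edge to the newly added triangle; since the only vertex shared by $K'$ and the new triangle is $w'$ (handled above) and the new edges $ww', ww'', w'w''$ touch $V(K')$ only at $w'$, the witnesses for all other degree-$2$ vertices of $K'$ are untouched.
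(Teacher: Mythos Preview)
Your inductive organization (check only the newly added vertices $w',w''$; old degree-$2$ vertices keep their witnesses since no new cactus edge touches them) is a valid alternative to the paper's route, which instead treats an \emph{arbitrary} degree-$2$ vertex $\tilde u$ of the updated cactus directly: if $\tilde u$ is inner, monochromaticity (Invariant~\ref{inv:monochromatic}) forces one of its two $H$-edges out of $E(K)$; if $\tilde u$ is an end vertex, the paper uses Invariant~\ref{inv:symmetric} to locate the inner vertex $\bar u$ mapped to $\tilde u$, pins down which triangle edge is the $H$-edge (namely $\hat u\bar u$), and concludes that $\tilde u$'s unique $H$-edge must leave the cactus. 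The paper's argument buys uniformity (no separate merge-case bookkeeping), while yours buys locality (you only look at the last triangle).

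That said, two steps in your reasoning are off. For $w''$: you claim the new triangle's color ``already has its $H$-representative inside $K$ \emph{before} adding the triangle'' and identify it with ``the color of the triangle containing $w$''. These are two different colors---in the paper's notation the old triangle at $w$ has color $c$, the new triangle has color $c'=\chi(ww')$, and $c'$ was \emph{not} a cactus color before. The correct reason $w'w'',ww''\notin E(H)$ is simply that $ww'\in E(H)$ has the same color $c'$, and $H$ has one edge per color. For $w'$: from $\chi(e')\neq c'$ (and even $\chi(e')\neq c$) you cannot conclude ``$\chi(e')$ is not a cactus color''---the cactus may have many colors. What you actually need is only that $e'$ is not one of the two cactus edges at $w'$, namely $ww'$ and $w'w''$; both have color $c'$, while $e'$ does not, so $e'\notin E(K)$. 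With these two fixes your plan goes through.
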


\begin{proof}
    Let $\tilde{u}$ be a degree-$2$ vertex. 
    We distinguish the type of $\tilde{u}$. 
    Due to Invariant~\ref{inv:no_free}, $\tilde{u}$ is not free. 
    If $\tilde{u}$ is an inner vertex, only one of the two incident edges from $E(H)$ can be in $E(K)$, due to Invariant~\ref{inv:monochromatic}. 
    Thus, due to the other edge, the invariant is satisfied.
    If $\tilde{u}$ is an end vertex, it has to have an incident edge $\tilde{u}u'$ not in $E(K)$, due to Modification~$2$.
    If $u' \in V(K)$, \Cref{lem:degree3} implies the invariant.
    We therefore may assume that $u' \notin V(K)$.

    By Invariant~\ref{inv:symmetric}, there is a vertex $\bar{u} \in V(K)$ with $\iota(\bar{u}) = \tilde{u}$. 
    In particular, $\bar{u}$ is an inner vertex, $\bar{u}\tilde{u} \in E(K)$ and $\bar{u}\tilde{u} \notin E(H)$. 
    Let $\hat{u}$ be the third vertex in the same triangle.
    By Invariant~\ref{inv:monochromatic}, $\chi(\hat{u}\tilde{u}) = \chi(\hat{u}\bar{u}) = \chi(\tilde{u}\bar{u})$.
    But then $\hat{u}\bar{u} \in E(H)$ as otherwise $\bar{u}$ would see three colors.
    Thus $\hat{u}\tilde{u} \notin E(H)$.
    However, since $u$ is an end vertex, it is incident to an edge from $E(H)$, which then can only lead outside of $K$.
 \end{proof}

Finally, we argue that all invariants are also preserved when merging two cacti.
For a cactus $K'$, let $v(K')$ be the vertex from $Q$ mapped using cactus $K'$.

\begin{lemma}\label{lem:join-cacti}
    Let $w',w''$ be the vertices from \Cref{alg:mapping} and let $V(K')$ be the vertex set of a cactus constructed in a previous application of \Cref{alg:mapping}.
    Then $w'' \notin V(K')$. 
    Furthermore, if $w' \in V(K')$, $v(K')$ is mapped to $w''$.
\end{lemma}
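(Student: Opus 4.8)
The plan is to analyze the two vertices $w'$ and $w''$ produced in the \textbf{else}-branch of \Cref{alg:mapping}, where we have $w \in V(K)$ with $ww' \in E(H)$ and $ww'' \in E(G)$. By \Cref{lem:w_exists} we already know $\iota(w') = w''$ and $\chi(ww') = \chi(ww'') = \chi(w'w'')$, so $w'$ is an inner vertex and $w''$ is an end vertex (a free vertex for $w''$ is ruled out since a free vertex is mapped to by two distinct inner vertices, which by \Cref{lem:two-incident-colors} would force it to see more than two colors).

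\emph{Showing $w'' \notin V(K')$.} I would argue by contradiction. Suppose $w'' \in V(K')$ for some previously constructed cactus $K'$. Since $w''$ is an end vertex, by Invariant~\ref{inv:no_free} (applied to $K'$) it is not free, so this is consistent, but by Invariant~\ref{inv:symmetric} for $K'$ there must already be an inner vertex $\bar u \in V(K')$ with $\iota(\bar u) = w''$. On the other hand we just derived $\iota(w') = w''$, and since $\iota$ is ``injective up to the two-fold splitting of free vertices'' and $w''$ is an \emph{end} vertex (so only one preimage is allowed), we get $w' = \bar u \in V(K')$. Then $ww'' \in E(G)$ with $w \in V(K)$ and $w'' \in V(K')$; combined with the monochromatic triangle $\{w,w',w''\}$ and $ww' \in E(H)$, I would trace the colors: $w''$ sees $\chi(ww')$, but as a saturated end vertex of $K'$ it also sees the two colors guaranteed by \Cref{lem:two-incident-colors} and Invariant~\ref{inv:H} applied to $K'$, and one checks these are forced to be distinct from $\chi(ww')$ (its incident $H$-edge is not incident to $w'$ since $w'\in V(K')$ and $w'w'' \in E(G)\setminus E(H)$). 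This gives $w''$ three incident colors, contradicting feasibility of $\chi$. Hence $w'' \notin V(K')$.

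\emph{The case $w' \in V(K')$.} Here the goal is to show $v(K')$ is mapped to $w''$. If $w' \in V(K')$, then since $w'$ is an inner vertex, by Invariant~\ref{inv:closed} for $K'$ either $w' = v(K')$ or $w'$ is mapped to some end vertex inside $V(K')$. In the latter case $\iota(w') \in V(K')$, but we know $\iota(w') = w''$ and just proved $w'' \notin V(K')$ — contradiction. Therefore $w' = v(K')$, the special postponed vertex of $K'$. Now $w'$ was postponed precisely because $U_{w'} = \emptyset$ at the time, and the construction of $K'$ was launched from $\{w'\}$; the algorithm determines $\iota(v(K')) = \iota(w')$ only in the \textbf{if}-branch (line~4), i.e., $v(K') = w'$ is mapped to a not-saturated free/end vertex in $N(K')$. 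But we have already established $\iota(w') = w''$, so $v(K')$ is indeed mapped to $w''$. (Consistency: $w'' \in N(K')$ since $w'w'' \in E(G)$ and $w' \in V(K')$, and $w''\notin V(K')$.)

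\emph{Main obstacle.} The delicate point is the color-counting argument ruling out $w'' \in V(K')$: one must carefully verify that the two colors $w''$ sees as a saturated end vertex of the \emph{old} cactus $K'$ are genuinely different from the color $\chi(ww')$ of the \emph{new} triangle, using that the $H$-edge at $w''$ is not incident to its $\iota$-preimage and invoking Invariants~\ref{inv:monochromatic}, \ref{inv:H}, \ref{inv:degree-2} for $K'$ exactly as in the proof of \Cref{lem:degree4}. I would also need to be careful that $w'$ and $w''$ are genuinely distinct and that the identification $w' = \bar u$ uses the injectivity of $\iota$ on end-vertex targets in the correct direction; once that bookkeeping is set up, the remaining steps are short.
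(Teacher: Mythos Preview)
Your second part is correct and in fact cleaner than the paper's own wording: from $\iota(w')=w''$, $w'$ is inner, and Invariant~\ref{inv:closed} for $K'$ forces $w'=v(K')$ once you know $w''\notin V(K')$; hence $\iota(v(K'))=w''$. That is exactly the paper's reasoning.

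The first part, however, has two gaps. First, your justification that $w''$ is an end vertex is wrong: a saturated free vertex sees exactly \emph{two} colors (this is precisely what \Cref{lem:two-incident-colors} establishes for free vertices, via \Cref{lem:scope}), not more. The correct reason, once you assume $w''\in V(K')$, is simply Invariant~\ref{inv:no_free} for $K'$.

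Second, and more importantly, your three-color argument at $w''$ is incomplete. Write $c' := \chi(ww')$. You correctly argue that the $H$-edge at $w''$ has color $\neq c'$ (since the unique $c'$-edge in $H$ is $ww'$ and $w''\notin\{w,w'\}$). But you also need that the color $c_1$ of the $K'$-triangle containing $w''$ is $\neq c'$, and neither your parenthetical nor the analogy with \Cref{lem:degree4} supplies this: in \Cref{lem:degree4} one uses the second clause of \Cref{lem:two-incident-colors}, which requires $\iota^{-1}(w'')\cap V(K')=\emptyset$---and you have just shown the opposite, namely $w'=\bar u\in V(K')$. The missing step is exactly Invariant~\ref{inv:H} for $K'$: if $c_1=c'$ then the $H$-edge of color $c_1$ lies in $E(K')$, but that edge is $ww'$ with $w\notin V(K')$, a contradiction. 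Once you add this, your argument goes through.

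The paper's route is a bit more direct here: rather than counting three colors at $w''$, it observes that (since $w'\neq v(K')$ in this situation) $w'$ and $w''$ sit in a common triangle of $K'$, so that triangle's color is $\chi(w'w'')=c'$; Invariant~\ref{inv:H} then immediately forces $ww'\in E(K')$, the desired contradiction. Your approach and the paper's thus pivot on the same use of Invariant~\ref{inv:H}; the paper just avoids the detour through a third color.
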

\begin{proof}
    If neither $w'$ nor $w''$ are in $V(K')$, there is nothing we have to show.
    Suppose $w' \notin V(K')$.
    Then also $w'' \notin V(K')$ since by \Cref{lem:w_exists}, $w'$ is mapped to $w''$ which by Invariant~\ref{inv:symmetric} would imply that both $w'$ and $w''$ are in $V(K')$.
    We therefore may assume $w' \in V(K')$.
    Now suppose that both $w'$ and $w''$ are in $V(K')$.
    Then there is a vertex $\tilde{w} \in V(K')$ such that $\{w',w'',\tilde{w}\}$ form a triangle within $K'$.
    By Invariant~\ref{inv:monochromatic}, the triangle is monochromatic and by Invariant~\ref{inv:H}, the edge in $H$ with color $\chi(w'w'')$ is contained in the triangle.
    However, this contradicts \Cref{lem:w_exists} according to which the edge in $H$ colored $\chi(w'w'')$ is contained in the triangle formed by $\{w,w',w''\}$.
    Thus $w'' \notin V(K'')$.

    Since we could not map $v$ to $w''$, by \Cref{lem:w_exists}, $w'$ is mapped to $w''$.
    If $w'$ is an inner vertex, we have a contradiction since by Invariant~\ref{inv:closed}, $w'$ would be mapped to a vertex within $V(K')$.
    Thus $w'$ is an end vertex, which implies that via $w'$, $v(K')$ is mapped to $w''$. 
 \end{proof}

Then composing a new cactus from $K, K'$, and the triangle formed by $\{w,w',w''\}$ satisfies all conditions:
Since by induction they are satisfied by $K$ and $K'$, we only have to check the new triangle $\{w,w',w''\}$.
We obtain a plain cactus since the degrees of $w$ and $w'$ are four and the degree of $w''$ is two.
The inner vertex $v(K')$ is mapped to $w''$ which implies Invariant~\ref{inv:closed} and \ref{inv:symmetric}.
The triangle $\{w,w',w''\}$ satisfies Invariant~\ref{inv:monochromatic} by \Cref{lem:w_exists}.

Since $ww' \in E(H)$, Invariant~\ref{inv:H} is satisfied, both $w$ and $w'$ are in the scope of $Q$ and $K,K'$ provide disjoint scope graphs, the conditions of \Cref{def:setscope} and therefore Invariant~\ref{inv:scope} are satisfied.
Invariant~\ref{inv:no_free} follows from Invariant~\ref{inv:scope} and that $Q$ satisfies the conditions of \Cref{lem:scope}.
Finally, Invariant~\ref{inv:degree-2} follows since $w''$ is an end vertex and its incident edge from $H$ is not in the constructed cactus.

\section{Subcubic Graphs and Claw-Free Graphs}\label{sec:results}

The upper bound shown in Section~\ref{sec:upper-bound} directly gives the following result.

\begin{lemma}\label{thm:normalized_perfect_matching}
  There is a {polynomial-time} $1.5$-approximation for normalized graphs that contain a perfect matching.
\end{lemma}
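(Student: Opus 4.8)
The plan is to combine the upper bound from \Cref{lem:maxcolors} with a lower bound on the number of colors produced by \Cref{alg:basic}, both measured on a normalized graph $G$ with a perfect matching. First I would invoke \Cref{lem:components} to reduce to the case where $G$ is a single non-trivial connected component; since $G$ has $n \ge 3$ vertices, \Cref{lem:maxcolors} applies and gives $\opt(G) \le 3n/4 - \ell/4$, where $\ell$ is the number of leaves of $G$. The crucial observation is that a leaf cannot be matched to anything except its unique neighbor, yet \emph{no two leaves share a neighbor} in a normalized graph (Modification~1 is not applicable; see \Cref{lem:pendant_unique}); so each leaf's neighbor is distinct, and in particular the $\ell$ pendant edges are themselves a matching and extend to the perfect matching $M$. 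This means $M$ must contain all $\ell$ pendant edges, and hence $|M| = n/2$ while the $\ell$ leaves are among the matched vertices.

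Next I would lower-bound $|\chi|$, the number of colors output by \Cref{alg:basic}. The algorithm assigns one distinct color to each of the $n/2$ edges of $M$, and then one additional color for every non-trivial connected component of $G \setminus E(M)$. So $|\chi| \ge n/2$ trivially, but we need something closer to $n/2 + (\text{number of components of } G\setminus E(M))$; the pendant edges of $G$, once removed, leave their leaves isolated, but the real gain comes from arguing that $G \setminus E(M)$ has enough non-trivial components. The cleanest route is to directly compare: we want $|\chi| \ge \tfrac{2}{3}\,\opt(G)$, i.e., $|\chi| \ge \tfrac{2}{3}(3n/4 - \ell/4) = n/2 - \ell/6$. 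Since $|\chi| \ge n/2 \ge n/2 - \ell/6$, the inequality $|\chi| \ge \tfrac{2}{3}\opt(G)$ holds immediately, giving the $1.5$-approximation. That is the surprisingly short core of the argument once \Cref{lem:maxcolors} is in hand.

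The remaining obstacle is twofold. First, one must be careful that the trivial components created or left behind (isolated vertices coming from leaves that get matched and whose pendant edge is in $M$, or trivial two-vertex components) are accounted for correctly; since \Cref{lem:basic} colors trivial components optimally and \Cref{lem:components} handles the decomposition, applying \Cref{lem:components} with $\alpha = 3/2$ to each non-trivial component of $G'$ is the clean way to finish — so I would frame the whole proof as: reduce to a connected non-trivial normalized $G$ with a perfect matching, show $\opt(G) \le 3n/4$ (the $-\ell/4$ term only helps), show $\ALG(G) \ge n/2 = |M|$, and conclude $\ALG(G) \ge \tfrac{2}{3}\opt(G)$. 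Second, one should double-check that \Cref{alg:basic} run on $G$ (as opposed to $G'$) is what we want; by \Cref{lem:mod_approx} and \Cref{lem:normalization} it suffices to $\alpha$-approximate the normalized graph, so this is fine. I expect the only genuinely delicate point to be making sure $n \ge 3$ so that \Cref{lem:maxcolors} is applicable after the component reduction, which is exactly what the trivial/non-trivial component split guarantees.

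\begin{proof}
By \Cref{lem:components}, it suffices to give a $1.5$-approximation for every non-trivial connected component of the normalized graph; such a component is a normalized connected graph $G$ with $n \ge 3$ vertices that inherits a perfect matching $M$, so $|M| = n/2$. By \Cref{lem:maxcolors}, $\opt(G) \le 3n/4 - \ell/4 \le 3n/4$. On the other hand, \Cref{alg:basic} assigns a distinct color to each of the $|M| = n/2$ edges of $M$, so it produces at least $n/2$ colors. Hence the coloring $\chi$ output by the algorithm satisfies
\begin{equation*}
  |\chi| \ge \frac{n}{2} = \frac{2}{3}\cdot\frac{3n}{4} \ge \frac{2}{3}\,\opt(G),
\end{equation*}
which means the algorithm is a $1.5$-approximation on $G$. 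Applying \Cref{lem:components} with $\alpha = 3/2$ completes the proof.
\qed \end{proof}
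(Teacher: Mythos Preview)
Your proof is correct and follows essentially the same approach as the paper: reduce to a non-trivial connected normalized component, apply \Cref{lem:maxcolors} for the upper bound $\opt(G)\le 3n/4$, and use $|M|=n/2$ as the lower bound on the algorithm's output to obtain the ratio $3/2$. The only difference is cosmetic—you invoke \Cref{lem:components} explicitly for the reduction, whereas the paper simply assumes $G$ is connected with more than two vertices.
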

\begin{proof}
  Let $G$ be a normalized graph on $n$ vertices that has a perfect matching $M$.
  Recall that we can assume without loss of generality that $G$ is a connected graph with more than $2$ vertices.
  By \Cref{lem:maxcolors}, there is no feasible coloring of $G$ with more than $3n/4$ colors,
  and \Cref{alg:basic} obtains at least $|M| = n/2$ colors.
  Thus the attained approximation ratio is at most
    $\frac{3n/4}{n/2} \leq 3/2$.
 \end{proof}

In particular, our algorithm solves the tight worst-case instance depicted in \Cref{fig:tight_example_53} for the algorithm of Adamaszek and Popa~\cite{AdamaszekPopa2016} optimally.
\begin{figure}[tb]
  \centering
  \includegraphics[scale=0.39]{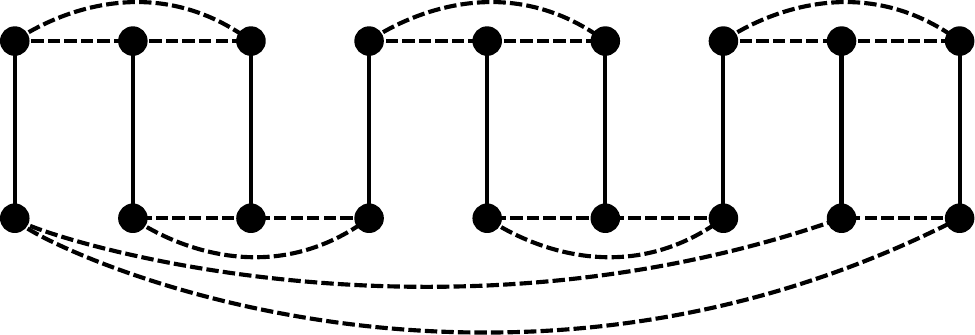}
  \caption{Worst-case instance containing a perfect matching from \cite{AdamaszekPopa2016}. 
      Without applying modifications, \Cref{alg:basic} may chose a perfect matching such that removing the matching leaves a connected graph -- resulting in a $5/3 \approx 1.667$ approximation. 
      An improved approximation ratio of $1.625$ is possible due to Modification~3.
    In particular, Modification~3 replaces all simple cacti by independent edges, which results in a modified graph $G'$ consisting only of independent edges, which means that \Cref{alg:basic} actually computes an optimal solution.
\label{fig:tight_example_53}}
\end{figure}
In order to handle subcubic graphs, we need to introduce additional modifications.
The validity of these modifications follows from \Cref{lem:pendant_unique}, which says that we can always assume that pendant edges are colored with a unique color.
\paragraph*{Modification~4: Bridge Removal}
The next modification is only applied to subcubic graphs, after none of Modifications~1--3 can be applied anymore.
Note that between two consecutive applications of Modification~4, we may have to update the graph by applying Modifications~1--3. 

Suppose we have such a normalized, subcubic graph $G$ with a cut of size one consisting of an edge $uv$, that is, the removal of $uv$ disconnects $G$ (see \Cref{fig:bridge_removal}).
Due to the definition of $G$, $u$ and $v$ are either degree-$1$ or degree-$3$ vertices: their degree cannot be more than $3$, and if it was $2$, Modification~2 could be applied.

Intuitively, Modification~4  
disconnects $G$ by removing all edges adjacent to $uv$ and, if they exist, connects the former neighbors of $u$ and $v$, respectively.
Formally, let $v,u_1,u_2$ be the neighbors of $u$ and $u,v_1,v_2$ the remaining neighbors of $v$, if they exist.
We remove the edges $u_1u$, $u_2u$, $v_1v$, and $v_2v$. Then we add the edge $u_1u_2$ if $u_1$ and $u_2$ exist and $v_1v_2$  if $v_1$ and $v_2$ exist.

\begin{figure}[tb]
  \centering
  \includegraphics[scale=0.39]{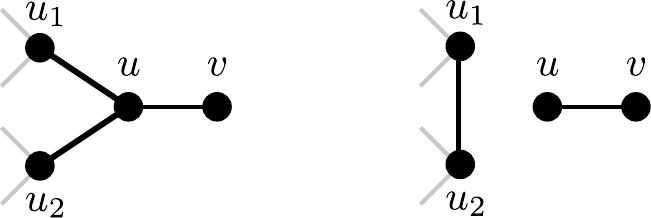}
  \caption{Modification~4: Bridge removal, when $\deg(u)=3$ and $\deg(v)=1$\label{fig:bridge_removal}.
  \label{fig:modification4}
  }
\end{figure}

\begin{lemma}\label{lem:mod4}
  Modification~4 is valid.
\end{lemma}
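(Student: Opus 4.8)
\textbf{Proof plan for Lemma~\ref{lem:mod4}.} The plan is to verify the two directions of Definition~\ref{def:equivalence}. For $\opt(G) \le \opt(G')$, I would take an optimal coloring $\chi^\star$ of $G$ and build a feasible coloring $\chi'$ of $G'$ with at least as many colors; for the converse-direction inequality $|\chi| \ge |\chi'|$ I would take an arbitrary coloring $\chi'$ of $G'$ and construct a feasible coloring $\chi$ of $G$ with $|\chi| \ge |\chi'|$. Throughout I would use the structural facts already established: $u$ and $v$ have degree $1$ or $3$ (so the cut edge $uv$ together with at most two further edges at each endpoint describes the whole local picture), and, crucially, Lemma~\ref{lem:pendant_unique}, which lets us assume that in any coloring we deal with every pendant edge receives its own unique color.

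First, the easy direction. Start from an optimal $\chi^\star$ of $G$. Contract the bridge: because $uv$ is a bridge, the two sides $G_u$ (containing $u_1,u_2$) and $G_v$ (containing $v_1,v_2$) interact only through the colors $\chi^\star(u)$ and $\chi^\star(v)$. When we delete the four edges $u_1u,u_2u,v_1v,v_2v$ and add $u_1u_2$ and $v_1v_2$, the vertices $u_1,u_2$ previously saw $\chi^\star(u u_1)$ resp.\ $\chi^\star(u u_2)$ plus their other incident colors; now they are joined directly. I would color $u_1u_2$ with one of the (at most two) colors already seen by $u$, chosen so that neither $u_1$ nor $u_2$ exceeds two colors, and symmetrically for $v_1v_2$ — a short case analysis on whether $\chi^\star(uu_1)=\chi^\star(uu_2)$ or not shows this is always possible, since $u_1$ (resp.\ $u_2$) already sees $\chi^\star(uu_1)$ (resp.\ $\chi^\star(uu_2)$). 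The only colors that could be lost are those appearing \emph{exclusively} on the deleted edges $uu_1,uu_2,vv_1,vv_2$ and on $uv$; here I would invoke Lemma~\ref{lem:pendant_unique} (and the degree-$3$/degree-$1$ alternative) to argue that we may assume $\chi^\star$ wastes no color solely on this local gadget — e.g.\ if $u$ is a leaf then $uv$ is a pendant edge and one recolours as in Modification~1's analysis — so the total color count does not drop. The degenerate subcases ($u$ or $v$ a leaf, so $u_1,u_2$ or $v_1,v_2$ absent) are handled the same way, only with fewer edges to recolour.

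Second, the reverse direction. Given an arbitrary $\chi'$ of $G'$, I reconstruct a coloring $\chi$ of $G$: keep all colors on the untouched edges, delete the colors of $u_1u_2$ and $v_1v_2$, and need to supply colors for $uu_1,uu_2,uv,vv_1,vv_2$. Assign $uv$ a fresh color (or reuse one — whichever keeps the count up); then colour $uu_1$ with $\chi'(u_1u_2)$ if $u_1$ can still afford it, and $uu_2$ likewise, arranging that $u$ sees at most two colors total (it sees $\chi(uv)$ plus the colours borrowed from the $u$-side). The point is that $u_1$ already sees $\chi'(u_1u_2)$ in $\chi'$, so reattaching it to $u$ with that same colour adds nothing new at $u_1$; symmetrically at $v$. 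As before, at most the two deleted colours of $u_1u_2, v_1v_2$ are at risk, and the fresh colour on $uv$ (together, if needed, with careful reuse) compensates, so $|\chi| \ge |\chi'|$. Combining the two directions with the simplification remark after Definition~\ref{def:equivalence} gives $\pair{G,G'}$, hence validity.

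\textbf{Main obstacle.} The delicate point is not the mechanical recolouring but ensuring \emph{no color is genuinely destroyed} by the surgery — i.e.\ that every colour on the five affected edges of $G$ can be shown, after a possible harmless recolouring, to appear somewhere else, and dually that the two colours on $u_1u_2$ and $v_1v_2$ in $G'$ can be re-accommodated in $G$. This is exactly where the hypothesis "after Modifications~1--3, on a subcubic graph" does real work: it pins down the degrees ($u,v \in \{1,3\}$, $u_i,v_i$ present or not), and Lemma~\ref{lem:pendant_unique} lets us normalise the colouring on pendant edges so that the worrying case "a colour lives only on $uv$" is excluded. I would therefore structure the proof as a case distinction on $(\deg u, \deg v) \in \{(3,3),(3,1),(1,1)\}$ and, within each, on the equality pattern of the colours at $u$ and at $v$, each case being a two-line recolouring argument.
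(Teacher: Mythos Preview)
Your overall plan (case split on $(\deg u,\deg v)$, two directions of Definition~\ref{def:equivalence}, Lemma~\ref{lem:pendant_unique} as the key tool) matches the paper's. There are, however, two issues. The minor one: the edge $uv$ is \emph{not} removed by Modification~4; it survives in $G'$ as an isolated edge. So in the reverse direction you do not ``assign $uv$ a fresh color'' --- it already has one in $\chi'$ --- and in the forward direction $uv$ is not among the edges whose color might be lost.

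The substantive gap is in your forward direction for $(\deg u,\deg v)=(3,3)$. You assert that one can always color $u_1u_2$ with some color in $\chi^\star(u)$ so that neither $u_1$ nor $u_2$ sees three colors. This is false in general: if $\chi^\star(uu_1)=a\neq b=\chi^\star(uu_2)$ and $u_1$ has a further edge of color $a$ and another of color $e$, while $u_2$ has a further edge of color $b$ and another of color $f$ (all four colors distinct, which is consistent with a feasible $2$-coloring since $u_1$ and $u_2$ need not share any color), then coloring $u_1u_2$ with $a$ gives $u_2$ the palette $\{a,b,f\}$ and coloring it with $b$ gives $u_1$ the palette $\{a,b,e\}$. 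Your ``short case analysis'' does not resolve the $a\neq b$ subcase; the problem is feasibility, not color loss. The paper sidesteps this entirely: since $uv$ is a bridge, it splits $G$ into $G_1^+=G_1\cup\{uv\}$ and $G_2^+=G_2\cup\{uv\}$, in each of which $uv$ is a \emph{pendant} edge, applies Lemma~\ref{lem:pendant_unique} on each half to make $\chi^\star(uv)$ unique on both sides, and then reassembles. Once $uv$ has a unique color, $u$ sees only that color and one other, forcing $\chi^\star(uu_1)=\chi^\star(uu_2)$; the same holds at $v$. This is the missing idea in your proposal.
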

\begin{proof}
  We consider 3 cases.

  \parmacro{Case 1:} If $\deg(u)=\deg(v)=1$, Modification~4 has no effect.
  
  \parmacro{Case 2:} $\deg(u)=3$ and $\deg(v)=1$; the proof goes analogously if $\deg(u)=1$ and $\deg(v)=3$.
  Since we have a subcubic graph, $u$ has no neighbors besides $u_1$, $u_2$ and $v$.
  For every given optimal coloring $\chi^\star$ of $G$, we may assume that every pendant edge has a unique color, according to \Cref{lem:optimal_pendant_unique}.
  Since $uv$ has a unique color $c_1$, the edges $u_1 u$ and $u_2 u$ have to be colored with the same color $c_2 \neq c_1$.
  
  After the modification, the now disjoint edge $uv$ can keep the color $c_1$.
  If we introduced a new edge $u_1 u_2$, it can be colored by $c_2$, as this way all vertices see the same set of colors as before the modification.

  Therefore $|\chi'| \geq |\chi^\star$, hence $\opt(G) \leq \opt(G')$.
  Applying the steps in reverse, one can also construct a coloring $\chi$ in $G$, given a coloring $\chi'$ in $G'$, so that $|\chi| = |\chi'|$: the edges $u_1 u$ and $u_2 u$ can inherit the color of $u_1 u_2$, and then every vertex will see at most $2$ colors.
  
  It remains to deal with the case where there existed an edge $u_1 u_2$ in $G$.
  Performing Modification~4 on $uv$ then results in two copies of the edge $u_1u_2$.
  If both copies are colored~$c_2$, simply remove one copy.
  If one of the copies has a different color $c_3$, observe that there must be other edges colored by either $c_2$ or $c_3$ (or both) at $u_1$ or $u_2$.
  Remove a copy of $u_1 u_2$ that has such a color.
  The arguments about both directions follow.

  \parmacro{Case 3:} $\deg(u)=\deg(v)=3$.
  Given an optimal coloring $\chi^\star$ we can assume that the edge $uv$ has a unique color due to the following argument.
  By removing the edge $uv$ from $G$ we obtain two graphs $G_1, G_2$; let us consider the graphs $G_1^+ := G_1 \cup \{uv\}$ and $G_2^+ := G_2 \cup \{uv\}$.
  Notice that $uv$ is a pendant edge in both graphs.
  According to \Cref{lem:optimal_pendant_unique} we may assume that $G_1^+$ and $G_2^+$ have optimal colorings $\chi^\star_1$ and $\chi^\star_2$, respectively, such that $|\chi^\star_i(G_i^+)| = |\chi^\star(G_i^+)|$ holds for $i=1,2$ and $uv$ has a unique color in both: denote this color by~$c_1$ and $c_2$, respectively.
  Now, identifying $c_1$ and $c_2$, taking the colors of $\chi^\star_1$ and $\chi^\star_2$ we can obtain a coloring for $G$, in which the bridge $uv$ has a unique color and which uses $|\chi^\star|$ many colors.
  
  Let $G_1'$ and $G_2'$ (respectively) denote $G_1$ and $G_2$ (respectively) after Modification~4; that is, replacing edges $u_1u$ and $u_2u$ by $u_1u_2$ and replacing edges $v_1v$ and $v_2v$ by $v_1v_2$, respectively.
  In the case with two copies of edges $u_1u_2$ or $v_1v_2$, we use the same argument as above.
  Then, the modified graph $G'$ consists of 3 components: $G_1'$, $G_2'$ and the edge $uv$.
  
  As for the colorings of $G$ and $G'$, we can use the arguments of {Case 2}.  
  After the modification, all vertices of $G'$ see the same colors as in $G$ and $|\chi'| = |\chi^\star|$ holds and hence $\opt(G) \leq \opt(G')$ is true.
  In the other direction, a coloring $\chi$ for~$G$ can be obtained from a coloring $\chi'$ for $G'$, such that $|\chi| = |\chi'|$. 
  Therefore, Modification~4 is valid. 
 \end{proof}

\subcubic*

\begin{proof}
  Let $G$ be a subcubic graph.
  Apply Modifications 1--4, until they do not change the graph anymore,
  and denote the resulting graph by $G'$.
  We prove that all components of $G'$ are either trivial or they contain only degree-$3$ vertices.
  
  Observe that Modifications 1--4 cannot increase the degree of any vertex.
  Note that $G'$ does not contain vertices of degree 2, due to Modification~2.
  Now consider vertices of degree 1.
  Since we remove pendant edges via Modification~4, the only vertices of degree 1 are the end vertices of independent edges, which are trivial components.
  Vertices of degree 0 are also trivial components.

  After applying Modifications 1--4, each component is therefore either a trivial component or a bridgeless cubic graph.
  It is well-known that each bridgeless cubic graph has a perfect matching~\cite{Petersen1891}.
  Using the equivalence of modified graphs due to \Cref{lem:mod_approx}, the claim follows as a consequence of \Cref{thm:normalized_perfect_matching}.
 \end{proof}

In order to prove our result on claw-free graphs, we need to introduce a new modification that helps us control the number of claws.

\paragraph*{Modification 5: Avoid neighboring pendant edges}

Suppose there are two adjacent vertices in $G$, denoted by $u_1$ and $u_2$, such that both of them have exactly one adjacent pendant vertex, denoted by $v_1$ and $v_2$, respectively (see \Cref{fig:modification5}).
Modification~5 contracts the edge $u_1 u_2$ into a new vertex $u_{12}$ with exactly one pendant edge $u_{12} v_{12}$ incident. 
Then $u$ \enquote*{inherits} all other neighbors of $u_1$ and $u_2$, without multiplicities.
Furthermore, we introduce an isolated edge $w_1 w_2$.

\begin{lemma}\label{lem:mod5}
Modification~5 is valid.
\end{lemma}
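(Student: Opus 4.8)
The plan is to verify the two conditions of \Cref{def:equivalence} via the standard shortcut: it suffices to show $\opt(G)\le\opt(G')$ for \Cref{eqpair:1} and to produce, from any coloring of $G'$, a coloring of $G$ with no fewer colors for \Cref{eqpair:2}. Write $A_1$ (resp.\ $A_2$) for the neighbours of $u_1$ (resp.\ $u_2$) other than $v_1,u_2$ (resp.\ $v_2,u_1$); since $G$ is normalized, $u_1$ and $u_2$ have degree at least three, so $A_1,A_2\neq\emptyset$. Throughout I would use \Cref{lem:optimal_pendant_unique}, which lets us assume pendant edges carry unique colors.

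For $\opt(G)\le\opt(G')$: start from an optimal coloring $\chi^\star$ of $G$ in which $u_1v_1$ and $u_2v_2$ have distinct unique colors $a_1\neq a_2$. As $u_1$ sees $a_1$, all remaining edges at $u_1$ — the edge $u_1u_2$ and every edge from $u_1$ to $A_1$ — must carry the one remaining color $b:=\chi^\star(u_1u_2)$; symmetrically the edges from $u_2$ to $A_2$ also carry $b$. Now define $\chi'$ on $G'$: keep $\chi^\star$ on all edges avoiding $\{u_1,u_2,v_1,v_2\}$, colour $u_{12}v_{12}$ with $a_1$, colour every edge from $u_{12}$ to $A_1\cup A_2$ with $b$, and colour the new isolated edge $w_1w_2$ with $a_2$. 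This is feasible: $u_{12}$ sees $\{a_1,b\}$; each $x\in A_1\cup A_2$ sees $b$ together with exactly the single colour it already saw from edges outside the region in $\chi^\star$; and $w_1,w_2$ see only $a_2$. Since $a_1,a_2$ were unique pendant colours they are not lost, so $\chi'$ realises every colour of $\chi^\star$ and hence $\opt(G')\ge\opt(G)$.

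For the converse (which simultaneously yields \Cref{eqpair:2}), take an arbitrary coloring $\chi'$ of $G'$. After recolouring the isolated edge $w_1w_2$ with a fresh colour and invoking \Cref{lem:pendant_unique} on $G'$ — applicable because, $G$ being normalized, no two leaves of $G'$ share a neighbour (the only leaf at $u_{12}$ is $v_{12}$, and the inherited neighbours in $A_1\cup A_2$ keep degree at least two, hence are non-pendant) — we may assume $w_1w_2$ and $u_{12}v_{12}$ carry distinct unique colours $\beta\neq\alpha$ without decreasing the colour count. Then every edge from $u_{12}$ to $A_1\cup A_2$ carries a common colour $\gamma\neq\alpha$. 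Define $\chi$ on $G$ by keeping $\chi'$ outside the affected region and setting $\chi(u_1v_1)=\alpha$, $\chi(u_2v_2)=\beta$, and $\chi(e)=\gamma$ for $e=u_1u_2$ and for every edge from $u_1$ to $A_1$ and from $u_2$ to $A_2$. Then $u_1$ sees $\{\alpha,\gamma\}$, $u_2$ sees $\{\beta,\gamma\}$, and each $x\in A_1\cup A_2$ sees $\gamma$ plus its former outside colour, so $\chi$ is feasible and realises every colour of $\chi'$ (with $\alpha$ now on $u_1v_1$ and $\beta$ on $u_2v_2$); hence $|\chi|\ge|\chi'|$, and in particular $\opt(G)\ge\opt(G')$.

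I expect the main obstacle to be the bookkeeping that makes \Cref{lem:pendant_unique} legitimately applicable to $G'$: one must check that contracting $u_1u_2$ and merging the two pendants into $u_{12}v_{12}$ does not create two leaves with a common neighbour, and this is exactly where normalization (no degree-$2$ vertices, so every inherited neighbour of $u_{12}$ stays non-pendant) is used. The remainder is a routine transfer of the local three-colour picture — $\{a_1,a_2,b\}$ in $G$ versus $\{\alpha,\beta,\gamma\}$ in $G'$ — between the two instances, with the fresh isolated edge $w_1w_2$ absorbing precisely the one colour that the contraction would otherwise destroy.
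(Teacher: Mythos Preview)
Your proof is correct and takes the same route as the paper's: in each direction use \Cref{lem:pendant_unique} to give the relevant pendant edges unique colours, which forces all remaining edges at $u_1,u_2$ (resp.\ $u_{12}$) to share one colour, and then transfer the resulting three-colour local picture, with the isolated edge $w_1w_2$ absorbing the colour that the contraction would otherwise lose.

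One small quibble: you assume $G$ is normalized, but this is not a hypothesis of Modification~5, and in the paper's actual application (immediately after Modification~2 in the claw-free proof) the graph need not be normalized; in particular a common neighbour $x\in A_1\cap A_2$ of degree two in $G$ would become a second leaf at $u_{12}$ in $G'$, so the full hypothesis of \Cref{lem:pendant_unique} may fail. This is harmless, though, since you only need the single-pendant-edge step of that lemma (make $u_{12}v_{12}$, and separately $u_1v_1$ then $u_2v_2$, uniquely coloured), which works without any condition on the other leaves---and the paper's own proof invokes the lemma just as loosely.
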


\begin{proof}
According to \Cref{lem:optimal_pendant_unique}, we can assume that we have an optimal coloring $\chi^\star$ which assigns unique colors to pendant edges.
Then $\chi^\star(u_1 v_1) = c_1$ and $\chi^\star(u_2 v_2) = c_2$, such that $c_1 \neq c_2$ and no other edge is colored $c_1$ or $c_2$.
But then $\chi^\star(u_1 u_2) = c_3$, where $c_3 \notin \{c_1, c_2\}$. 
Every other edge adjacent to $u_1$ or $u_2$ (besides $u_1 v_1$ and $u_2 v_2$) is colored $c_3$. 

After Modification~5, we construct a coloring $\chi'$ for the modified graph $G'$ as follows.
For every possible $v \neq v_{12}$, each existing edge $u_{12} v$ gets color $c_3$, $u_{12} v_{12}$ gets color $c_1$ and the isolated edge $w_1 w_2$ gets color $c_2$.
Observe that this is a feasible coloring $\chi'$ for $G'$ that uses the same number of colors as $\chi^\star$, hence $\opt(G) \leq \opt(G')$.

In the other direction, suppose we have a graph $G'$ with a pendant edge $u_{12} v_{12}$ at $u_{12}$, and an isolated edge $w_1 w_2$.
Using \Cref{lem:pendant_unique}, for a coloring $\chi'$ of $G'$ we can construct a coloring $\hat\chi'$ that assigns $u_{12} v_{12}$ a unique color~$c_1$ and $|\chi'| \leq |\hat\chi'|$.
Then, for each $v \neq v_{12}$, if the edge $u_{12} v$ exists it is colored $c_3 \neq c_1$.
Additionally, $\hat\chi'(w_1 w_2) = c_2$ for some unique color $c_2$.

Let us now construct a graph $G$ by splitting vertex $u_{12}$ into two adjacent vertices $u_1$ and~$u_2$, and replacing some vertices $u_{12} v$ with $u_1 v$ or $u_2 v$, or both, inheriting the color $c_2$ of $u_{12} v$.
Furthermore, replace $u_{12} v_{12}$ by $u_1 v_1$, and replace $w_1 w_2$ by $u_2 v_2$, and inherit the respective colors, $c_1$ and $c_2$.
Finally, use the color $c_3$ for the edge $u_1 u_2$.
Observe that (regardless to possible equivalences between colors $c_1$, $c_2$ and $c_3$)  this leads to a coloring $\chi$ feasible for~$G$, with the same amount of colors than $\hat\chi'$, so then $|\chi'| \leq |\hat\chi'| = |\chi|$ holds, finishing the proof.
 \end{proof}

\begin{figure}[tb]
  \centering
  \includegraphics[scale=0.38]{./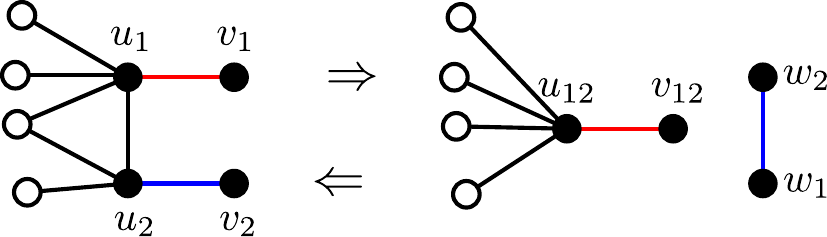}
  \caption{Modification~5, with $c_1$, $c_2$ and $c_3$ marked with red, blue and black edges, respectively. \label{fig:modification5}}
\end{figure}

Now we are ready to handle claw-free graphs.

\clawfree*

\begin{proof}
  Let $G$ be a claw-free graph.
  We will show that applying Modifications 1--3 either does not create a claw, or creates a claw in a controlled manner.
  
  \textbf{Keeping it (almost) claw-free.}
  Modification~1 removes pendant vertices, so it cannot create a claw.
  Suppose $v$ is a degree-$2$ vertex in a claw-free graph $G$, with two adjacent edges $u_1 v$ and $u_2 v$.
  Modification~2 splits $v$ into $v_1$ and $v_2$, and replaces the two incident edges by $u_1 v_1$ and $u_2 v_2$.
  Since $G$ did not contain a claw, the only way $G'$ can contain a claw is if there was a vertex $x \notin \{u_1, u_2, v\}$ and edges $u_1 u_2$ and $u_1 x$ (or analogously $u_2 x$), but no edge $u_2 x$ (analogously no $u_1 x$).
  Since $v$ has a degree of $2$, the vertices $u_1, u_2, v_1, x$ induce a claw in $G'$.

  Observe that the pendant edges $u_1 v_1$ and $u_2 v_2$ are neighbors, therefore we can apply Modification~5, that is, contract $u_1 u_2$ into $u_{12}$ and replace edges $u_1 v_1$ and $u_2 v_2$ by $u_{12} v_{12}$ and $w_1 w_2$.
  Furthermore, we replace all existing edges of form $u_1 v$ and $u_2 v$ by an edge $u_{12} v$, and denote this graph $G''$.
  Observe that by contracting the edge $u_1 u_2$, Modification~5 has removed the claw $u_1, u_2, v_1, x$.
  It is possible, however, that the newly introduced edge $u_{12} v_{12}$ participates in a claw $\{u_{12}, v_{12}, y_1, y_2\}$, for some vertices $y_1$, $y_2$.
  Note that this means there was no edge $y_1 y_2$ in $G$, and therefore $G'$, and additionally, $y_1$ and $y_2$ were adjacent to $u_1$ and~$u_2$ in $G$, respectively.
  Indeed, if both $y_1$ and $y_2$ were adjacent to $u_1$ (or $u_2$), this would lead to the claw $\{u_1, u_2, y_1, y_2\}$ in $G$, contradicting its claw-freeness.
  Therefore, Modification~2 creates at most one claw, but when it does it also creates a pendant edge and an isolated edge in the process, both of which will be accounted for in the second part of the proof.
  Note that $u_{12} v_{12}$ does not have a neighboring pendant edge, as that would mean a pendant edge at $u_1$ or $u_2$, which would contradict the claw-freeness of $G$; hence no pendant edges will be removed after creating the claw.
  
  Modification~3 removes a simple cactus $C$ but keeps its
  {needles} in $G$.
  Therefore the only way it can create a claw is if two needles of $C$, say $u_1 v$ and $u_2 v$ are  incident to the same vertex~$v$ in $G \setminus C$: the claw is induced by $v$, $u_1$, $u_2$ and $u_3$ where $u_3$ is a vertex adjacent to~$v$.%
  \footnote{Note that before the modification the edge $u_1 u_2$ prevented the claw.}
  But then, after Modification~3 has been applied, $v$ has two adjacent pendant vertices~$u_1$ and~$u_2$, and one of them, say $u_2$, can be removed via Modification~1.
  This removes the claw as well; if there were another vertex $x$ such that $u_1$, $u_3$, $v$, $x$ is a claw in the modified graph, these vertices would induce a claw in the original graph as well, a contradiction.

  \textbf{Bounding the approximation factor.}   
  As a result, we arrive to a normalized graph~$G'$ with possibly more than one component.
  Consider first the case when $G'$ is claw free; we argue about each component $C$ independently.
  Since $C$ is normalized, the optimal coloring of $C$ has at most $3|C|/4$ colors.
  If $|C|$ is even, it contains a perfect matching, which is of size $|C|/2$ due to Sumner~\cite{Sumner1974} and Las Vergnas~\cite{LasVergnas1975} and the claw-freeness of $G$.
  If $|C|$ is odd, we show that it contains a matching of size $(|C|-1)/2$.
  Remove a non-cutvertex $v$ from~$C$ and denote the resulting connected graph $C'$.

  Note that the removal of $v$ does not create a claw in~$C'$, hence it has a matching of size $|C'|/2 = (|C|-1)/2$~\cite{Sumner1974,LasVergnas1975}.
  As a consequence, there is a matching of size $(|C|-1)/2$ in $C$.
  In both cases, there is a matching of size $(|C|-1)/2$, so \Cref{alg:basic} outputs a solution of size%
  \footnote{Note that the only case when we do not have the one additional color as per Step~3 of \Cref{alg:basic} is when $E(C) \setminus E(M)$ is empty, but then the graph $C$ itself is $|C|/2$ independent edges, which has an approximation ratio of $1$.}
  at least $(|C|-1)/2+1 = |C|/2+1/2$, leading to an approximation factor at most $\frac{3|C|/4}{|C|/2+1/2} < 3/2$ for every component $C$ in $G'$, and hence the claim follows. 
  
  Now let us consider the case when claws are being created during the normalization process.
  As shown in the first half of the proof, performing Modification~2 then Modification~5 results in creating a claw, but it also creates a new pendant edge and a new isolated edge in the process.
  Let us isolate the creation of one single claw, we will see that this argument can be repeated for all claws created.
  During Modification~2 and Modification~5 we remove a vertex from $G$ and add an isolated edge to it, which means the number of vertices is increased by $1$.
  On the other hand, we introduced $3$ new pendant vertices: $v_{12}$, $w_1$ and~$w_2$.
  According to \Cref{lem:maxcolors}, the number of colors in a feasible coloring is therefore at most $3(n+1)/4 - 3/4 = 3n/4$.
  Observe that the size of the maximum matching did not change: it decreased by $1$ due to the contraction, but increased by $1$ due to the isolated edge; therefore \Cref{alg:basic} still finds a coloring with at least $n/2$ colors (depending on the parity of $n$).
  Creating a claw this way thus always results in a graph, for which \Cref{alg:basic} is a $3/2$-approximation.
  Since the pendant edge and the isolated edge do not get removed by any modification in a later step of the algorithm, we can simply repeat the same argument for every claw created. Thus, the claim follows.  
 \end{proof}

\section{$1.625$-approximation for Graphs with Perfect Matching}
\label{sec:pm}

\begin{definition}
  A modification is \emph{perfect matching preserving}, if for each graph $G$ that has a perfect matching $M$, the modification generates a graph $G' \equiv G$ such that $G'$ also has a perfect matching.
\end{definition}

\begin{lemma}\label{lem:pm_preserving}
  Modification~1 and Modification~3 are perfect matching preserving.
\end{lemma}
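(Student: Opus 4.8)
The plan is to verify the perfect-matching-preserving property separately for each of the two modifications, in both cases taking a perfect matching $M$ of $G$ and explicitly constructing a perfect matching $M'$ of $G'$.

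\textbf{Modification~1.} Here $G'$ is obtained from $G$ by deleting a leaf $w$, where $w$ and another leaf $v$ share the common neighbor $u$ with $\deg_G(u) \ge 3$. First I would observe that any perfect matching $M$ of $G$ must contain both pendant edges $uv$ and $uw$, since the only edge incident to $v$ is $uv$ and the only edge incident to $w$ is $uw$. But $uv$ and $uw$ are both incident to $u$, contradicting that $M$ is a matching. Hence $G$ has \emph{no} perfect matching at all, so the implication in the definition of perfect-matching-preserving is vacuously true. (Alternatively, if one prefers not to argue vacuously, the same observation shows Modification~1 is simply never applied to a graph with a perfect matching.)

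\textbf{Modification~3.} This is the substantive case and the main obstacle. Recall Modification~3 replaces each triangle $T_i=\{u_i,v_i,w_i\}$ of a simple cactus $C$ by a fresh independent edge $x_iy_i$, removing the triangle edges and discarding isolated vertices; the needles survive. The issue is that the perfect matching $M$ restricted to $C$ need not look uniform across triangles, so the naive replacement may fail to cover all cactus vertices or may double-cover a shared vertex. The key idea, foreshadowed by \Cref{fig:triangle} (subfigures a and c), is to use the \emph{variant} of Modification~3 that keeps the perfect matching: instead of one edge $x_iy_i$ per triangle, for each triangle one keeps a suitable pendant-like gadget on the triangle's vertices so that the matching edges of $M$ inside and touching $C$ can be transferred. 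Concretely I would argue as follows. Each vertex of $C$ has degree $3$ or $4$ in $G$ and is incident to at most one needle. A vertex of degree $4$ lies in two triangles of $C$ and in no needle; a vertex of degree $3$ lies in one triangle and exactly one needle. In $M$, a degree-$4$ cactus vertex is matched by a triangle edge (of one of its two triangles); a degree-$3$ cactus vertex is matched either by a triangle edge of its triangle or by its needle. Processing the cactus as a tree of triangles (with shared vertices as tree edges), one shows each triangle $T_i$ contributes exactly one $M$-edge that is a triangle edge of $T_i$, except that the needle-matched vertices are covered outside; a parity/counting argument (a triangle has three vertices, shared vertices are covered by exactly one of its two incident triangles) forces, in each triangle, exactly one internal $M$-edge covering two of its vertices, with the third vertex covered either by a needle or by the neighboring triangle. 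This local picture transfers verbatim to $G'$: replace that one internal $M$-edge of $T_i$ by the new independent edge of $T_i$ (in the matching-preserving variant this edge is incident to the two vertices that were $M$-covered internally), keep all needle edges of $M$, and keep all $M$-edges outside $C$ untouched. This yields a set $M'$ of edges of $G'$; checking it is a matching (no shared endpoints — shared cactus vertices were covered by exactly one triangle, so only one new edge claims them) and perfect (every vertex of $G'$, old or new, is covered) completes the proof.

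I expect the main technical obstacle to be the bookkeeping in Modification~3: making precise the claim that in every triangle of the cactus exactly one $M$-edge is "internal" and setting up the tree-of-triangles induction so that the shared vertices are accounted for exactly once. Once that structural claim about $M\cap E(C)$ is established, the construction of $M'$ and the verification that it is perfect are routine, and the statement for Modification~1 is immediate (indeed vacuous). I would present Modification~1 in one line and devote the bulk of the proof to the cactus case, possibly referencing the matching-preserving variant of Modification~3 depicted in \Cref{fig:triangle} to avoid re-deriving the replacement gadget.
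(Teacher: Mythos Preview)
Your treatment of Modification~1 is correct and essentially identical to the paper's: a graph with two leaves sharing a neighbor has no perfect matching, so the claim is vacuous.

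For Modification~3 your overall approach---use the matching-preserving variant of the modification referenced in \Cref{fig:triangle}---is exactly what the paper does. However, your central structural claim is false, and this is a genuine gap, not just missing bookkeeping. You assert that a parity argument ``forces, in each triangle, exactly one internal $M$-edge covering two of its vertices.'' Consider a cactus consisting of two triangles $\{u_1,v_1,w\}$ and $\{u_2,v_2,w\}$ sharing the degree-$4$ vertex $w$. If $M$ contains the edge $wu_1$ together with the needles at $v_1$, $u_2$, and $v_2$, then the second triangle contains \emph{no} edge of $M$: all three of its vertices are matched outside it (one by the neighboring triangle, two by needles). The same happens already for a single triangle whose three needles all lie in $M$. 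So the claim you plan to prove by induction on the tree of triangles is simply not true, and your construction ``replace that one internal $M$-edge of $T_i$ by the new independent edge'' has nothing to replace in such triangles.

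The paper fixes this by distinguishing two cases per triangle $T_i$: (i) $T_i$ contains an edge of $M$, say $u_iv_i$---then the variant keeps $u_iv_i$ (rather than introducing a fresh $x_iy_i$) and $u_iv_i$ remains a matching edge; (ii) all three vertices of $T_i$ are matched outside $T_i$---then introduce the fresh edge $x_iy_i$ and add it to $M'$. Your proposal covers only case~(i). Once you add case~(ii) and check that the two cases are exhaustive (a triangle has three vertices, so it contains at most one $M$-edge, and it contains zero or one), the verification that $M'$ is a perfect matching of $G'$ is the routine check you describe.
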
 
\begin{proof}
We first show that Modification~1 is perfect matching preserving.
Observe that if there are adjacent pendant edges $uv$ and $uw$ in $G$, only one of the vertices $v$ and $w$ can be matched, therefore there is no perfect matching in $G$.
Otherwise Modification~1 does not have any effect.
Therefore, it is trivially perfect matching preserving.

    \vspace{0.5em}

  In order to ensure that Modification~3 is perfect matching preserving, we have to apply some variation to the modification.
  Let us fix a perfect matching~$M$ in $G$ and start with a simple cactus $C$ consisting of just one triangle $T = \{u, v, w\}$.
  Let us denote the needles of $T$ by $e_u$, $e_v$ and $e_w$.
  If $e_u$, $e_v$ and $e_w$ are in $M$, then just perform Modification~3, i.e. replace the triangle $u v w$ by an edge $x y$.
  In the resulting graph $G'$, vertices $u$, $v$ and $w$ will be still matched by $e_u$, $e_v$ and $e_w$, respectively, and $M$ can be extended with the edge $x y$ to cover the newly added vertices~$x$ and $y$.
  
  Suppose now that $T$ contains an edge from $M$, w.l.o.g. assume that it is the edge $u v$.
  In this case, replace the triangle $\{u, v, w\}$ not with an edge $x y$ but with the edge $u v$, illustrated on \Cref{fig:triangle}c) in \Cref{sec:algorithm}.
  Note that the edge can be still colored with a new color, as $u$ and~$v$ will only have a single adjacent edge in $G'$,~$e_u$ and $e_v$, respectively. 
  Moreover, there were no new vertices added, and there were no edges that participate in the matching~$M$ removed, so the modified graph $G'$ also has a perfect matching.
  Notice that this variant of Modification~3 is still valid: in the latter case the edge $u v$ simply takes the role of $x y$, as it can still hold the color of $T$ in the modified graph~$G'$.   
  
  We follow the same approach for cacti $C$ with more than one triangle.
  In case a perfect matching $M$ exists in $G$, each triangle $T_i = \{u_i, v_i, w_i\}$ of $C$ will either (i) contain one edge from~$M$ (w.l.o.g. $u_i v_i$), and have one vertex matched to outside $T_i$; or (ii) have all three of its vertices matched to outside $T_i$.
  In case (i), we keep the edge $u_i v_i$, whereas either $w_i$ is incident to a needle or the neighboring triangle keeps its edge containing $w_i$, therefore all vertices of $T_i$ remain matched.
  In case (ii), we add an edge $x_i y_i$ and keep the vertices $u_i$, $v_i$ and $w_i$ matched to outside of $T_i$, either in another triangle (where it is matched due to case (i)) or outside the cactus $C$ (where it is matched due to it being matched in $G$).
  In this case, the new edge $x_i y_i$ will be added to the perfect matching, and no vertex will become unmatched.  
  
  Using the same arguments as in the proof of \Cref{lem:mod3}, we can assume that each triangle is monochromatic.
  Then, similarly to the single triangle case, there is an edge in $G'$ for each triangle in $G$ to allow for calculating a feasible coloring $\chi'$ from $\chi$ of equal size, and vice versa.
 \end{proof}

  We remark that coming up with a perfect matching preserving modification that removes degree-$2$ vertices would yield a $1.5$-approximation for graphs that contain a perfect matching.
  Indeed, in that case the arguments in the proof of \Cref{thm:normalized_perfect_matching} would give us the result, as we could apply Modifications~1--3 to a graph containing a perfect matching until we obtain a normalized graph with a perfect matching.
  As we do not have a perfect matching preserving Modification~2, we use another  approach, yielding a slightly worse approximation factor.

  Suppose $G$ has a perfect matching, then Modification~2 introduces a new vertex, making the number of vertices odd, hence the resulting graph does not have a perfect matching anymore.
  This affects the approximation ratio of \Cref{alg:basic}: the number of vertices increases by one, the number of leaves by two, while the size of the maximum matching stays the same.
  One can show that although the approximation ratio can get worse, it does not get worse than $13/8 = 1.625$.
  Thus, we obtain~\Cref{thm:perfect_matching}.

\perfectmatching*

\begin{proof}
  Let $G$ be a graph that contains a perfect matching; let us fix such a matching $M$ for the rest of the proof.
  According to \Cref{lem:pm_preserving}, Modification~1 and~3 do not violate the property of having a perfect matching.
  Therefore if we only perform these two modifications and arrive at a normalized graph, \Cref{alg:basic} gives a $1.5$-approximation according to \Cref{thm:normalized_perfect_matching}.
  We will now analyze how Modification~2 can affect the performance of \Cref{alg:basic} using arguments similar to those in the case of claws in the proof of \Cref{thm:clawfree}.
  
  Recall that Modification~2 splits a degree-$2$ vertex $v$ into two new vertices, $v_1$ and $v_2$, replacing the edges $u_1 v$ and $u_2 v$ by $u_1 v_1$ and $u_2 v_2$, respectively.
  This operation (let us call them individually as \textit{events}) increases the total number of vertices in $G$ by one and creates two leaves; however, it might create neighboring pendant edges, which also get simplified by Modification~1.
  
  \textbf{Upper bound on the number of colors.}
  Let us analyze the changes in the number of pendant edges and vertices, in all possible cases.
  \\ \textbf{Case 1:} no pendant edge gets removed, therefore their number grows by $2$.
  Then the number of vertices is increased by $1$ and the upper bound is $3/4 (n+1) - 1/2 = 3/4 n + 1/4$, by \Cref{lem:maxcolors}.
  \\ \textbf{Case 2:} one of the pendant edges gets removed, therefore their number grows by $1$.
  Then the number of vertices stays $n$, hence the upper bound is $3/4 n - 1/4$.
  Notice that in this case we had a pendant edge $u_1 w$ (or $u_2 w$) in $G$, which was part of $M$ along with $u_2 v$ (or $u_1 v$), and the newly created $u_1 v_1$ (or $u_2 v_2$) got removed by Modification~1.
  \\ \textbf{Case 3:} both pendant edges get removed, so their number does not change.
  Then the number of vertices decreases by $1$, so the upper bound is $3/4 n - 3/4$.
  Note that this would mean both $u_1 v_1$ and $u_2 v_2$ have neighboring pendant edges, say $u_1 w_1$ and $u_2 w_2$. 
  But this could only be a result of Case 1: $G$ originally contained a perfect matching, and either $u_1v$ or $u_2v$ was a matching edge. 
If $u_iv$ was the matching edge, the unmatched pendant edge $u_iw_1$ must have been created by Case $1$.
  Let us couple these two individual events, and later argue that \textit{in total} the number of vertices stays the same, but the number of pendant edges is increased by $2$. Observe that since $u_i$ only has one incident matching edge, $u_i v$, this particular Case~$1$ event will not be coupled to any other Case~$3$ event. 
  
   Following the arguments above, we can see that for each time Case 3 happens there is an occasion where Case 1 happened. 
  As a result, the upper bound after these two coupled events becomes $3/4 n - 1/2$, which means it decreases by $1/4$ \textit{per event}.
  Let us do the accounting the following way.
  We have Case 3, coupled with Case 1, that decrease the bound by $1/4$ for each time we perform Modification~2.  
  The same holds for Case 2; let us denote the total number of such instances of Modification~2 by $d_2^-$.
  On the other hand, let us denote the number of Case 1 events \textit{without} a following Case 3 event by $d_2^+$; we have shown that these events increase the bound by $1/4$.

  In total, after performing all possible events of Modification~2, the upper bound on the number of colors hence becomes $3/4 n + 1/4 d_2^+ - 1/4 d_2^-$.
  
  \textbf{Two lower bounds on the number of colors returned by \Cref{alg:basic}.} 
  Note that Modification~2 kept all matching edges, in all three cases, and every pendant edge removal that follows will also not remove a matching edge: if there are multiple pendant edges adjacent to a common vertex, at most one will be part of a matching, and we simply choose to keep that pendant edge.
  This means that the size of the maximum matching did not decrease, hence \Cref{alg:basic} will output a coloring of size at least $n/2+1$.
  The number of vertices and thus the bound on the size of the optimal solution, however, may have increased.
  
  Fortunately, there is another way we can bound the size of the maximum matching from below.
  Notice that the pendant edges introduced by Modification~2 are independent, i.e. they form a matching.
  Using the same case distinction as before, there are $d_2^+$ events that increase the number of pendant edges by $2$, and $d_2^-$ events that increase it by $1$.
  This means that in the modified graph $G'$ we can always find a matching of size $2 d_2^+ + d_2^-$, simply by choosing the newly created pendant edges.
  
  \textbf{Combining the bounds.}
  Assume first that we introduce with Modification~2 at most $n/2$ pendent edges, that is, $2 d_2^+ + d_2^- \leq n/2$~$(*)$ holds.
  Then the approximation ratio of \Cref{alg:basic} is at most
  \begin{equation}
  \frac{ \frac34 n + \frac14 d_2^+ - \frac14 d_2^- }{ \frac{n}{2} } \stackrel{(*)}{\leq}
  \frac{ \frac34 n + \frac{1}{16} n - \frac38 d_2^-}{ \frac{n}{2} } \leq
  \frac{ \frac{13}{16} n }{ \frac{n}{2} } = \frac{13}{8} \enspace .
  \end{equation}
  
  Assume now that we introduce with Modification~2 at least $n/2$ pendent edges, that is, $2 d_2^+ + d_2^- \geq n/2$~$(**)$ holds.
  Then the approximation ratio of \Cref{alg:basic} is at most
  \begin{equation}
  \frac{ \frac34 n + \frac14 d_2^+ - \frac14 d_2^- }{ 2 d_2^+ + d_2^- } \leq
  \frac{ \frac34 n }{2 d_2^+ + d_2^-} +  \frac{ \frac14 d_2^+ - \frac14 d_2^- }{ 2 d_2^+ + d_2^- } \stackrel{(**)}{\leq}
  \frac{ \frac34 n}{\frac{n}{2}} + \frac{ \frac14 d_2^+ }{ 2 d_2^+ } =
  \frac32 + \frac18 = \frac{13}{8} \enspace ,
  \end{equation}
  finishing the proof.      
 \end{proof}

\end{document}